\theoremstyle{plain}
\newtheorem{thm}{Theorem}
\newtheorem{prop}[thm]{Proposition}
\newtheorem{lemma}[thm]{Lemma}
\newtheorem{cor}[thm]{Corollary}
\theoremstyle{definition}
\newtheorem{definition}[thm]{Definition}
\newtheorem{remark}[thm]{Remark}
\newtheorem{example}[thm]{Example}
\newtheorem{remarks}[thm]{Remarks}
\newcommand{\ldr}[1]{\langle #1\rangle}
\newcommand{\tn}[1]{\ensuremath{\mathbb{T}^{#1}}}
\newcommand{\rn}[1]{\ensuremath{\mathbb{R}^{#1}}}
\newcommand{\cn}[1]{\ensuremath{\mathbb{C}^{#1}}}
\newcommand{\sn}[1]{\ensuremath{\mathbb{S}^{#1}}}
\newcommand{\zo}{\mathbb{Z}}
\newcommand{\ro}{\mathbb{R}}
\newcommand{\rod}{\mathrm{d}}
\newcommand{\roI}{\mathrm{I}}
\newcommand{\roII}{\mathrm{II}}
\newcommand{\roVIIz}{\mathrm{VII}_{0}}
\newcommand{\road}{\mathrm{ad}}
\newcommand{\rotr}{\mathrm{tr}}
\newcommand{\rofK}{\mathrm{fK}}
\newcommand{\rodiv}{\mathrm{div}}
\newcommand{\roRic}{\mathrm{Ric}}
\newcommand{\roRi}{\mathrm{Ri}}
\newcommand{\bd}{\bar{d}}
\newcommand{\rem}{\mathrm{rem}}
\newcommand{\grad}{\mathrm{grad}}
\newcommand{\supp}{\mathrm{supp}}
\newcommand{\rotot}{\mathrm{tot}}
\newcommand{\rocon}{\mathrm{con}}
\newcommand{\romn}{\mathrm{mn}}
\newcommand{\rosil}{\mathrm{sil}}
\newcommand{\roh}{\mathrm{h}}
\newcommand{\bk}{\bar{k}}
\newcommand{\be}{\bar{e}}
\newcommand{\bK}{\bar{K}}
\newcommand{\bR}{\bar{R}}
\newcommand{\chX}{\check{X}}
\newcommand{\bge}{\bar{g}}
\newcommand{\bS}{\bar{S}}
\newcommand{\bnabla}{\overline{\nabla}}
\newcommand{\bga}{\bar{\gamma}}
\renewcommand{\a}{\alpha}
\newcommand{\e}{\epsilon}
\newcommand{\de}{\delta}
\renewcommand{\b}{\beta}
\newcommand{\bbe}{\bar{\beta}}
\newcommand{\g}{\gamma}
\newcommand{\G}{\Gamma}
\renewcommand{\d}{\partial}
\newcommand{\me}{\mathcal{E}}
\newcommand{\mK}{\mathcal{K}}
\newcommand{\mG}{\mathcal{G}}
\newcommand{\mf}{\mathcal{F}}
\newcommand{\mP}{\mathcal{P}}
\newcommand{\mI}{\mathcal{I}}
\newcommand{\ml}{\mathcal{L}}
\newcommand{\mJ}{\mathcal{J}}
\newcommand{\ma}{\mathcal{A}}
\newcommand{\mff}{\mathfrak{f}}
\newcommand{\mfg}{\mathfrak{g}}
\newcommand{\mc}{\mathcal{C}}
\newcommand{\mb}{\mathcal{B}}
\newcommand{\mcX}{\mathcal{X}}
\newcommand{\hf}{\hat{f}}
\newcommand{\hg}{\hat{g}}
\newcommand{\chg}{\check{g}}
\newcommand{\hG}{\hat{\G}}
\newcommand{\cha}{\check{a}}
\newcommand{\hvph}{\hat{\varphi}}
\newcommand{\cvph}{\check{\varphi}}
\newcommand{\hb}{\hat{b}}
\newcommand{\ha}{\hat{a}}
\newcommand{\hna}{\hat{\nabla}}
\newcommand{\bx}{\bar{x}}
\newcommand{\bp}{\bar{p}}
\newcommand{\txi}{\tilde{\xi}}
\newcommand{\tSi}{\tilde{\Sigma}}
\newcommand{\tsi}{\tilde{\sigma}}
\newcommand{\ts}{\tilde{s}}
\newcommand{\te}{\tilde{e}}
\newcommand{\tA}{\tilde{A}}
\newcommand{\tN}{\tilde{N}}
\newcommand{\tin}{\tilde{n}}
\newcommand{\bsigma}{\bar{\sigma}}
\newcommand{\Sp}{\Sigma_{+}}
\newcommand{\Sm}{\Sigma_{-}}
\newcommand{\No}{N_{1}}
\newcommand{\Nt}{N_{2}}
\newcommand{\Nth}{N_{3}}
\begin{document}

\title{A unified approach to the Klein-Gordon equation on Bianchi backgrounds}
\author{Hans Ringstr\"{o}m}
\address{Department of Mathematics, KTH, 100 44 Stockholm, Sweden}

\begin{abstract}
In this paper, we study solutions to the Klein-Gordon equation on Bianchi backgrounds. In particular, we are interested
in the asymptotic behaviour of solutions in the direction of silent singularities. The main conclusion is that, for a given 
solution $u$ to the Klein-Gordon equation, 
there are smooth functions $u_{i}$, $i=0,1$, on the Lie group under consideration, such that $u_{\sigma}(\cdot,\sigma)-u_{1}$ 
and $u(\cdot,\sigma)-u_{1}\sigma-u_{0}$ asymptotically converge to zero in the direction of the singularity (where $\sigma$ 
is a geometrically defined time coordinate such that the singularity corresponds to $\sigma\rightarrow-\infty$). 
Here $u_{i}$, $i=0,1$, should be thought of as data on the singularity. Interestingly, it is possible to prove that the 
asymptotics are of this form for a large class of Bianchi spacetimes. Moreover, the conclusion applies for singularities
that are matter dominated; singularities that are vacuum dominated; and even when the asymptotics of the 
underlying Bianchi spacetime are oscillatory. To summarise, there seems to be a universality 
as far as the asymptotics in the direction of silent singularities are concerned. In fact, it is tempting to conjecture 
that as long as the singularity of the underlying Bianchi spacetime is silent, then the asymptotics of solutions are as 
described above. In order to contrast the above asymptotics with the non-silent setting, we, by appealing to known 
results, provide a complete asymptotic characterisation of solutions to the Klein-Gordon equation on a flat Kasner 
background. In that setting, $u_{\sigma}$ does, generically, not converge. 
\end{abstract}
\maketitle

\section{Introduction}

The subject of this paper is the asymptotic behaviour of solutions to the Klein-Gordon equation
\begin{equation}\label{eq:KG}
\Box_{g}u-m^{2}u=0
\end{equation}
on a spacetime $(M,g)$, where $m$ is a constant. In practice, we are going to consider the somewhat more general equation
\begin{equation}\label{eq:KGvarphiz}
\Box_{g}u+\varphi_{0}u=0,
\end{equation}
where $\varphi_{0}$ is a function. We impose specific restrictions on $\varphi_{0}$ as we go along. We are not going to consider 
arbitrary $(M,g)$, but rather focus on Bianchi spacetimes; i.e., Lorentz manifolds of the following type. 
\begin{definition}\label{def:Bianchispacetime}
A \textit{Bianchi spacetime} is a Lorentz manifold $(M,g)$, where $M=G\times I$; $I=(t_{-},t_{+})$ is an open interval; $G$ is a 
connected $3$-dimensional Lie group; and $g$ is of the form 
\begin{equation}\label{eq:Bianchimetricdef}
g=-dt\otimes dt+a_{ij}(t)\xi^{i}\otimes \xi^{j},
\end{equation}
where $\{\xi^{i}\}$ is the dual basis of a basis $\{e_{i}\}$ of the Lie algebra $\mfg$ and $a_{ij}\in C^{\infty}(I,\ro)$ are such that 
$a_{ij}(t)$ are the components of a positive definite matrix $a(t)$ for every $t\in I$. 
\end{definition}
We are interested in the asymptotic behaviour of solutions as $t\rightarrow t_{-}$, where we assume, at the very minimum, $t_{-}$
to represent a singularity in the following sense. 
\begin{definition}\label{def:monotonevolumesingularity}
Let $(M,g)$ be a Bianchi spacetime. Then $t_{-}$ is said to be a \textit{monotone volume singularity} if there is a $t_{0}\in I$ such that the 
mean curvature, say $\theta(t)$, of $G_{t}:=G\times\{t\}$ is strictly positive on $(t_{-},t_{0})$ and if 
\begin{equation}\label{eq:taudefinition}
\tau:=\frac{1}{3}\ln\sqrt{\det a}\rightarrow-\infty
\end{equation}
as $t\rightarrow t_{-}$, where $a$ is the matrix with components $a_{ij}$. 
\end{definition}
\begin{remark}\label{remark:tauzerogreaterthanzero}
In what follows, we always assume, without loss of generality, that a monotone volume singularity is such that $
\tau(t_{0})>0$; note that this can be ensured by rescaling the frame $\{e_{i}\}$, if necessary. 
\end{remark}
\subsection{Time coordinates} Let $(M,g)$ be a Bianchi spacetime. If $t_{-}$ is a monotone volume singularity, then $\tau$ can be used as
a time coordinate in a neighbourhood of the singularity; note that $\d_{t}\tau=\theta/3$. From now on, we refer to $\tau$ as the 
\textit{logarithmic volume density}. In some situations it is more convenient to use a time coordinate $\sigma$ satisfying
\begin{equation}\label{eq:dsigmadtdefrel}
\frac{d\sigma}{dt}=\frac{1}{3}(\det a)^{-1/2}.
\end{equation}
In what follows, we define $\sigma$ by, in addition, requiring $\sigma=0$ to correspond to $\tau=0$. For the monotone volume singularities 
of interest here, $\sigma\rightarrow-\infty$ corresponds to $\tau\rightarrow-\infty$; cf. Subsection~\ref{ssection:addchangetimecoord} below 
for a more detailed discussion.

\subsection{Silence} One aspect which is of central importance in the analysis of the asymptotics is whether observers asymptotically lose the 
ability to communicate in the direction of the singularity or not. To be more specific, let $(M,g)$ be a Bianchi spacetime and $t_{-}$ be a
monotone volume singularity. Then $t_{-}$ is said to be a \textit{silent monotone volume singularity} if 
\begin{equation}\label{eq:normaraisedtominusonehalf}
\|a^{-1/2}\|\in L^{1}(t_{-},t_{0}],
\end{equation}
where $a^{-1/2}$ denotes the inverse of the square root of the matrix $a$. In the present paper, we are mainly interested in silent monotone 
volume singularities. However, as an illustration, we also describe the asymptotics for one Bianchi spacetime with a Cauchy horizon;
cf. Section~\ref{section:nonsilentexample} below. 

\subsection{Main results, rough description}
Consider the asymptotic behaviour of solutions to the Klein-Gordon equation in the direction of a silent monotone volume singularity. 
Assume the underlying Bianchi geometry to satisfy Einstein's equations; cf. Sections~\ref{section:mainenergyestimate} 
and \ref{section:geompropBianchi} below for details. Then there are two main conclusions. First, there are general conditions on the stress
energy tensor and the geometry ensuring that $u_{\sigma}$ is bounded in the direction of the singularity, where $u$ is a solution to the 
Klein-Gordon equation; cf. Subsection~\ref{ssection:main} below. Second, there are convergence results; cf.
Sections~\ref{section:applications}--\ref{section:convresIII} below. In each of these sections, we formulate general results. 
However, we also provide examples, illustrating that the results apply in the orthogonal perfect fluid setting. The typical 
conclusion is that, for a solution $u$ to the Klein-Gordon equation on the relevant background, there are functions $u_{0}$ and $u_{1}$ such that 
\begin{equation}\label{eq:usigmauasymptoticsintro}
u_{\sigma}(\cdot,\sigma)-u_{1},\ \ \
u(\cdot,\sigma)-u_{1}\sigma-u_{0}
\end{equation}
converge to zero asymptotically; cf. Sections~\ref{section:applications}--\ref{section:convresIII} below for details. In particular, it is of 
interest to note that these asymptotics seem to be universal; i.e., independent of the background, as soon as the monotone volume singularity 
under consideration
is silent. Combining the results of this paper with unpublished results by Bernhard Brehm yields the conclusion that the asymptotics are such
even in the case of generic oscillatory vacuum Bianchi type VIII and IX solutions; cf. Examples~\ref{example:usigmaconvgeneralI}
and \ref{example:fullasymptotics} below. In order to contrast the above asymptotics with the asymptotics of solutions to the Klein-Gordon 
equation in the case of a non-silent monotone volume singularity, we consider the Klein-Gordon equation on a flat Kasner background 
in Section~\ref{section:nonsilentexample} below. 

\subsection{Outline}
The outline of the paper is as follows. In Section~\ref{section:mainenergyestimate} below, we describe the general assumptions we make concerning the 
background Bianchi spacetimes; describe the notation we use; recall the Raychaudhuri equation (the coefficients of which play a central role in the 
arguments); conformally rescale the metric; formulate the equation with respect to the conformally rescaled metric; discuss the Bianchi classification; 
and define the basic energies. We end the section by stating the main energy estimate; cf. Subsection~\ref{ssection:main}. 
This estimate implies a bound on $u_{\sigma}$. Before turning to the question of convergence, we devote Section~\ref{section:geompropBianchi}
to a discussion of the developments corresponding to Bianchi orthogonal perfect fluid initial data. In particular, we appeal to results in the 
literature in order to demonstrate that the developments have monotone volume singularities. We also discuss the silence of the monotone
volume singularities and the maximality of the developments. Given this background material, we turn to the question of convergence in 
Sections~\ref{section:applications}--\ref{section:convresIII}. In Subsection~\ref{ssection:leadingorderasymptotics}, we state a result ensuring 
convergence of $u_{\sigma}$; cf. Proposition~\ref{prop:osc} below. As an application, we demonstrate, e.g., that on generic Bianchi class A vacuum 
backgrounds, solutions to the Klein-Gordon equation are such that $u_{\sigma}$ converges; cf. Example~\ref{example:usigmaconvgeneralI}. 
In Subsection~\ref{ssection:completeasymptotics}, we provide conditions ensuring that the asymptotics are as described in 
(\ref{eq:usigmauasymptoticsintro}). We also note that on generic Bianchi class A vacuum backgrounds, the conditions are satisfied; cf. 
Example~\ref{example:fullasymptotics}. It is important to note that, due to unpublished results of Bernhard Brehm (building on the arguments of 
\cite{brehm}), the results of Section~\ref{section:applications} apply even in the generic oscillatory Bianchi type VIII and IX vacuum settings. 
In Section~\ref{section:convresII}, we turn to the convergent setting; here convergent means, in particular, that 
a rescaled version of the second fundamental form converges and that $\dot{\theta}/\theta^{2}$ converges. The main result is
Proposition~\ref{prop:asymptoticsexponconvofqtotwo}. This result is based on stronger assumptions than those of Section~\ref{section:applications}, 
but it also gives stronger conclusions. Moreover, it applies in several different settings. In Example~\ref{example:asymptoticsstifffluid}, we demonstrate
that it applies to stiff fluids for all Bianchi types except VI$_{-1/9}$ (in the case of Bianchi type VI$_{-1/9}$, we are unaware of any appropriate results
in the literature). In Examples~\ref{example:nonoscBclassAdev}, \ref{example:genericBianchiclassA} and \ref{example:nonexcBianchiclassB} we give
further applications of Proposition~\ref{prop:asymptoticsexponconvofqtotwo}. Finally, in Section~\ref{section:convresIII}, we treat an 
exceptional case, not covered by Proposition~\ref{prop:asymptoticsexponconvofqtotwo}.

The results of Sections~\ref{section:applications}--\ref{section:convresIII} concern silent monotone volume singularities. 
In order to contrast the behaviour in the silent setting with the behaviour in the 
presence of a horizon, we consider the Klein-Gordon equation on a flat Kasner background in Section~\ref{section:nonsilentexample}. In 
particular, we demonstrate that $u_{\sigma}$ does not converge in that case. It is also of interest to consider the question of blow up of the
solution. In the case of non-flat Kasner backgrounds, as well as certain isotropic Bianchi type I backgrounds, this is done in \cite{afaf}. In 
Section~\ref{section:blowupintro}, we derive conclusions concerning blow up using the results of \cite{finallinsys} (for non-flat Kasner
backgrounds). The reason for doing so is that the results of \cite{finallinsys} yield interesting information concerning the regularity of 
the function $u_{1}$ appearing in (\ref{eq:usigmauasymptoticsintro}). Moreover, the corresponding observations naturally lead to an important 
open problem. 

In Section~\ref{section:geometricbackground}, we provide the 
geometric background material on which the arguments of the paper is based. In particular, we derive the Raychaudhuri equation in the 
setting of interest here; obtain geometric inequalities on which the energy estimates are based; demonstrate a divergence type result that
we use in the proof of the energy estimates; and derive results concerning the causal structure in the silent setting. The results concerning
the causal structure are needed in order to justify that it is sufficient to consider solutions to the Klein-Gordon equation with compactly
supported initial data in the study of the singularity. In Sections~\ref{section:conformalrescaling} and \ref{section:KGonBianchi}, we 
discuss the conformal rescaling of the metric; the corresponding reformulation of the equation; and the relation between the time coordinates
$\tau$ and $\sigma$. In Sections~\ref{section:basicenergy} and \ref{section:higherorderenergies}, we then derive energy estimates. We also 
prove the propositions appearing in Sections~\ref{section:convresII} and \ref{section:convresIII}. 
In Section~\ref{section:proofsI} we write down the proofs of the propositions and examples in Section~\ref{section:applications}; 
in Section~\ref{section:proofsII} we write down the proofs of the examples in Section~\ref{section:convresII}; and
in Section~\ref{section:proofsIII} we write down the proofs of the examples in Section~\ref{section:convresIII}.
Finally, in the appendix, we provide the background material concerning the Bianchi developments needed in the examples in
Sections~\ref{section:applications}-\ref{section:convresIII}. We also prove the statements made in Section~\ref{section:blowupintro}.

\subsection{Previous results}\label{ssection:previousresults}

Clearly, there are several papers with related and partially overlapping results; cf., e.g., \cite{aren,pet,finallinsys,afaf,bac} and 
references cited therein. In \cite{aren}, Alan Rendall and Paul Allen study perturbations of FLRW cosmological models which are spatially 
flat and have $\tn{3}$ spatial topology. In particular, the authors consider a linear hyperbolic equation on isotropic Bianchi type I backgrounds. 
In \cite{pet}, Oliver Lindblad Petersen considers the linear wave equation on Kasner backgrounds, including the flat Kasner solution. In
\cite{finallinsys}, we consider systems of linear wave equations on a class of background geometries (for a detailed description of the 
requirements, cf. \cite{finallinsys}). The main results concern optimal energy estimates, but we also derive asymptotic information, in some
cases a homeomorphism between initial data and asymptotic data. In \cite{afaf} Artur Alho, Grigorios Fournodavlos and Anne Franzen consider
the wave equation on isotropic Bianchi type I backgrounds and on non-flat Kasner backgrounds. Moreover, they provide an open criterion for 
$L^{2}$-blow up of the solution. In \cite{bac}, Alain Bachelot considers Klein-Gordon type equations on FLRW backgrounds. In practice, he 
considers warped product type geometries (in this sense, the results are more general than FLRW). Moreover, he considers Big Bang, Big Crunch,
Big Rip, Big Brake and Sudden Singularities. 

\section{Main energy estimate}\label{section:mainenergyestimate}

In the previous section, we formulated the questions of interest for general Bianchi spacetimes. However, we are here, 
in practice, interested in solutions to Einstein's equations. In other words, we assume that 
\begin{equation}\label{eq:Einsteinsequations}
\roRic-\frac{1}{2}Sg+\Lambda g=T,
\end{equation}
where $\roRic$ and $S$ are the Ricci tensor and scalar curvature of $g$ respectively; $\Lambda$ is the cosmological constant; and $T$ is 
the stress energy tensor. In some of the results, we only impose general conditions on the stress energy tensor. However, we also illustrate 
that the general conditions are satisfied for large classes of Bianchi spacetimes with matter of orthogonal perfect fluid type. 

Next, we introduce the terminology in terms of which we formulate the assumptions. To begin with, we define the \textit{energy density} and 
the \textit{mean pressure} by
\begin{equation}\label{eq:rhobpdef}
\rho:=T_{00},\ \ \
\bp:=\frac{1}{3}a^{ij}T_{ij}
\end{equation}
respectively, where the components are calculated with respect to the frame $\{e_{\a}\}$; here $e_{0}:=\d_{t}$ and the $e_{i}$ are given 
in Definition~\ref{def:Bianchispacetime}. Moreover, $a^{ij}$ are the components of the matrix $a^{-1}$. From now on, sub- and superscripts 
refer to the frame $\{e_{\a}\}$, Greek indices range from $0$ to $3$ and Latin indices range from $1$ to $3$. We denote the induced metric 
and second fundamental form on $G_{t}$ by $\bge$ and $\bk$ respectively, and we think of these objects as being defined on $G$. Introduce
\begin{equation}\label{eq:thetaandsigmaijdef}
\theta:=\rotr_{\bge}\bk,\ \ \
\bsigma_{ij}:=\bk_{ij}-\frac{1}{3}\theta\bge_{ij}. 
\end{equation}
These objects are referred to as the \textit{mean curvature} and the \textit{shear} respectively. In analogy with \cite{waihsu89}, 
it is also convenient to introduce the rescaled quantities
\begin{equation}\label{eq:rescaledSigmaij}
\Sigma_{ij}:=\frac{\bsigma_{ij}}{\theta},\ \ 
\Omega_{\rho}:=\frac{3\rho}{\theta^{2}},\ \ 
\Omega_{\bp}:=\frac{3\bp}{\theta^{2}},\ \ 
\Omega_{\Lambda}:=\frac{3\Lambda}{\theta^{2}},\ \ 
\Omega_{\rotot}=\Omega_{\rho}+3\Omega_{\bp}-2\Omega_{\Lambda}.
\end{equation}
With the above notation, Raychaudhuri's equation can be written
\begin{equation}\label{eq:Raychaudhurirelpropertimeintro}
-3\frac{\dot{\theta}}{\theta^{2}}=1+q,
\end{equation}
where 
\begin{equation}\label{eq:qdefinition}
q:=3\Sigma^{ij}\Sigma_{ij}+\frac{1}{2}\Omega_{\rotot};
\end{equation}
cf. Subsection~\ref{ssection:raychaudhuri} below for a justification of this statement. We refer to $q$ as the \textit{deceleration
parameter} in analogy with the terminology introduced in \cite[p.~1414]{waihsu89}.

\subsection{Conformal rescaling} 
In practice, it is convenient to rescale the metric according to $\hg:=\theta^{2}g/9$; cf. Section~\ref{section:conformalrescaling}
below for details. In the case of a monotone volume singularity, $\hg$ can then be written 
\[
\hg=-d\tau\otimes d\tau+\ha_{ij}(\tau)\xi^{i}\otimes \xi^{i}.
\]
Here $\tau\in \mI_{0}$, where $\mI_{0}:=(-\infty,\tau_{0})$ is the image of $(t_{-},t_{0})$ under $\tau$. Moreover, 
$\ha_{ij}=\theta^{2}a_{ij}/9$. Using this notation, (\ref{eq:KGvarphiz}) can be rewritten
\begin{equation}\label{eq:KleinGordonConformallyRescaledintro}
-u_{\tau\tau}+\ha^{ij}e_{i}[e_{j}(u)]+(q-2)u_{\tau}-2X_{0}(u)+\hvph_{0}u=0;
\end{equation}
cf. Section~\ref{section:KGonBianchi} below for a justification of this statement. 
Here $q$ is defined by (\ref{eq:qdefinition}); $\hvph_{0}:=9\theta^{-2}\varphi_{0}$; and $\ha^{ij}$ are the components of 
the inverse of the matrix with components $\ha_{ij}$. In order to explain the origin of the vector field $X_{0}$, let
$\xi_{G}:\mfg\rightarrow\ro$ be the one form defined by 
\begin{equation}\label{eq:xiGdef}
\xi_{G}(X):=\frac{1}{2}\rotr\,\road_{X},
\end{equation}
where the linear map $\road_{X}:\mfg\rightarrow\mfg$ is defined by $\road_{X}Y=[X,Y]$ for all $X,Y\in\mfg$. Then $X_{0}$ is the metrically related vector 
field. In other words, if $X_{0}=X_{0}^{i}e_{i}$ and $\xi_{G,i}=\xi_{G}(e_{i})$, then $X_{0}^{i}=\ha^{ij}\xi_{G,j}$. 

\subsection{The Bianchi classification} Some of the results of the present paper are divided according to the Bianchi 
classification. Next, we therefore explain the associated terminology. Three-dimensional Lie groups can be divided according to whether 
they are unimodular or non-unimodular. For completeness, we recall one characterisation of unimodularity. 

\begin{definition}\label{definition:bianchiclasses}
A connected Lie group $G$ is said to be \textit{unimodular} if $\xi_{G}=0$, where $\xi_{G}:\mfg\rightarrow\ro$ is defined by (\ref{eq:xiGdef}).
A connected Lie group which is not unimodular is said to be \textit{non-unimodular}.
\end{definition}
This definition leads to the division into Bianchi class A and Bianchi class B. 

\begin{definition}
Let $(M,g)$ be a Bianchi spacetime. Then, if the associated Lie group is unimodular, the spacetime is said to be of \textit{Bianchi class A}, 
and if the associated Lie group is non-unimodular, the spacetime is said to be of \textit{Bianchi class B}.  
\end{definition}

In what follows, we also speak of Bianchi class A and B Lie groups. Beyond the division of Bianchi spacetimes and $3$-dimensional Lie groups 
into classes, there is, for each class, a division into types. This division can be found
in many references, but in the case of Bianchi class A, we refer the reader to \cite[Table~1, p. 409]{BianchiIXattr} (since we appeal to the
results of \cite{BianchiIXattr} quite frequently in the present paper). In the case of Bianchi class B, the division into types can be found
in, e.g., \cite[Table~1, p.~16]{RadermacherNonStiff} (again, we appeal to the results of \cite{RadermacherNonStiff} quite frequently in the present 
paper). In the case of Bianchi class B, there is one type which is considered ``exceptional'', namely Bianchi type VI${}_{-1/9}$. The reason for 
this is that if one considers (given a left invariant Riemannian initial metric) the momentum constraint as a linear equation for a left 
invariant second fundamental form, then this equation has a degeneracy which occurs exactly for Bianchi type VI${}_{-1/9}$; cf. 
\cite[Subsection~11.3, pp.~62--64]{RadermacherNonStiff}, in particular \cite[Lemma~11.13, p.~63]{RadermacherNonStiff}, for more details. This
means that for Bianchi class B, the degrees of freedom are maximised for Bianchi type VI${}_{-1/9}$. In the applications of the general results, 
we exclude the exceptional case due to the lack of references concerning the asymptotics of the corresponding Bianchi spacetimes. Finally, let us 
refer the reader interested in a historical overview of the origins of the Bianchi classification to \cite{krasetal}. 

\subsection{Orthogonal perfect fluids with a linear equation of state}\label{ssection:orthogonalperfectfluids}
A natural matter model to consider is an orthogonal perfect fluid with a linear equation of state. In practice, this means that the stress 
energy tensor is of the form 
\begin{equation}\label{eq:Torthogfluid}
T=(\rho+p)dt\otimes dt+pg,
\end{equation}
where $p=(\g-1)\rho$ and $\g$ is a constant. Here $\rho$ is the \textit{energy density}, $p$ is the \textit{pressure}, and $\rho$
and $p$ only depend on $t$. Note that $\bp$ introduced in (\ref{eq:rhobpdef}) then equals $p$. We restrict our attention to $0\leq \g\leq 2$. 
Moreover, we impose additional restrictions on $\g$ in particular cases. The special case that $\g=2$ is referred to as a \textit{stiff
fluid}. When we consider an orthogonal perfect fluid with a linear equation of state, we typically also assume that $\Lambda=0$.

\subsection{Energies}\label{ssection:energiesintro}
In order to have a flexible notion of an energy, it is convenient to introduce a function $\mff$ with the following properties: 
$\mff\in L^{1}(-\infty,0]$; $\mff>0$; $\mff\in C^{1}$;  $\mff'/\mff$ is uniformly bounded on $(-\infty,0]$; and $\mff'\geq 0$. Note 
that the requirements on $\mff$ imply that $\mff\rightarrow 0$ as $\tau\rightarrow-\infty$. Given, e.g., a solution $u$ to 
(\ref{eq:KleinGordonConformallyRescaledintro}) corresponding to initial data at $\tau=0$ that are compactly supported on $G$, define
\begin{equation}\label{eq:meepsilonzerodefinition}
\me[u]:=\frac{1}{2}\int_{G}\left[u_{\tau}^{2}+\ha^{ij}e_{i}(u)e_{j}(u)+\mff^{2}u^{2}\right]\mu_{h}.
\end{equation}
Here
\begin{equation}\label{eq:hRieMetdef}
h:=\de_{ij}\xi^{i}\otimes \xi^{j}.
\end{equation}
Note that, due to Corollary~\ref{cor:basicexistenceandcompactsupp} below, $\me[u]$ is well defined, smooth, and we are allowed to 
differentiate under the integral sign. In what follows, we tacitly consider the constituents of $\me[u]$, as well as $\me[u]$ itself,
as depending on $\tau$ as opposed to $t$. In order to define higher order energies, it is convenient to introduce the following 
terminology. 
\begin{definition}\label{definition:vectorfieldmultiindex}
A \textit{vector field multiindex} is an ordered set of pairs of integers
\begin{equation}\label{eq:vectorfieldmultiindex}
K=\{(i_{1},j_{1}),\dots,(i_{k},j_{k})\},
\end{equation}
where $i_{l}\in\{1,2,3\}$, $l=1,\dots,k$; and $0\leq j_{l}\in\zo$, $l=1,\dots,k$. Moreover, $j_{l}$ is only allowed to be zero
if $k=1$, and the vector field multiindices $\{(1,0)\}$, $\{(2,0)\}$ and $\{(3,0)\}$ should be thought of as being the same 
vector field multiindex, denoted by $0$. For the vector field multiindex $K$ given by (\ref{eq:vectorfieldmultiindex}), 
\[
e_{K}:=e_{i_{1}}^{j_{1}}\cdots e_{i_{k}}^{j_{k}},\ \ \
|K|:=j_{1}+\dots+j_{k}. 
\]
Here $|K|$ is referred to as the \textit{order} of $K$. Moreover, if $K=0$, then $e_{K}(\phi)=\phi$. 
\end{definition}
With $h$ and $\mff$ as above, define
\begin{equation}\label{eq:meepsilonldefinition}
\me_{l}[u]:=\frac{1}{2}\int_{G}\textstyle{\sum}_{|K|\leq l}\left([e_{K}(u)]_{\tau}^{2}+\ha^{ij}e_{i}[e_{K}(u)]e_{j}[e_{K}(u)]
+\mff^{2}[e_{K}(u)]^{2}\right)\mu_{h}.
\end{equation}

\subsection{The main energy estimate}\label{ssection:main}

Given the above terminology, we are in a position to formulate the main energy estimate.

\begin{thm}\label{thm:main}
Let $(M,g)$ be a Bianchi spacetime with a monotone volume singularity $t_{-}$. Assume that $g$ solves (\ref{eq:Einsteinsequations}); 
that $\rho\geq\bp$; that $\rho\geq 0$; and that $\Lambda\geq 0$. Let $\varphi_{0}\in C^{\infty}(M)$ be such that it only depends on 
$t$. Assume that $\g_{\bS}\in L^{1}(-\infty,0]$, where
\begin{equation}\label{eq:gammabSdef}
\g_{\bS}:=\frac{\bS_{+}}{\theta^{2}},
\end{equation}
$\bS_{+}:=\max\{0,\bS\}$ and $\bS(\tau)$ is the scalar curvature of the spatial hypersurface corresponding to $\tau$. Assume, moreover, 
that 
\begin{equation}\label{eq:hvphzhasharpcond}
|\hvph_{0}|+\|\ha^{-1}\|\leq \mff^{2}
\end{equation}
for all $\tau\leq 0$, where $\ha^{-1}$ is the matrix with components $\ha^{ij}$ and $\mff$ is a function with the properties 
stated at the beginning of Subsection~\ref{ssection:energiesintro}. Then there is a constant $C_{l}$ such that for every 
smooth solution $u$ to (\ref{eq:KleinGordonConformallyRescaledintro}) corresponding to compactly supported initial data,
\begin{equation}\label{eq:energyestimatehigherorderenergiesthm}
\me_{l}[u](\tau)\leq C_{l}\me_{l}[u](0)\exp\left[2\int_{\tau}^{0}[2-q(\tau')]d\tau'\right]
\end{equation}
for all $\tau\leq 0$. 
\end{thm}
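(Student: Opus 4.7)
The plan is to establish (\ref{eq:energyestimatehigherorderenergiesthm}) first for $l=0$ via a direct energy identity for $\me[u]=\me_0[u]$, and then to promote it to arbitrary $l\geq 1$ by commuting left-invariant vector field multiindices $e_K$ with (\ref{eq:KleinGordonConformallyRescaledintro}) and inducting on $l$. Compactness of the spatial support of $u(\cdot,\tau)$ for every $\tau\leq 0$, guaranteed by Corollary~\ref{cor:basicexistenceandcompactsupp}, makes differentiation under the integral sign and all integrations by parts legitimate.

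For $l=0$, I would differentiate $\me[u]$ in $\tau$, substitute $u_{\tau\tau}$ from (\ref{eq:KleinGordonConformallyRescaledintro}), and integrate by parts the resulting $\int_G u_\tau\ha^{ij}e_i[e_j(u)]\mu_h$ using the divergence-type result of Section~\ref{section:geometricbackground}. The boundary-type contribution of this integration by parts involves the one-form $\xi_G$, and because $X_0^j=\ha^{ji}\xi_{G,i}$ it is precisely designed to cancel the term $-2\int_G u_\tau X_0(u)\mu_h$ produced by the coefficient $-2X_0(u)$ in the equation. What remains of these two terms is $-\int_G\ha^{ij}e_i(u_\tau)e_j(u)\mu_h$, which cancels against the corresponding expression obtained by pushing $\partial_\tau$ through the spatial gradient piece of $\me[u]$. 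The energy identity thereby reduces to
\[
\frac{d\me[u]}{d\tau}=(q-2)\int_G u_\tau^2\mu_h+\tfrac12\int_G(\partial_\tau\ha^{ij})e_i(u)e_j(u)\mu_h+\int_G(\hvph_0+\mff^2)u u_\tau\mu_h+\int_G\mff\mff' u^2\mu_h.
\]

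The first right-hand side term is the engine that produces the exponential factor $\exp[2\int_\tau^0(2-q')d\tau']$. The second is controlled by expressing $\partial_\tau\ha^{ij}$ in terms of $q$ and the rescaled shear $\Sigma$ through Raychaudhuri's identity (\ref{eq:Raychaudhurirelpropertimeintro}) and the evolution equation for $a_{ij}$: its ``trace'' part contributes a multiple of $q\ha^{ij}$, compatible with the target rate, and the shear part is handled via the bound $3\Sigma^{ij}\Sigma_{ij}\leq q$ that follows from (\ref{eq:qdefinition}) together with the sign conditions $\rho\geq\bp$, $\rho\geq 0$, $\Lambda\geq 0$ (which force $\Omega_{\rotot}\geq 0$); the integrability hypothesis $\g_\bS\in L^1(-\infty,0]$ enters via the Hamiltonian constraint to control the spatial curvature feeding into the shear estimate. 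The third right-hand side term is absorbed by Cauchy--Schwarz combined with $|\hvph_0|+\|\ha^{-1}\|\leq\mff^2$: the cross term $u u_\tau$ is dominated by $\tfrac12\mff^2 u^2+\tfrac12 u_\tau^2$, hence by $\me[u]$, with a multiplier that is integrable in $\tau$ because $\mff\in L^1$. The fourth is bounded by $C\me[u]$ using the uniform boundedness of $\mff'/\mff$. Combining these estimates yields a differential inequality of the form $d\me[u]/d\tau\geq 2(q-2)\me[u]-\eta(\tau)\me[u]$ with $\eta\geq 0$ integrable on $(-\infty,0]$, after which Gronwall gives (\ref{eq:energyestimatehigherorderenergiesthm}) for $l=0$ with $C_0:=\exp\int_{-\infty}^0\eta\,d\tau'$.

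For $l\geq 1$, I would apply $e_K$ to (\ref{eq:KleinGordonConformallyRescaledintro}) for every vector field multiindex with $|K|\leq l$. Because $\ha^{ij}$, $q$, $X_0$ and $\hvph_0$ depend only on $\tau$, and because the structure constants $c^k_{ij}$ of $\mfg$ are constants, commuting $e_K$ past the second-order spatial operator $\ha^{ij}e_ie_j$ produces source terms whose spatial derivatives of $u$ have total order at most $|K|+1$. Running the same identity on $e_K(u)$, absorbing the top-order commutator piece into the gradient part of $\me_l[u]$ by Cauchy--Schwarz and the remaining contributions into $\me_{l-1}[u]$ via the inductive hypothesis, yields (\ref{eq:energyestimatehigherorderenergiesthm}) for every $l$ with a suitable constant $C_l$. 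The main obstacle, in my view, is the joint delicacy of the $X_0$ cancellation (which relies on the coefficient $-2$ in the equation matching precisely the factor produced by the divergence identity on a possibly non-unimodular Lie group) together with the pointwise matrix bound on $\partial_\tau\ha^{ij}$ in terms of $q$; these are the steps where the assumptions on $T$, $\Lambda$ and $\g_\bS$ are all simultaneously exploited.
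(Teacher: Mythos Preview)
Your overall strategy---derive the energy identity, exploit the $X_{0}$ cancellation through Lemma~\ref{lemma:divergencecalculations}, estimate the error terms, apply Gr\"onwall, and then commute with $e_{K}$ for higher orders---is exactly the paper's route; cf.\ Lemma~\ref{lemma:dtaumebasicestimate}, Corollary~\ref{cor:energyestimatefirstexample}, Lemmas~\ref{lemma:higherordercommutators}--\ref{lemma:dtaumehigherorderestimate} and Corollary~\ref{cor:energyestimatehigherorderenergies}. The identity you write down is correct and equivalent to (\ref{eq:dtaumebasicestimate}).

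There is, however, a genuine gap in your handling of the shear term. First, the hypotheses $\rho\geq\bp$, $\rho\geq 0$, $\Lambda\geq 0$ do \emph{not} force $\Omega_{\rotot}\geq 0$: already vacuum with $\Lambda>0$ gives $\Omega_{\rotot}=-2\Omega_{\Lambda}<0$, so your claimed inequality $3\Sigma^{ij}\Sigma_{ij}\leq q$ is simply unavailable. Second, even granting a Frobenius bound of the form $\Sigma^{ij}\Sigma_{ij}\leq\tfrac{2}{3}$, this only yields $|\lambda|\leq\sqrt{2/3}$ for an eigenvalue $\lambda$ of $\Sigma$, hence $3|\lambda|\leq\sqrt{6}>2$; that is too weak to conclude $(2\de^{i}_{m}-3\Sigma^{i}_{\phantom{i}m})\ha^{mj}v_{i}v_{j}\geq -(\text{integrable})\,\ha^{ij}v_{i}v_{j}$, which is what the differential inequality needs. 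The paper closes this gap in Lemma~\ref{lemma:Sigmaupperandlowerbound}: from the Hamiltonian constraint (\ref{eq:rescaledHamcon}) together with $\Omega_{\rho},\Omega_{\Lambda}\geq 0$ one gets $\Sigma^{ij}\Sigma_{ij}\leq\tfrac{2}{3}+\g_{\bS}$, and then the \emph{trace-freeness} of $\Sigma$ is used in an essential way (via the $\lambda_{\pm}$ parametrisation of the eigenvalues) to upgrade this to $9\lambda_{i}^{2}\leq 4+6\g_{\bS}$, i.e.\ $3|\lambda_{i}|\leq 2+\tfrac{3}{2}\g_{\bS}$. The condition $\rho\geq\bp$ is not used here; it enters only through Remark~\ref{remark:boundonqminustwo} to make $(q-2)_{+}$ integrable.

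A minor point: for the lower bound on $\d_{\tau}\me$, the term $\mff'\mff u^{2}$ is simply $\geq 0$ because $\mff'\geq 0$ by assumption, so it can be dropped. Your route via the boundedness of $\mff'/\mff$ would produce a constant (not integrable) multiplier, which would spoil the Gr\"onwall argument; fortunately it is unnecessary.
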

\begin{remark}\label{remark:sigmainftauinfcorr}
The theorem follows from Corollary~\ref{cor:energyestimatehigherorderenergies} below. Note also that under the assumptions of 
the theorem, $\tau\rightarrow-\infty$ corresponds to $\sigma\rightarrow-\infty$; cf. Lemma~\ref{lemma:sigmataulimcorr} below. 
\end{remark}

Concerning the assumptions, let us remark the following. First, if $T$ satisfies the dominant energy condition, then $\rho\geq\bp$ 
and $\rho\geq 0$. For all the Bianchi types except IX, $\bS\leq 0$, so that $\g_{\bS}=0$; cf., e.g., \cite[Appendix~E]{stab}. The 
condition on $\g_{\bS}$ is thus only 
non-trivial in the case of Bianchi type IX. Moreover, in the case of Bianchi type IX orthogonal perfect fluids with $2/3<\g\leq 2$, 
it can be demonstrated that $\g_{\bS}$ converges to zero exponentially with respect to $\tau$-time (though we do not provide a proof of 
this statement here). In that sense, the condition on $\g_{\bS}$ is not even a restriction for a large class of Bianchi type IX solutions. 
Next, we are mainly interested in singularities 
for which $\theta\rightarrow\infty$. In that setting, $\Omega_{\Lambda}$ converges to zero; cf. (\ref{eq:rescaledSigmaij}). 
If, in addition, either the parameter $\Omega_{\bp}$ converges to zero asymptotically, or the average pressure 
$\bp$ is non-negative, then $q$ is $\geq 0$ asymptotically; cf. (\ref{eq:rescaledSigmaij}) and (\ref{eq:qdefinition}). 
Combining this observation with the fact that $\d_{\tau}\theta=-(1+q)\theta$, cf. (\ref{eq:Raychaudhurirelpropertimeintro}), 
it follows that $\theta$ tends to infinity exponentially. In particular, in the case of the Klein-Gordon equation, $\hvph_{0}$ 
converges to zero exponentially. In that sense, the condition (\ref{eq:hvphzhasharpcond}) is not very restrictive as far as 
$\hvph_{0}$ is concerned. Finally, note that if $\tau(t_{a})=0$, then 
\begin{equation}\label{eq:intainvhainv}
\int_{t_{-}}^{t_{a}}\|a^{-1/2}(t)\|dt=\int_{-\infty}^{0}\|\ha^{-1}(\tau)\|^{1/2}d\tau.
\end{equation}
In other words, the requirement that $\|\ha^{-1}\|^{1/2}$ be integrable is equivalent to the requirement of silence. To demand
that there be a function $\mff$ such that (\ref{eq:hvphzhasharpcond}) holds and such that $\mff$ is $L^{1}$ is thus natural. At the 
beginning of Subsection~\ref{ssection:energiesintro}, we require $\mff$ to satisfy a few additional technical conditions. 
However, the main requirement is integrability of $\mff$; i.e., silence. Roughly speaking, we thus expect that if the underlying 
Bianchi geometry is silent, then an estimate of the form (\ref{eq:energyestimatehigherorderenergiesthm}) should hold. 

Next, let us point out that we only demand that the initial data for $u$ have compact support in order for the energies $\me_{l}$
to be well defined. Moreover, due to the causal structure in the silent setting, analysing the asymptotics in the general case 
can be reduced to analysing the asymptotics of solutions corresponding to initial data with compact 
support. Even for non-compact Lie groups $G$ and smooth solutions $u$ to the Klein-Gordon equation (not necessarily corresponding to 
initial data with compact support), we are, in fact, under quite general circumstances able to conclude the existence of functions 
$u_{1}$ and $u_{0}$ such that the expressions appearing in (\ref{eq:usigmauasymptoticsintro}) tend to zero asymptotically; cf. 
Sections~\ref{section:applications}-\ref{section:convresIII} below.

Concerning the asymptotics, it can be verified that (\ref{eq:energyestimatehigherorderenergiesthm}) implies that 
\begin{equation}\label{eq:eKusigmaLtwoestimateintro}
\int_{G}\textstyle{\sum}_{|K|\leq l}[e_{K}(u_{\sigma})]^{2}\mu_{h}\leq C_{l}\me_{l}[u](0),
\end{equation}
cf. Subsection~\ref{ssection:addchangetimecoord},
where we use the time coordinate $\sigma$ introduced in (\ref{eq:dsigmadtdefrel}). Combining this observation with Sobolev embedding
and the character of the causal structure in the silent setting yields the conclusion that $u_{\sigma}(\cdot,\sigma)$ is asymptotically
bounded in any local (in space) $C^{k}$ norm; cf. the arguments presented in the proof of Proposition~\ref{prop:asymptoticsexponconvofqtotwo} 
below for a justification of this statement. Integrating this estimate, it is clear that $u(\cdot,\sigma)$ does not grow faster than 
linearly in any local $C^{k}$ norm.

\section{Geometric properties of Bianchi developments}\label{section:geompropBianchi}

In the applications, we wish to deduce, rather than assume, that the Bianchi spacetimes have a monotone volume singularity. In the present 
section, we therefore interrupt the presentation of energy estimates in order to consider the implications of Einstein's equations. 
We use initial data as the starting point for our discussion and focus on the orthogonal perfect fluid setting.

\begin{definition}\label{def:Bianchiid}
\textit{Bianchi orthogonal perfect fluid initial data} for Einstein's equations consist of the following: a connected $3$-dimensional Lie group $G$; 
a left invariant metric $\bge$ on $G$; a left invariant symmetric covariant $2$-tensor field $\bk$ on $G$; and a constant $\rho_{0}\geq 0$ 
satisfying
\begin{align*}
\bS-\bk^{ij}\bk_{ij}+(\rotr_{\bge}\bk)^{2} = & 2\rho_{0}\\
\bnabla_{i}\rotr_{\bge}\bk-\bnabla^{j}\bk_{ij} = & 0.
\end{align*}
\end{definition}
\begin{remark}
Here $\bS$ and $\bnabla$ denote the scalar curvature and Levi-Civita connection of $\bge$ respectively. Moreover, indices are raised
and lowered with $\bge$. 
\end{remark}
\begin{remark}
When we speak of Bianchi orthogonal perfect fluid initial data we take it for granted that $\Lambda=0$. 
\end{remark}
Next, we consider the different Bianchi classes separately. 

\subsection{Bianchi class A}\label{ssection:BianchiclassA}

Given Bianchi orthogonal perfect fluid initial data such that $G$ is a unimodular Lie group, a corresponding development is introduced in 
\cite[Definition~21.1, p. 489]{BianchiIXattr} (the value of $\g$ should here be understood from the context and the equation of state
is given by $p=(\g-1)\rho$). We here refer to it as the \textit{Bianchi class A development} of the initial data. It is a Bianchi spacetime 
in the sense of Definition~\ref{def:Bianchispacetime}. Moreover, $a(t)$ is a diagonal matrix for every $t\in I$. 

\textbf{Bianchi class A developments which are not of type IX.} In case the underlying unimodular Lie group is not of type IX, 
\cite[Lemma~21.5, p.~491]{BianchiIXattr}, \cite[Lemma~21.8, p.~493]{BianchiIXattr} and \cite[Lemma~20.6, p.~218]{minbok} imply that 
either the development is a quotient of Minkowski space (in particular, it is causally geodesically complete); or, after a 
suitable choice of time orientation, $t_{-}>-\infty$, $t_{+}=\infty$ and $\theta(t)>0$ for all $t\in I$. In what follows, we refer to the 
developments that are not quotients of Minkowski space as \textit{non-Minkowski developments}. For non-Minkowski developments, 
the interval $I$ in $t$-time corresponds to $(-\infty,\infty)$ in $\tau$-time; $\theta(t)\rightarrow\infty$ as $t\rightarrow t_{-}$; 
and $\theta(t)\rightarrow 0$ as $t\rightarrow t_{+}$. This follows from \cite[Lemma~22.4, p.~497]{BianchiIXattr} and its proof
(note that the $\tau$ appearing in \cite{BianchiIXattr} can be assumed to coincide with the $\tau$ of the present paper; cf. the
beginning of Subsection~\ref{ssection:genobBianchiA}). In 
particular, for non-Minkowski developments, $t_{-}$ is a monotone volume singularity. Finally, the non-Minkowski developments are past 
causally geodesically incomplete and future causally geodesically complete; cf. \cite[Lemma~21.8, p.~493]{BianchiIXattr}. 

\textbf{Bianchi type IX developments.} Bianchi type IX developments typically recollapse in the sense that they are past and future causally 
geodesically incomplete. For diagonal Bianchi type IX solutions with a vanishing cosmological constant; non-negative mean pressure; and matter satisfying 
the dominant energy condition, this was demonstrated by Xue-feng Lin and Robert Wald in \cite{law}. Here we, via \cite{BianchiIXattr,minbok},
appeal to this result. From now on, we therefore take for granted that $1\leq \g\leq 2$ (in order to ensure that the mean pressure is 
non-negative). Due to \cite[Lemma~21.6, p.~492]{BianchiIXattr}, it then follows that there is a $t_{0}\in I$ such that $\theta(t)>0$ in 
$(t_{-},t_{0})$ and $\theta(t)<0$ in $(t_{0},t_{+})$. In other words, the spacetime expands; reaches a moment of maximal expansion (as measured
by the volume); and then starts to contract. Moreover, due to \cite[Lemma~21.8, p.~493]{BianchiIXattr}, $t_{+}<\infty$, $t_{-}>-\infty$, and the 
spacetime is future and past causally geodesically incomplete. By an argument which is essentially identical to the proof of 
\cite[Lemma~20.8, p.~219]{minbok} (this lemma covers the vacuum case), it can also be demonstrated that $\theta(t)\rightarrow\mp\infty$
as $t\rightarrow t_{\pm}\mp$. Finally, due to \cite[Lemma~22.5, p.~498]{BianchiIXattr}, the intervals $(t_{-},t_{0})$ and $(t_{0},t_{+})$ in 
$t$-time correspond to $(-\infty,\tau_{a})$ in $\tau$-time for some $\tau_{a}\in\ro$
corresponding to the maximal volume of the spatial hypersurfaces of homogeneity. In particular, if $1\leq\g\leq 2$, then Bianchi type IX 
developments have monotone volume singularities both to the future and to the past.

\subsection{Non-exceptional Bianchi class B developments}\label{ssection:BianchiclassB}
Assuming $G$ to be a non-exceptional Bianchi class B Lie group, the notion of initial data introduced in Definition~\ref{def:Bianchiid} coincides
with that introduced in \cite[Definition~1.5, p.~4]{RadermacherNonStiff}. Moreover, there is a notion of \textit{Bianchi class B development} of the 
data, introduced in \cite[Definition~11.15, p.~71]{RadermacherNonStiff}. A Bianchi class B development is a Bianchi spacetime in the sense of 
Definition~\ref{def:Bianchispacetime}; cf. Subsection~\ref{ssection:genobBianchiB} below. Moreover, due to \cite[Lemma~11.16, p.~71]{RadermacherNonStiff}, 
it either has $\theta(t)>0$ for all $t\in I$, or it arises from 
initial data whose universal covering space is initial data for Minkowski space; Bianchi type I is included in the framework of 
\cite{RadermacherNonStiff}. Restricting our attention to Bianchi class B, it is thus clear that $\theta(t)>0$ for all $t\in I$. Due to the 
construction of the Bianchi class B development, $(t_{-},t_{+})$ corresponds to $(-\infty,\infty)$ in $\tau$-time; $t_{-}>-\infty$; 
$t_{+}=\infty$; $\theta(t)\rightarrow\infty$ as $t\rightarrow t_{-}$; and $\theta(t)\rightarrow \theta_{\infty}$ as $t\rightarrow t_{+}$
for some constant $\theta_{\infty}\geq 0$. The reader interested in a justification of these statements is referred to  
\cite[Subsection~11.5, pp.~66-67]{RadermacherNonStiff} (the $\tau$ appearing in \cite{RadermacherNonStiff} can be assumed to coincide with the 
$\tau$ of the present paper; cf. the beginning of Subsection~\ref{ssection:genobBianchiB}). Note, in particular, that $t_{-}$ is a monotone 
volume singularity. Finally, note that a Bianchi class B development with $\rho=0$ or with $\rho>0$ and $0<\g\leq 2$ is past causally geodesically 
incomplete and future causally geodesically incomplete; cf. \cite[Lemma~11.20, p.~74]{RadermacherNonStiff}. 

\subsection{Silence} 
To determine whether the singularity is silent or not is, in general, quite complicated. For example, Charles Misner 
suggested in 1969 that Bianchi type IX singularities might not be silent; cf. \cite{misner}. He also suggested that the non-silent nature 
might have important implications in physics. However, it was only 47 years later that relevant results were obtained. In fact, in \cite{brehm}, 
Bernhard Brehm demonstrated that generic Bianchi type VIII and IX vacuum spacetimes have silent monotone volume singularities; here generic 
means that the relevant subsets of initial data have full measure. For this reason, silence is part of the assumptions in the general 
results that we formulate. However, we also demonstrate that the assumptions are satisfied for large classes of Bianchi developments. 

\subsection{Maximality} 
Next, we clarify in which sense the Bianchi class A and B developments described above are maximal. In the non-vacuum setting, i.e. if 
$\rho>0$, the spacetime Ricci curvature contracted with itself tends to infinity in the incomplete directions of causal geodesics, 
assuming $0<\g\leq 2$ in the case of Bianchi class B and $1\leq\g\leq 2$ in the case of Bianchi class A. In the case of Bianchi class A,
this statement follows from \cite[Lemma~22.3, p.~497]{BianchiIXattr} and the above observations concerning causal geodesic incompleteness. 
In the case of Bianchi class B, it follows from \cite[Lemma~12.1, p.~81]{RadermacherNonStiff} and the above observations concerning 
causal geodesic incompleteness. In these settings, the Bianchi developments are thus inextendible as $C^{2}$-manifolds, 
and in this sense maximal; cf., e.g., the proof of \cite[Lemma~18.18, pp.~204--205]{minbok}. In the Bianchi class A vacuum developments, 
the Kretschmann scalar, defined to be the contraction of the spacetime Riemann tensor with itself, blows up in the incomplete directions
of causal geodesics, except if the development is non-generic in the following sense: it is a quotient of Minkowski space; it is a flat
Kasner solution; or it is a locally rotationally symmetric Bianchi type II, VIII or IX solution. This follows 
from \cite[Theorem~24.12, p.~258]{minbok}. Generic Bianchi class A vacuum developments are thus maximal in the sense that they are 
$C^{2}$-inextendible; cf. \cite[Lemma~18.18, p.~204]{minbok}. In the case of Bianchi class B vacuum developments, the same inextendibility holds, 
except if the development belongs to one of the following non-generic types: it is a plane wave solution or it is a 
locally rotationally symmetric Bianchi type VI${}_{-1}$ vacuum development. This follows from \cite[Theorem~1.11, p.~5]{RadermacherNonStiff}. 
To conclude: all the developments of interest are such that they have monotone volume singularities and are $C^{2}$-inextendible. Since
all the developments are Bianchi spacetimes in the sense of Definition~\ref{def:Bianchispacetime}, and since Bianchi spacetimes are globally
hyperbolic, cf.  Lemma~\ref{lemma:Bianchispacetimegloballyhyperbolic} below, it is tempting to say that the relevant Bianchi developments
are the maximal globally hyperbolic developments (or maximal Cauchy developments) of the initial data. In the vacuum setting, this is certainly
true. However, it is important to note that in the case that $\g<1$, it is not clear that the underlying initial value problem is well posed; 
cf., e.g., \cite[p.~211]{far}. In that case, it is therefore not clear that it is possible to argue as in, e.g., \cite{cag} in order to 
demonstrate the existence of a maximal globally hyperbolic development.

\section{Convergence results I}\label{section:applications}

Theorem~\ref{thm:main} ensures that $u_{\sigma}$ is bounded. Next, we provide conditions ensuring convergence. 

\subsection{Leading order asymptotics}\label{ssection:leadingorderasymptotics}

In order to deduce that $u_{\sigma}$ converges, it is sufficient to make slightly stronger assumptions than those appearing in 
Theorem~\ref{thm:main}. 

\begin{prop}\label{prop:osc}
Given that the conditions of Theorem~\ref{thm:main} are satisfied, assume, in addition, that 
\begin{equation}\label{eq:mffsqmomoneintegrable}
\int_{-\infty}^{0}\ldr{\tau}\mff^{2}(\tau)d\tau<\infty.
\end{equation}
Then, if $u$ is a smooth solution to (\ref{eq:KGvarphiz}), there is a $u_{1}\in C^{\infty}(G)$ with the property that for every compact 
subset $K\subseteq G$ and $0\leq l\in\zo$, 
\[
\lim_{\sigma\rightarrow-\infty}\|u_{\sigma}(\cdot,\sigma)-u_{1}\|_{C^{l}(K)}=0.
\]
\end{prop}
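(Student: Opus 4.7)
The strategy is to prove that $u_{\sigma}(\cdot,\tau)$ is Cauchy in $L^{2}(G,\mu_{h})$ and, more generally, in $H^{l}_{\mathrm{loc}}$ for every $l$, by showing that $\d_{\tau}u_{\sigma}$ is absolutely integrable in $\tau$ in the relevant norm; Sobolev embedding on compact subsets then upgrades this to $C^{l}(K)$ convergence and gives smoothness of $u_{1}$ by the arbitrariness of $l$. First, using finite propagation speed in the silent setting (as discussed in the paragraphs immediately preceding the proposition), I would reduce, after multiplication by a spatial cutoff, to the case of compactly supported initial data at $\tau=0$, so that all the energies $\me_{l}[u]$ are well defined. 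Then a direct computation from $d\sigma/dt=e^{-3\tau}/3$, $d\tau/dt=\theta/3$ and the Raychaudhuri equation $\theta_{\tau}=-(1+q)\theta$ shows the identity $\theta e^{3\tau}=\theta_{0}\mu$, where $\mu(\tau):=\exp\bigl[\int_{0}^{\tau}(2-q)d\tau'\bigr]$ is the integrating factor for the ODE part of (\ref{eq:KleinGordonConformallyRescaledintro}); consequently $u_{\sigma}=\theta_{0}\mu u_{\tau}$, and multiplying (\ref{eq:KleinGordonConformallyRescaledintro}) by $\theta_{0}\mu$ yields the evolution identity
\[
\d_{\tau}u_{\sigma}=\theta_{0}\mu\,F[u],\qquad F[u]:=\ha^{ij}e_{i}e_{j}u-2X_{0}(u)+\hvph_{0}u.
\]

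Theorem~\ref{thm:main} can then be rewritten as $\me_{l}[u](\tau)\leq C_{l}\me_{l}[u](0)\mu^{-2}(\tau)$, which combined with $u_{\sigma}=\theta_{0}\mu u_{\tau}$ gives the uniform bound $\|e_{K}u_{\sigma}(\cdot,\tau)\|_{L^{2}}\leq C$ for every $|K|\leq l$ (this is essentially (\ref{eq:eKusigmaLtwoestimateintro})). Integrating $e_{K}u_{\tau}=e_{K}u_{\sigma}/(\theta_{0}\mu)$ from $\tau$ to $0$ and using that $\mu$ is non-decreasing (since $q\leq 2$, from (\ref{eq:qdefinition}) together with the Hamiltonian constraint), I would obtain $\|e_{K}u(\cdot,\tau)\|_{L^{2}}\leq \|e_{K}u(\cdot,0)\|_{L^{2}}+C|\sigma(\tau)|$ with $\mu(\tau)|\sigma(\tau)|\leq |\tau|/\theta_{0}$, hence $\mu\|e_{K}u\|_{L^{2}}\lesssim \ldr{\tau}$ for $|K|\leq l$. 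Cauchy--Schwarz applied termwise to $F[u]$, in conjunction with the pointwise bounds $\|\ha^{-1}\|\leq\mff^{2}$ and $|\hvph_{0}|\leq\mff^{2}$ from (\ref{eq:hvphzhasharpcond}), then gives
\[
\|F[u]\|_{L^{2}}\leq C\mff^{2}\bigl(\|e^{2}u\|_{L^{2}}+\|eu\|_{L^{2}}+\|u\|_{L^{2}}\bigr)\lesssim \mff^{2}\ldr{\tau}/\mu,
\]
so that $\theta_{0}\mu\|F[u]\|_{L^{2}}\lesssim \mff^{2}\ldr{\tau}$, which is integrable on $(-\infty,0]$ precisely by the strengthened assumption (\ref{eq:mffsqmomoneintegrable}).

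Integrating the evolution identity in $\tau$ and applying Minkowski's integral inequality then yields
\[
\|u_{\sigma}(\cdot,\tau_{1})-u_{\sigma}(\cdot,\tau_{2})\|_{L^{2}}\leq \int_{\tau_{2}}^{\tau_{1}}\theta_{0}\mu\|F[u]\|_{L^{2}}\,d\tau \longrightarrow 0
\]
as $\tau_{1},\tau_{2}\to-\infty$, so $u_{\sigma}$ converges in $L^{2}(G)$ to some limit $u_{1}$. To upgrade this to $C^{l}(K)$ convergence, I would run the same argument with $u$ replaced by $e_{K}u$ for each vector field multiindex $K$, noting that the commutators $[e_{K},\ha^{ij}e_{i}e_{j}-2X_{0}]$ produce only lower-order spatial operators whose contributions are controlled by the energies $\me_{|K|+2}$; this gives convergence in $H^{l}_{\mathrm{loc}}(G)$ for every $l$, which Sobolev embedding turns into $C^{l}(K)$ convergence on any compact $K\subseteq G$. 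The main obstacle will be controlling the second-order term $\ha^{ij}e_{i}e_{j}u$ in $F[u]$: although $\|\ha^{-1}\|\leq\mff^{2}$ decays, the best available bound on $\|e^{2}u\|_{L^{2}}$ is of order $\ldr{\tau}/\mu$, and it is precisely this $\ldr{\tau}$ factor that necessitates the strengthened integrability hypothesis (\ref{eq:mffsqmomoneintegrable}) beyond the silence condition $\mff^{2}\in L^{1}$.
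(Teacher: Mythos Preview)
Your proposal is correct and follows essentially the same route as the paper: both arguments use the energy estimate to bound $u_\sigma$ uniformly and then $u$ linearly, feed this back into the equation (you via the identity $\partial_\tau u_\sigma=\theta_0\mu F[u]$, the paper via the $\sigma$-time form (\ref{eq:KleinGordonwrtsigmatime}) for $u_{\sigma\sigma}$), and integrate using (\ref{eq:mffsqmomoneintegrable}). One small correction: the pointwise inequality $q\leq 2$ need not hold for Bianchi type IX; the paper instead uses $(q-2)_+\leq 3\gamma_{\bar S}\in L^1$ (Remark~\ref{remark:boundonqminustwo}), which gives $\mu(\tau)/\mu(\tau')\leq C$ for $\tau\leq\tau'\leq 0$ and still yields your key bound $\mu|\sigma|\lesssim\langle\tau\rangle$.
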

\begin{remark}
For $\xi\in\cn{d}$, the notation $\ldr{\xi}$ is defined by 
\begin{equation}\label{eq:ldrxidef}
\ldr{\xi}:=(1+|\xi|^{2})^{1/2}. 
\end{equation}
\end{remark}
\begin{remark}
The proof of this proposition is to be found in Subsection~\ref{ssection:condresoscillatory} below. Note also that under the assumptions
of the proposition, $\tau\rightarrow-\infty$ corresponds to $\sigma\rightarrow-\infty$; cf. Remark~\ref{remark:sigmainftauinfcorr}. 
\end{remark}

Next, we give an example of the implications of this result. 

\begin{example}[Leading order asymptotics]\label{example:usigmaconvgeneralI}
Let $(G,\bge,\bk,\rho_{0})$ be Bianchi orthogonal perfect fluid initial data in the sense of Definition~\ref{def:Bianchiid}. Assume, moreover, 
that $G$ is a unimodular Lie group and that $1\leq \g\leq 2$. Let $(M,g)$ be the corresponding Bianchi class A development; cf. 
Subsection~\ref{ssection:BianchiclassA}. Focusing on the non-Minkowski developments, we can assume the time orientation to be such that 
$t_{-}$ is a monotone volume singularity. Assume now that $\varphi_{0}$ is a bounded function of $t$ only. Assume, finally, that there are 
$C_{0}$ and $0<\lambda_{0},\a_{0}\leq 1$ such that 
\begin{equation}\label{eq:hasharpsubexpbd}
\|\ha^{-1}(\tau)\|\leq C_{0}\exp(-2\lambda_{0}\ldr{\tau}^{\a_{0}})
\end{equation}
for all $\tau\leq 0$. Note that this assumption implies that the monotone volume singularity is silent; cf. (\ref{eq:intainvhainv}). 
Then, if $u$ is a smooth solution to (\ref{eq:KGvarphiz}), there is a $u_{1}\in C^{\infty}(G)$ with the property that for every compact 
subset $K\subseteq G$ and $0\leq l\in\zo$, there is a constant $C_{K,l}$ such that 
\begin{equation}\label{eq:usigmauoasymptoticswithrate}
\|u_{\sigma}(\cdot,\sigma)-u_{1}\|_{C^{l}(K)}\leq C_{K,l}\ldr{\tau}^{2-\a_{0}}e^{-2\lambda_{0}\ldr{\tau}^{\a_{0}}}
\end{equation}
for all $\tau\leq 0$. In particular, the left hand side of this inequality converges to zero as $\sigma\rightarrow-\infty$. 

Turning to the assumptions, the conditions on $\varphi_{0}$ can be relaxed; the interested reader is referred to the proof. However, the 
main assumption here is that (\ref{eq:hasharpsubexpbd}) hold. Consider non-Minkowski Bianchi type I, II, VI${}_{0}$ and VII${}_{0}$ vacuum 
developments. Assume the corresponding monotone volume singularity to be silent (this only excludes a subset of the locally rotationally
symmetric developments, in case they exist for the given Bianchi type). Then (\ref{eq:hasharpsubexpbd}) holds with $\a_{0}=1$. Turning 
to Bianchi type VIII and IX vacuum developments, there are unpublished results by Bernhard 
Brehm demonstrating that (\ref{eq:hasharpsubexpbd}) holds generically. In fact, there is a subset $\mG$ of the set of Bianchi type IX 
vacuum initial data such that the following holds: $\mG$ has full measure, and for a solution corresponding to initial data in $\mG$, there 
is a constant $C_{0}>0$ with the property that (\ref{eq:hasharpsubexpbd}) holds for all $\tau\leq 0$, $\lambda_{0}=1$ and $\a_{0}=1/6$ (in 
fact, for any $\a_{0}<1/5$ such a statement holds). The statement concerning Bianchi type VIII is similar. 
\end{example}
\begin{remark}
We justify the statements made in the example in Subsection~\ref{ssection:condresoscillatory} below.
\end{remark}

\subsection{Complete asymptotics}\label{ssection:completeasymptotics}

Strengthening the assumptions of Proposition~\ref{prop:osc} slightly yields complete asymptotics. 

\begin{prop}\label{prop:fullasymptoticsBVIIIandIXgen}
Given that the conditions of Theorem~\ref{thm:main} are satisfied, assume, in addition, that (\ref{eq:mffsqmomoneintegrable}) holds, 
and that 
\[
\int_{-\infty}^{0}\int_{-\infty}^{\tau}\ldr{\tau'}[|\hvph_{0}(\tau')|+\|\ha^{-1}(\tau')\|]
\exp\left(-\int_{\tau'}^{0}[q(\tau'')-2]d\tau''\right)d\tau' d\tau<\infty.
\]
Then, if $u$ is a smooth solution to (\ref{eq:KGvarphiz}), there are $u_{0},u_{1}\in C^{\infty}(G)$ such that for every compact subset 
$K\subseteq G$ and $0\leq l\in\zo$, 
\[
\lim_{\sigma\rightarrow-\infty}\|u_{\sigma}(\cdot,\sigma)-u_{1}\|_{C^{l}(K)}=0,\ \ \
\lim_{\sigma\rightarrow-\infty}\|u(\cdot,\sigma)-u_{1}\cdot\sigma-u_{0}\|_{C^{l}(K)}=0.
\]
\end{prop}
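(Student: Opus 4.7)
The plan is as follows. First, invoke Proposition~\ref{prop:osc}, whose hypotheses are implied by those of the present proposition, to produce a function $u_{1} \in C^{\infty}(G)$ with the property that $u_{\sigma}(\cdot,\sigma) \to u_{1}$ as $\sigma \to -\infty$ in $C^{l}(K)$ for every compact $K \subseteq G$ and every $l$. The additional integrability assumption will then be used to strengthen this to the statement that $u_{\sigma} - u_{1}$ is Lebesgue integrable in $\sigma$ over $(-\infty,0]$ with values in every such $C^{l}(K)$-norm. Once this is established, the remainder of the proof is a direct integration: define
\[
u_{0} := u(\cdot,0) - \int_{-\infty}^{0}\bigl[u_{\sigma'}(\cdot,\sigma') - u_{1}\bigr]\,d\sigma' \in C^{\infty}(G),
\]
and observe that for every $\sigma \leq 0$,
\[
u(\cdot,\sigma) - u_{1}\sigma - u_{0} = \int_{-\infty}^{\sigma}\bigl[u_{\sigma'}(\cdot,\sigma') - u_{1}\bigr]\,d\sigma',
\]
which tends to zero in $C^{l}(K)$ as $\sigma \to -\infty$ by absolute continuity of the integral.

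The crux is therefore to extract, from the proof of Proposition~\ref{prop:osc}, a quantitative rate of convergence of $u_{\sigma}$ to $u_{1}$. To this end I would rewrite (\ref{eq:KleinGordonConformallyRescaledintro}) using the integrating factor $\mu(\tau) := \exp\bigl(-\int_{0}^{\tau}(q(\tau')-2)\,d\tau'\bigr)$, obtaining a Duhamel-type identity of the form
\[
\mu(\tau)\,u_{\tau}(\cdot,\tau) - u_{\tau}(\cdot,0) = -\int_{\tau}^{0}\mu(\tau')\bigl[\ha^{ij}e_{i}e_{j}u - 2X_{0}u + \hvph_{0}u\bigr](\cdot,\tau')\,d\tau'.
\]
Applying Theorem~\ref{thm:main} to bound $\me_{l}[u](\tau')^{1/2} \leq C\exp\bigl(\int_{\tau'}^{0}(2-q)\,d\tau''\bigr)$, and noting that $u$ itself grows at most like $\ldr{\tau'}$ near the singularity (since $u_{\sigma}$ is bounded in the relevant Sobolev norm by (\ref{eq:eKusigmaLtwoestimateintro}), and $|\sigma|$ is comparable to $\ldr{\tau}$ under the silence hypothesis by Remark~\ref{remark:sigmainftauinfcorr}), the contributions $\ha^{ij}e_{i}e_{j}u$ and $\hvph_{0}u$ are controlled by $\bigl[\|\ha^{-1}(\tau')\| + |\hvph_{0}(\tau')|\bigr]$ times a factor of order $\ldr{\tau'}$. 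Dividing by $\mu(\tau)$ and passing to $u_{\sigma}$ (the Jacobian $d\tau/d\sigma = \theta e^{3\tau}$ being bookkept by the same integrating factor that controls the energy), one arrives at
\[
\|u_{\sigma}(\cdot,\tau) - u_{1}\|_{L^{2}(K)} \leq C\int_{-\infty}^{\tau}\ldr{\tau'}\bigl[|\hvph_{0}(\tau')| + \|\ha^{-1}(\tau')\|\bigr]\exp\!\Bigl(\int_{\tau'}^{0}(2-q)\,d\tau''\Bigr)\,d\tau'
\]
for every compact $K \subseteq G$, with the analogous bound for every spatial derivative $e_{i_{1}}\cdots e_{i_{l}}(u_{\sigma} - u_{1})$ since $\me_{l}[u]$ is controlled by the same theorem.

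Integrating this pointwise estimate in $\tau$ from $-\infty$ to $0$ and swapping the order of integration by Fubini produces precisely the iterated double integral appearing in the hypothesis of the proposition; finiteness of the latter therefore yields the $L^{1}$-integrability of $u_{\sigma} - u_{1}$ in $\tau$, and hence in $\sigma$, since under the silence hypothesis $\tau \mapsto \sigma$ is a smooth diffeomorphism of $(-\infty,0]$ onto itself. Sobolev embedding on $K$ upgrades the $L^{2}(K)$ bounds to $C^{l}(K)$ bounds for any fixed $l$, at the cost of working with $\me_{l'}[u]$ for $l'$ sufficiently larger than $l$. The main obstacle, as I see it, is making the Duhamel tail estimate sharp enough to match the precise form of the integrability hypothesis: one must carefully track the $\ldr{\tau'}$ factor coming from the linear-in-$\sigma$ growth of $u$ multiplying $\hvph_{0}u$, and likewise control $\ha^{ij}e_{i}e_{j}u$ via an integration by parts or commutator argument so that only a single power of $\|\ha^{-1}\|$ appears. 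Once this tail estimate is in place, the definition of $u_{0}$, the smoothness statement via the analogous estimate for $e_{K}u$, and the promotion to local $C^{l}$ convergence are mechanical.
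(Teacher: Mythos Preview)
Your overall strategy is the same as the paper's: estimate $u_{\sigma\sigma}$ (equivalently, the Duhamel tail of $u_{\sigma}$), integrate once to get a rate for $u_{\sigma}-u_{1}$, and integrate again to produce $u_{0}$. However, your bookkeeping contains two errors that happen to compensate for one another, so that the final double integral matches the hypothesis even though the intermediate steps are not justified.

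First, your claim that $|\sigma|$ is comparable to $\ldr{\tau}$ is false: Remark~\ref{remark:sigmainftauinfcorr} (via Lemma~\ref{lemma:sigmataulimcorr}) only gives $|\sigma(\tau)|\geq c_{0}|\tau|$. The correct upper bound, which the paper derives from (\ref{eq:sigmatauform}) and $(q-2)_{+}\in L^{1}$, is
\[
\ldr{\sigma(\tau)}\leq C\ldr{\tau}\exp\Bigl(-\textstyle\int_{\tau}^{0}(q-2)\,d\tau'\Bigr),
\]
and this extra exponential factor is essential. Using only $u\lesssim\ldr{\tau'}$, your Duhamel tail would give $\|u_{\sigma}-u_{1}\|\lesssim\int_{-\infty}^{\tau}\ldr{\tau'}[|\hvph_{0}|+\|\ha^{-1}\|]\,d\tau'$ (the factor $\mu(\tau')=\exp(\int_{\tau'}^{0}(q-2))$ is bounded since $(q-2)_{+}\in L^{1}$, so no exponential should appear in your displayed single integral). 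Second, your assertion that $L^{1}$-integrability in $\tau$ implies the same in $\sigma$ ``since $\tau\mapsto\sigma$ is a diffeomorphism'' is wrong: the Jacobian $d\sigma/d\tau=C_{\sigma}\exp(-\int_{\tau}^{0}(q-2))$ is bounded below but not above. The paper avoids both pitfalls by working directly with the $\sigma$-time equation (\ref{eq:KleinGordonwrtsigmatime}), estimating $\|u_{\sigma\sigma}\|_{C^{l}(K)}$ as in (\ref{eq:usigmasigmaCkestimateII}), and integrating twice in $\sigma$ with the explicit Jacobian; the exponential from the Jacobian and the exponential in the bound on $\ldr{\sigma}$ combine to produce exactly the hypothesis integral. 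If you redo your argument with the correct bound $u\lesssim\ldr{\sigma}$ and integrate the tail estimate in $\sigma$ rather than $\tau$, you will recover precisely this computation.
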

\begin{remark}
The proof of this statement is to be found in Subsection~\ref{ssection:condresoscillatoryII} below.
\end{remark}

Next, we give an example illustrating that the conditions are satisfied for generic Bianchi class A vacuum solutions. 

\begin{example}[Complete asymptotics]\label{example:fullasymptotics}
Given that the assumptions stated at the beginning of Example~\ref{example:usigmaconvgeneralI} are satisfied, assume, in addition, 
that there are $C_{0}$ and $0<\lambda_{0},\a_{0}\leq 1$ such that 
\begin{equation}\label{eq:hasharpsubexpbdint}
\|\ha^{-1}(\tau)\|\exp\left(-\int_{\tau}^{0}(q-2)d\tau'\right)\leq C_{1}\exp(-2\lambda_{0}\ldr{\tau}^{\a_{0}})
\end{equation}
for all $\tau\leq 0$. Then the conclusions of Proposition~\ref{prop:fullasymptoticsBVIIIandIXgen} hold. In fact, if $u$ is a smooth 
solution to (\ref{eq:KGvarphiz}), there are $u_{0},u_{1}\in C^{\infty}(G)$
with the property that for each compact set $K$ and $0\leq l\in\zo$, there is a constant $C_{K,l}$ such that
(\ref{eq:usigmauoasymptoticswithrate}) and
\begin{equation}\label{eq:asymptoticsforucompas}
\|u(\cdot,\sigma)-u_{1}\cdot\sigma-u_{0}\|_{C^{l}(K)}\leq C_{K,l}\ldr{\tau}^{3-2\a_{0}}e^{-2\lambda_{0}\ldr{\tau}^{\a_{0}}}
\end{equation}
hold for all $\tau\leq 0$. 

As in Example~\ref{example:usigmaconvgeneralI}, the assumptions concerning $\varphi_{0}$ can be relaxed. Consider non-Minkowski Bianchi 
type I, II, VI${}_{0}$ and VII${}_{0}$ vacuum developments. Assume the corresponding monotone volume singularity to be silent (this only 
excludes a subset of the locally rotationally symmetric developments, in case they exist for the given Bianchi type). Then the assumptions 
are satisfied with $\a_{0}=1$. Turning to Bianchi type VIII and IX vacuum developments, there are unpublished results by Bernhard Brehm 
demonstrating that (\ref{eq:hasharpsubexpbdint}) holds generically. In fact, there is a subset $\mG$ of the set of Bianchi type IX vacuum 
initial data such that the following holds: $\mG$ has full measure, and for a solution corresponding to initial data in $\mG$, there is a 
constant $C_{0}>0$ with the property that (\ref{eq:hasharpsubexpbdint}) holds for all $\tau\leq 0$, $\lambda_{0}=1$ and $\a_{0}=1/6$ (in fact, 
for any $\a_{0}<1/5$ such a statement holds). The statement concerning Bianchi type VIII is similar. 
\end{example}
\begin{remark}
We justify the statements made in the example in Subsection~\ref{ssection:condresoscillatoryII} below.
\end{remark}

\section{Convergence results II}\label{section:convresII}

Recall the deceleration parameter $q$ introduced in (\ref{eq:qdefinition}). Considering Theorem~\ref{thm:main} and 
Proposition~\ref{prop:fullasymptoticsBVIIIandIXgen}, it is clear that the difference $q-2$ plays an important role in 
the analysis. It is also of interest to note that for large classes of solutions, $q-2$ converges to zero exponentially. 
The following result is therefore of interest. 

\begin{prop}\label{prop:asymptoticsexponconvofqtotwo}
Let $(M,g)$ be a Bianchi spacetime with a monotone volume singularity $t_{-}$. Assume that $g$ solves (\ref{eq:Einsteinsequations}); 
that $\rho\geq\bp$; that $\rho\geq 0$; and that $\Lambda\geq 0$. Let $\varphi_{0}\in C^{\infty}(M)$ be such that it only depends on 
$t$. Assume that there are constants $C_{0}$ and $\eta_{0}>0$ such that 
\begin{equation}\label{eq:qminustwoetcdecexp}
|q(\tau)-2|+\g_{\bS}(\tau)+|\hvph_{0}(\tau)|+\|\ha^{-1}(\tau)\|\leq C_{0}e^{\eta_{0}\tau}
\end{equation}
for all $\tau\leq 0$. Then there is a constant $\eta_{1}>0$ such that the following holds. Given a smooth solution $u$ to 
(\ref{eq:KGvarphiz}), there are $u_{a},u_{b},u_{i}\in C^{\infty}(G)$, $i=0,1$, such that for every compact subset 
$K\subseteq G$, there are constants $C_{K,l}$, depending on $0\leq l\in\zo$, $K$ and $u$, 
such that 
\begin{equation}\label{eq:utauqconvtotwo}
\|u_{\tau}(\cdot,\tau)-u_{1}\|_{C^{l}(K)}+\|u(\cdot,\tau)-u_{1}\tau-u_{0}\|_{C^{l}(K)}\leq C_{K,l}\ldr{\tau}e^{\eta_{0}\tau}
\end{equation}
for all $\tau\leq 0$ and such that 
\begin{equation}\label{eq:usigmaqconvtotwo}
\|u_{\sigma}(\cdot,\sigma)-u_{b}\|_{C^{l}(K)}+\|u(\cdot,\sigma)-u_{b}\sigma-u_{a}\|_{C^{l}(K)}\leq C_{K,l}\ldr{\sigma}e^{\eta_{1}\sigma}
\end{equation}
for all $\sigma\leq 0$.
\end{prop}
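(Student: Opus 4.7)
The plan is to apply Theorem~\ref{thm:main} with a carefully chosen $\mff$, use the exponential decay of the coefficients to upgrade the resulting energy bound to uniform pointwise control on compact sets, integrate the evolution equation twice in $\tau$ to obtain (\ref{eq:utauqconvtotwo}), and then convert to $\sigma$-time using Raychaudhuri's equation.

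First I would set $\mff(\tau):=\sqrt{C_{0}}\,e^{\eta_{0}\tau/2}$, which is smooth, positive, monotone increasing, $L^{1}$ near $-\infty$, and has $\mff'/\mff=\eta_{0}/2$ uniformly bounded. Condition (\ref{eq:hvphzhasharpcond}) follows directly from (\ref{eq:qminustwoetcdecexp}), as does $\g_{\bS}\in L^{1}(-\infty,0]$. Since $|q-2|\leq C_{0}e^{\eta_{0}\tau}$ is itself $L^{1}$, the factor $\exp\bigl(2\int_{\tau}^{0}[2-q(\tau')]d\tau'\bigr)$ in (\ref{eq:energyestimatehigherorderenergiesthm}) is uniformly bounded. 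Hence, for smooth solutions with compactly supported initial data at $\tau=0$, Theorem~\ref{thm:main} gives $\me_{l}[u](\tau)\leq C_{l}\me_{l}[u](0)$ uniformly on $(-\infty,0]$, for every $l$. For a general smooth $u$ and a fixed compact $K\subseteq G$, I would invoke the silent causal structure (developed in Section~\ref{section:geometricbackground}): the past domain of dependence of $K\times[-T,0]$ is contained in some $K'\times(-\infty,0]$ with $K'$ compact, independently of $T$, so truncating the initial data of $u$ outside a neighbourhood of $K'$ produces a solution with compactly supported data that agrees with $u$ on $K\times(-\infty,0]$. Combining the uniform $\me_{l}$-bound for this truncation with Sobolev embedding on $K'$ (used at a sufficiently high order) yields $\|u_{\tau}(\cdot,\tau)\|_{C^{l}(K)}\leq C_{K,l}$ uniformly in $\tau\leq 0$, and integration in $\tau$ gives $\|u(\cdot,\tau)\|_{C^{l}(K)}\leq C_{K,l}\langle\tau\rangle$.

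Next I would rewrite (\ref{eq:KleinGordonConformallyRescaledintro}) as
\[
u_{\tau\tau}=\ha^{ij}e_{i}e_{j}(u)+(q-2)u_{\tau}-2X_{0}(u)+\hvph_{0}u.
\]
Since $\|\ha^{-1}\|$, $|q-2|$, $|\hvph_{0}|$, and $|X_{0}|=|\ha^{ij}\xi_{G,j}|$ are all bounded by $Ce^{\eta_{0}\tau}$, while $u$ and its first two spatial derivatives are controlled on $K$ by $C_{K,l}\langle\tau\rangle$, the preceding bounds yield $\|u_{\tau\tau}(\cdot,\tau)\|_{C^{l}(K)}\leq C_{K,l}\langle\tau\rangle e^{\eta_{0}\tau}$ after commuting $e_{K}$ through the equation (the commutator terms involve structure constants of $\mfg$ and lower-order derivatives, and are absorbed at the same rate). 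Since $\langle\tau\rangle e^{\eta_{0}\tau}$ is integrable down to $-\infty$, the identity $u_{\tau}(\tau)-u_{\tau}(0)=\int_{0}^{\tau}u_{\tau\tau}(\tau')d\tau'$ shows that $u_{\tau}(\cdot,\tau)$ converges in $C^{l}(K)$ to some limit $u_{1}$; taking $l$ arbitrary and patching over compacts gives $u_{1}\in C^{\infty}(G)$. Using $\int_{-\infty}^{\tau}\langle\tau'\rangle e^{\eta_{0}\tau'}d\tau'\leq C\langle\tau\rangle e^{\eta_{0}\tau}$ produces the first half of (\ref{eq:utauqconvtotwo}); integrating $u_{\tau}-u_{1}$ once more gives $u_{0}\in C^{\infty}(G)$ and the second half at the same rate.

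To transfer the conclusion to $\sigma$-time I would use $\det a=e^{6\tau}$, so that $d\sigma/d\tau=e^{-3\tau}/\theta$ from (\ref{eq:dsigmadtdefrel}). Raychaudhuri's equation reads $d\theta/d\tau=-(1+q)\theta$ in $\tau$-time, hence $\theta(\tau)e^{3\tau}=\theta(0)\exp\bigl(\int_{0}^{\tau}[2-q(\tau')]d\tau'\bigr)\to\theta_{\infty}>0$ with exponential rate $O(e^{\eta_{0}\tau})$. Thus $d\sigma/d\tau=1/\theta_{\infty}+O(e^{\eta_{0}\tau})$, and integrating yields $\sigma=\tau/\theta_{\infty}+c_{*}+O(e^{\eta_{0}\tau})$ for some constant $c_{*}$. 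Inverting this relation, substituting into (\ref{eq:utauqconvtotwo}), and setting $u_{b}:=\theta_{\infty}u_{1}$ and $u_{a}:=u_{0}-\theta_{\infty}c_{*}u_{1}$, one obtains (\ref{eq:usigmaqconvtotwo}) for any $\eta_{1}\in(0,\eta_{0}/\theta_{\infty})$. The main obstacle I expect is the second step: passing from the $L^{2}$-type control provided by $\me_{l}$ to uniform pointwise $C^{l}$-bounds on compact subsets of $G$ for a general (not compactly supported) smooth $u$. This is precisely where the silent-causal-structure cutoff of the initial data must be combined with a Sobolev embedding that is uniform in $\tau$; once this is done, everything else reduces to integrating an $\langle\tau\rangle e^{\eta_{0}\tau}$ right-hand side twice and tracking the change of time coordinate.
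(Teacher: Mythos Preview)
Your proposal is correct and follows essentially the same route as the paper's proof: localise via the silent causal structure (Lemma~\ref{lemma:localisationuntilthesingularity}) to reduce to compactly supported data, apply the higher-order energy estimate (Corollary~\ref{cor:energyestimatehigherorderenergies}, of which Theorem~\ref{thm:main} is the statement) with $\mff(\tau)=e^{\eta_{0}\tau/2}$ and the integrability of $q-2$ to obtain a uniform bound on $\me_{l}$, deduce linear growth of the undifferentiated solution, feed this back into (\ref{eq:KleinGordonConformallyRescaledintro}) to bound $u_{\tau\tau}$ by $\langle\tau\rangle e^{\eta_{0}\tau}$, integrate twice, and finally convert to $\sigma$-time via the affine relation $\sigma=C_{a}\tau+C_{b}+O(e^{\eta_{0}\tau})$ coming from (\ref{eq:dsigmadtauprimaryrelation}). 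The only cosmetic difference is that the paper keeps all estimates in $L^{2}(G)$ until the final Sobolev step, whereas you pass to $C^{l}(K)$ immediately after the energy bound; both orderings are equivalent here.
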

\begin{remarks}
The proof is to be found in Subsection~\ref{ssection:limitsinamodelcase} below. In (\ref{eq:utauqconvtotwo}), it is assumed that $u$ is considered 
to be a function of the time coordinate $\tau$ introduced in Definition~\ref{def:monotonevolumesingularity}. In (\ref{eq:usigmaqconvtotwo}), it is 
assumed that $u$ is considered to be a function of the time coordinate $\sigma$ introduced in connection with (\ref{eq:dsigmadtdefrel}). 
Note also that, under the assumptions of the proposition, $\sigma$ and $\tau$ are effectively related by an affine transformation; cf. 
Subsection~\ref{ssection:limitsinamodelcase} below, in particular (\ref{eq:sigmataurelconvtotwo}). Therefore, the asymptotics with respect to 
the $\tau$-time and the $\sigma$-time are effectively the same. 
\end{remarks}
\begin{remark}
For generic Bianchi type IX orthogonal perfect fluids with $1\leq\g<2$, $q$ does not converge; cf., e.g., \cite{BianchiIXattr}. In particular, the 
proposition does not apply in that case. However, as illustrated by Proposition~\ref{prop:fullasymptoticsBVIIIandIXgen} and 
Example~\ref{example:fullasymptotics}, the asymptotics can nevetheless be the same with respect to the $\sigma$-time, even though the error 
estimate might be worse. In Section~\ref{section:convresIII} below, we also consider the case that $q$ converges to a limit different 
from $2$. 
\end{remark}

\subsection{Applications to stiff fluids}

Proposition~\ref{prop:asymptoticsexponconvofqtotwo} applies in several different situations. We begin by considering stiff fluids. 

\begin{example}[Stiff fluids]\label{example:asymptoticsstifffluid}
Let $(G,\bge,\bk,\rho_{0})$ be Bianchi orthogonal perfect fluid initial data in the sense of Definition~\ref{def:Bianchiid}. Assume, moreover, 
that $G$ is not of type VI${}_{-1/9}$; that $\rho_{0}>0$; and that $\g=2$ (i.e., that the fluid is stiff). The reason we exclude Bianchi type VI${}_{-1/9}$ 
is that we are unaware of any results concerning the asymptotics of Bianchi type VI${}_{-1/9}$ spacetimes. Let $(M,g)$ be the Bianchi class A/Bianchi 
class B development corresponding to the given initial data; cf. Subsections~\ref{ssection:BianchiclassA} and \ref{ssection:BianchiclassB}. If $G$ is 
not of Bianchi type IX, the development can be time oriented so that $t_{-}$ is a monotone volume singularity and if $G$ is of Bianchi type IX, $t_{-}$ 
and $t_{+}$ are both monotone volume singularities; cf. Subsections~\ref{ssection:BianchiclassA} and \ref{ssection:BianchiclassB}. Fix a 
$\varphi_{0}$ depending only on $t$. Fix, moreover, a monotone volume singularity, say $t_{-}$, and assume that there is a $t_{0}\in I$ and constants 
$C_{\varphi}$ and $\eta_{\varphi}>0$ such that 
\begin{equation}\label{eq:varphizeroestimate}
|\varphi_{0}(t)|\leq C_{\varphi}[\det a(t)]^{-1+\eta_{\varphi}}
\end{equation}
for all $t\in (t_{-},t_{0})$. Then the conditions of Proposition~\ref{prop:asymptoticsexponconvofqtotwo} are fulfilled, so that the conclusions apply
to solutions to (\ref{eq:KGvarphiz}) in the direction of the monotone volume singularity $t_{-}$. We justify this statement 
in Subsection~\ref{ssection:proofsstifffluidcase} below. 
\end{example}

Even though it is of interest to study the Klein-Gordon equation on stiff fluid backgrounds, it is worth pointing out that there are several 
full non-linear results concerning Einstein's equations in the stiff fluid setting. 

\begin{remark}[Full non-linear results]
There is a construction of solutions to the Einstein-stiff fluid equations with prescribed asymptotics by Lars Andersson and Alan 
Rendall; cf. \cite{aar}. Note, however, that this construction is restricted to the case that $\bK^{i}_{\phantom{i}j}:=\bk^{i}_{\phantom{i}j}/\theta$ converges 
to a positive definite matrix, a restriction we do not impose in Example~\ref{example:asymptoticsstifffluid}. There is also a proof of full non-linear 
stability of Big Bang formation in the Einstein-stiff fluid setting by Igor Rodnianski and Jared 
Speck; cf. \cite{rasql,rasq}. This result concerns perturbations of isotropic Bianchi type I solutions. In particular, the behaviour of solutions is such that 
$\bK$ is asymptotically close to the identity multiplied by $1/3$. In \cite{specks3}, Jared Speck also demonstrates past and future global 
non-linear stability of Big Bang/Big Crunch formation in the case of isotropic Bianchi type IX solutions. In particular, $\bK$ is again asymptotically close 
to a multiple of the identity. 
\end{remark}

\subsection{Bianchi class A developments, non-stiff fluids}\label{ssection:BianchiAgeneric}
Next, we apply Proposition~\ref{prop:asymptoticsexponconvofqtotwo} to Bianchi class A non-stiff fluids. In the oscillatory setting, $q$ is not 
expected to converge. For that reason, we exclude Bianchi types VIII and IX from the current discussion. This leaves Bianchi types I, II, VI${}_{0}$ 
and VII${}_{0}$. To begin with, we consider the vacuum setting. 

\begin{example}[Non-oscillatory Bianchi class A vacuum developments]\label{example:nonoscBclassAdev}
Let $(G,\bge,\bk,\rho_{0})$ be Bianchi orthogonal perfect fluid initial data in the sense of Definition~\ref{def:Bianchiid}. Assume that $G$ is a 
unimodular Lie group and that $\rho_{0}=0$; i.e., the data are Bianchi class A vacuum initial data. Assume, moreover, that $G$ is neither of type VIII or IX. 
Let $(M,g)$ be the corresponding Bianchi class A development, cf. Subsection~\ref{ssection:BianchiclassA}, and assume that it is not a quotient of 
Minkowski space. Then, by an appropriate choice of time orientation, the development is past causally geodesically incomplete and future causally 
geodesically complete. Moreover, the incomplete direction corresponds to a monotone volume singularity, say, $t_{-}$. Assume this monotone volume singularity
to be silent. Finally, fix a $\varphi_{0}$ depending only on $t$, and assume that there is a $t_{0}\in I$ and constants $C_{\varphi}$ and $\eta_{\varphi}>0$ such 
that (\ref{eq:varphizeroestimate}) holds for all $t\in (t_{-},t_{0})$. Then the conditions of Proposition~\ref{prop:asymptoticsexponconvofqtotwo} are 
fulfilled, so that the conclusions apply to solutions to (\ref{eq:KGvarphiz}) in the direction of the monotone volume singularity 
$t_{-}$. We justify this statement in Subsection~\ref{ssection:asymptoticsinthenonstifffluidcase} below. 
\end{example}
\begin{remark}
There are some Bianchi class A vacuum developments that can be extended through Cauchy horizons; cf., e.g., \cite[Theorem~24.12, p.~258]{minbok}. 
One example is the flat Kasner solution, which is discussed in greater detail in Section~\ref{section:nonsilentexample} below. 
In Example~\ref{example:nonoscBclassAdev}, these developments are excluded by the requirement that the monotone volume singularity be silent; cf. 
Subsection~\ref{ssection:genobBianchiA} below for details. 
\end{remark}

Next, we consider Bianchi class A orthogonal perfect fluids with $2/3<\g<2$. Again, we exclude Bianchi types VIII and IX for reasons mentioned above. 
However, we also exclude Bianchi type VI${}_{0}$, since we are unaware of any appropriate results in the literature. On the other hand, we expect the 
relevant conclusions concerning solutions to the Klein-Gordon equation to hold in the Bianchi type VI${}_{0}$ setting as well. 

\textit{Generic Bianchi class A solutions.}
Unfortunately, the above exclusions are not enough. In order to explain why, it is of interest to note that in the Bianchi class A orthogonal perfect fluid
setting, there is a formulation of the equations due to Wainwright and Hsu, cf. \cite{waihsu89}. The formulation involves scale free variables $\Sigma_{\pm}$, 
$N_{i}$, $i=1,2,3$, and $\Omega$; we describe some of their properties in Subsection~\ref{ssection:genobBianchiA} below. Note, however, that the variables 
are also introduced in \cite[p.~487]{BianchiIXattr}. In the Wainwright-Hsu formulation of the equations, there are fixed points 
$F$ and $P_{i}^{+}(II)$, $i=1,2,3$, defined as follows (cf. \cite[Definition~4.1, p.~417]{BianchiIXattr}).

\begin{definition}\label{def:fixedpoints}
The critical point $F$ is defined by the condition that $\Omega=1$; $\Sigma_{+}=\Sigma_{-}=0$; and $N_{i}=0$, $i=1,2,3$. In case $2/3<\g<2$, the critical 
point $P_{1}^{+}(II)$ is defined to be the type II point with $\Sigma_{-}=0$; $N_{1}>0$; $\Sigma_{+}=(3\g-2)/8$; and $\Omega=1-(3\g-2)/16$. 
\end{definition}
\begin{remark}
Combining the requirement that $P_{1}^{+}(II)$ be a type II point with the requirement that $N_{1}>0$ implies that $N_{2}=N_{3}=0$. Moreover, the Hamiltonian
constraint, cf., e.g., \cite[(11), p.~415]{BianchiIXattr}, together with the requirements stated in the definition determines the value of $N_{1}$ uniquely. 
\end{remark}
\begin{remark}
The fixed points $P_{i}^{+}(II)$, $i=2,3$, are defined analogously; cf. \cite[Definition~4.1, p.~417]{BianchiIXattr}.
\end{remark}
Note that the fixed point $F$ corresponds to the isotropic Bianchi type I solutions. Moreover, in the case of both $F$ and $P_{i}^{+}(II)$, the deceleration 
parameter is constant. However, the relevant constant is different from $2$, assuming $\g<2$. In particular, it is thus clear that 
Proposition~\ref{prop:asymptoticsexponconvofqtotwo} does not apply. In addition to the fixed points introduced in Definition~\ref{def:fixedpoints}, there 
are solutions converging to them. Let $\mP_{\roII}$ and $\mP_{\roVIIz}$ be the subsets of Bianchi type II and VII${}_{0}$ initial data such that the 
corresponding solutions converge to one of the $P_{i}^{+}(II)$, $i=1,2,3$. Then $\mP_{\roII}$ consists of points (in a state space of dimension $3$); 
and $\mP_{\roVIIz}$ is a $C^{1}$-submanifold of dimension $1$ (in a state space of dimension $4$); cf. \cite[p.~417]{BianchiIXattr} for a justification. 
In particular, the set of solutions converging to one of the $P_{i}^{+}(II)$, $i=1,2,3$, is clearly non-generic. 

Next, let $\mf_{\roI}$, $\mf_{\roII}$ and $\mf_{\roVIIz}$ be the subsets of Bianchi type I, II and VII${}_{0}$ initial data respectively such that the 
corresponding solutions converge to $F$. Then $\mf_{\roI}$ consists of the point $F$ (in a state space of dimension $2$); $\mf_{\roII}$  is a 
$C^{1}$-submanifold of dimension $1$ (in a state space of dimension $3$); and $\mf_{\roVIIz}$ is a $C^{1}$-submanifold of dimension $2$ (in a state space of 
dimension $4$); cf. \cite[p.~417]{BianchiIXattr} for a justification. Again, it is clear that the set of solutions converging to $F$ is non-generic. 

Due to the above observations, it is convenient to refer to Bianchi type I, II and VII${}_{0}$ initial data that do not belong to any of 
$\mf_{\roI}$, $\mf_{\roII}$, $\mf_{\roVIIz}$, $\mP_{\roII}$ or $\mP_{\roVIIz}$ as \textit{generic}. We return to a discussion of 
non-generic initial data in Section~\ref{section:convresIII} below. However, we here focus on the generic case. 

\begin{example}[Generic Bianchi type I, II and VII${}_{0}$ developments]\label{example:genericBianchiclassA}
Let $(G,\bge,\bk,\rho_{0})$ be Bianchi orthogonal perfect fluid initial data in the sense of Definition~\ref{def:Bianchiid}. Assume, moreover, that $G$ is a 
unimodular Lie group of type I, II or VII${}_{0}$; that $\rho_{0}>0$; and that $2/3<\g<2$. Assume, finally, that the initial data are generic in the sense 
described immediately prior to the statement of the present example. Let $(M,g)$ be the corresponding Bianchi class A development; cf. 
Subsection~\ref{ssection:BianchiclassA}. Then, by an appropriate choice of time orientation, the development is past causally 
geodesically incomplete and future causally geodesically complete. Moreover, the incomplete direction corresponds to a monotone volume singularity, say, 
$t_{-}$. Assume the monotone volume singularity to be silent (this only excludes a subset of the locally rotationally
symmetric developments). 
Finally, fix a $\varphi_{0}$ depending only on $t$, and assume that there is a $t_{0}\in I$ and constants $C_{\varphi}$ and $\eta_{\varphi}>0$ such 
that (\ref{eq:varphizeroestimate}) holds for all $t\in (t_{-},t_{0})$. Then the conditions of Proposition~\ref{prop:asymptoticsexponconvofqtotwo} are 
fulfilled, so that the conclusions apply to solutions to (\ref{eq:KGvarphiz}) in the direction of the monotone volume singularity 
$t_{-}$. We justify this statement in Subsection~\ref{ssection:asymptoticsinthenonstifffluidcase} below. 
\end{example}

\subsection{Bianchi class B developments, non-stiff fluids}

As a final application of Proposition~\ref{prop:asymptoticsexponconvofqtotwo}, we consider non-exceptional Bianchi class B developments in the 
non-stiff fluid setting. We restrict our attention to the case that $0\leq\g<2/3$. The reason for this is that there are results in this setting 
due to \cite{RadermacherNonStiff}. It would be of interest to extend the results of \cite{RadermacherNonStiff} to $2/3\leq\g<2$. However, we do not
attempt to do so here. 

\begin{example}[Non-exceptional Bianchi class B developments]\label{example:nonexcBianchiclassB}
Let $(G,\bge,\bk,\rho_{0})$ be Bianchi orthogonal perfect fluid initial data in the sense of Definition~\ref{def:Bianchiid}. Assume, moreover, 
that $G$ is a non-exceptional Bianchi class B Lie group and that $0\leq\g<2/3$. Let $(M,g)$ be the corresponding Bianchi class B development; cf. 
Subsection~\ref{ssection:BianchiclassB}. Then, by an appropriate choice of time orientation, the development is past causally 
geodesically incomplete and future causally geodesically complete. Moreover, the incomplete direction corresponds to a monotone volume singularity, say, 
$t_{-}$. Assume this volume singularity to be silent. Assume, moreover, that the development is not a locally rotationally symmetric Bianchi type
VI$_{-1}$ development. Finally, fix a $\varphi_{0}$ depending only on $t$, and assume that there is a $t_{0}\in I$ and 
constants $C_{\varphi}$ and $\eta_{\varphi}>0$ such that (\ref{eq:varphizeroestimate}) holds for all $t\in (t_{-},t_{0})$. Then the conditions of 
Proposition~\ref{prop:asymptoticsexponconvofqtotwo} are fulfilled, so that the conclusions apply to solutions to 
(\ref{eq:KGvarphiz}) in the direction of the monotone volume singularity $t_{-}$. We justify this statement in 
Subsection~\ref{ssection:asymptoticsinthenonstifffluidcase} below. 
\end{example}

\section{Convergence results III}\label{section:convresIII}

Due to the results of Section~\ref{section:convresII}, it is clear that solutions to the Klein-Gordon equation on non-oscillatory Bianchi orthogonal 
perfect fluid backgrounds quite generally have the asymptotics described in Proposition~\ref{prop:asymptoticsexponconvofqtotwo}. However, for 
Proposition~\ref{prop:asymptoticsexponconvofqtotwo} to be applicable, the deceleration parameter $q$ introduced in (\ref{eq:qdefinition}) has to 
converge to $2$. On the other hand, even in the generic oscillatory Bianchi type VIII and IX vacuum settings, the asymptotics with respect to 
$\sigma$-time are as described in Proposition~\ref{prop:asymptoticsexponconvofqtotwo}, though the estimate concerning the error term is worse; cf. 
Example~\ref{example:fullasymptotics}. This is true in spite of the fact that $q$ does not converge in the oscillatory setting. Finally, in the 
analysis in Subection~\ref{ssection:BianchiAgeneric}, we exclude a non-generic subset. In the corresponding Bianchi developments $q$ converges, but 
to a $q_{\infty}\neq 2$. In the present section, 
we therefore consider the case that $q$ converges to a number different from $2$. We also apply the corresponding result to the Klein-Gordon 
equation on backgrounds corresponding to the non-generic subset excluded in Subsection~\ref{ssection:BianchiAgeneric}. 

\begin{prop}\label{prop:asymptoticsexponconvofqtoqinfdifffromtwo}
Let $(M,g)$ be a Bianchi spacetime with a monotone volume singularity $t_{-}$. Assume that $g$ solves (\ref{eq:Einsteinsequations}); 
that $\rho\geq\bp$; that $\rho\geq 0$; and that $\Lambda\geq 0$. Let $\varphi_{0}\in C^{\infty}(M)$ be such that it only depends on 
$t$. Assume that there are constants $C_{0}$ and $\eta_{0}>0$ such that 
\begin{equation}\label{eq:qminusqinfetcdecexp}
\g_{\bS}(\tau)+|\hvph_{0}(\tau)|+\|\ha^{-1}(\tau)\|\leq C_{0}e^{\eta_{0}\tau}
\end{equation}
for all $\tau\leq 0$. Assume, moreover, that there is a $q_{\infty}<2$ such that $q(\tau)\rightarrow q_{\infty}$ as 
$\tau\rightarrow-\infty$ and such that 
\[
\int_{\tau}^{0}[q(\tau')-q_{\infty}]d\tau'
\]
converges as $\tau\rightarrow-\infty$. Finally, let $\eta_{c}$ be defined by 
\begin{equation}\label{eq:etacdef}
\eta_{c}:=\frac{\eta_{0}}{2-q_{\infty}}.
\end{equation}
If $\eta_{c}\leq 1$, then, given a smooth solution $u$ to (\ref{eq:KGvarphiz}), there is a function 
$u_{1}\in C^{\infty}(G)$ such that for every compact subset $K\subseteq G$, there are constants $C_{K,l}$, depending on 
$0\leq l\in\zo$, $K$ and $u$, such that 
\begin{equation}\label{eq:usigmaasetacleqone}
\|u_{\sigma}(\cdot,\sigma)-u_{1}\|_{C^{l}(K)}\leq C_{K,l}\ldr{\sigma}^{-\eta_{c}}
\end{equation}
for all $\sigma\leq 0$. If $\eta_{c}>1$, then, given a smooth solution $u$ to (\ref{eq:KGvarphiz}), there 
are $u_{i}\in C^{\infty}(G)$, $i=0,1$, such that for every compact subset $K\subseteq G$, there are constants $C_{K,l}$, depending on 
$0\leq l\in\zo$, $K$ and $u$, such that 
\[
\ldr{\sigma}\|u_{\sigma}(\cdot,\sigma)-u_{1}\|_{C^{l}(K)}+\|u(\cdot,\sigma)-u_{1}\sigma-u_{0}\|_{C^{l}(K)}\leq C_{K,l}\ldr{\sigma}^{1-\eta_{c}}
\]
for all $\sigma\leq 0$. 
\end{prop}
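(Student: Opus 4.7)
The strategy is to combine Theorem~\ref{thm:main} with direct integration in the $\sigma$-time coordinate, in which the Klein--Gordon equation takes a particularly clean form. A chain-rule calculation starting from (\ref{eq:KleinGordonConformallyRescaledintro}) and based on the relations $u_\sigma=Fu_\tau$, $F:=d\tau/d\sigma=\theta e^{3\tau}$, and $F_\tau=(2-q)F$ (the latter is a rewriting of Raychaudhuri's identity (\ref{eq:Raychaudhurirelpropertimeintro})) shows that the friction term $(q-2)u_\tau$ in (\ref{eq:KleinGordonConformallyRescaledintro}) exactly cancels the $FF_\tau u_\tau$ contribution to $u_{\sigma\sigma}=F^2 u_{\tau\tau}+FF_\tau u_\tau$. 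The Klein--Gordon equation therefore reduces in $\sigma$-time to the pure second-derivative form
\[
u_{\sigma\sigma}=F^2\bigl[\ha^{ij}e_i e_j u-2X_0(u)+\hvph_0 u\bigr],
\]
and the conclusions will be obtained by integrating this identity once or twice in $\sigma$.

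Setting $\mff(\tau):=C_{\mff}e^{\eta_0\tau/2}$ with $C_{\mff}$ large enough that $|\hvph_0|+\|\ha^{-1}\|\le\mff^2$, one verifies the hypotheses of Subsection~\ref{ssection:energiesintro}, while (\ref{eq:qminusqinfetcdecexp}) also gives $\g_{\bS}\in L^1(-\infty,0]$. Since the assumption of the proposition ensures that $\int_\tau^0(q-q_\infty)d\tau'$ converges, one has $\int_\tau^0(2-q)d\tau'=-(2-q_\infty)\tau+O(1)$, so Theorem~\ref{thm:main} yields
\[
\me_l[u](\tau)\le C_l\,e^{-2(2-q_\infty)\tau},\qquad \tau\le 0.
\]
Integrating $\theta_\tau=-(1+q)\theta$ and using the same boundedness gives $\theta(\tau)\asymp e^{-(1+q_\infty)\tau}$ and hence $F\asymp e^{(2-q_\infty)\tau}$; integration of $d\sigma/d\tau=F^{-1}$ then yields $|\sigma|\asymp e^{(q_\infty-2)\tau}$, i.e., $\tau=-(2-q_\infty)^{-1}\ln|\sigma|+O(1)$ as $\tau\to-\infty$. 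In particular $e^{\eta_0\tau}\asymp\ldr{\sigma}^{-\eta_c}$ and $F^2\asymp\ldr{\sigma}^{-2}$.

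The energy bound translates to uniform $H^l$-control of $u_\sigma(\cdot,\sigma)$ for compactly supported initial data. For a general smooth $u$, a finite-speed-of-propagation argument based on the causal structure in the silent setting (Section~\ref{section:geometricbackground}) reduces bounds on any fixed compact $K\subseteq G$ to that case. Sobolev embedding then yields $\|u_\sigma(\cdot,\sigma)\|_{C^l(K)}\le C_{K,l}$ and, by integration in $\sigma$ from $0$, $\|u(\cdot,\sigma)\|_{C^l(K)}\le C_{K,l}(1+|\sigma|)$. Substituting these bounds into the simplified equation above, together with $F^2\|\ha^{-1}\|\le C\ldr{\sigma}^{-2-\eta_c}$, $F^2|\hvph_0|\le C\ldr{\sigma}^{-2-\eta_c}$ and the analogous bound for $F^2X_0(u)$ (using $|X_0(u)|\le\|\ha^{-1}\|\,|\xi_G|\,|\nabla u|$), produces
\[
\|u_{\sigma\sigma}(\cdot,\sigma)\|_{C^l(K)}\le C_{K,l}\ldr{\sigma}^{-1-\eta_c}.
\]
Since $\ldr{\sigma}^{-1-\eta_c}$ is integrable near $-\infty$ for every $\eta_c>0$, a Cauchy argument in $C^l$ yields a limit $u_1:=\lim_{\sigma\to-\infty}u_\sigma$, smooth because the convergence is $C^l$ for every $l$, satisfying $\|u_\sigma(\cdot,\sigma)-u_1\|_{C^l(K)}\le C_{K,l}'\ldr{\sigma}^{-\eta_c}$. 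This is the statement in the case $\eta_c\le 1$. If $\eta_c>1$, the bound $\ldr{\sigma}^{-\eta_c}$ is itself integrable near $-\infty$, so a second integration produces a smooth $u_0:=\lim_{\sigma\to-\infty}(u-u_1\sigma)$ with $\|u(\cdot,\sigma)-u_1\sigma-u_0\|_{C^l(K)}\le C_{K,l}''\ldr{\sigma}^{1-\eta_c}$, which yields the second stated inequality.

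The principal technical obstacle is closing the higher-order bookkeeping at the sharp rate $\ldr{\sigma}^{-\eta_c}$, which requires commuting the frame derivatives $e_K$ through (\ref{eq:KleinGordonConformallyRescaledintro}) and controlling the resulting commutator terms---they involve the structure constants of $\mfg$ and additional factors of $\ha^{-1}$---using only the exponentially decaying estimate (\ref{eq:qminusqinfetcdecexp}). The threshold $\eta_c=1$ in the statement is precisely the boundary for integrability of $\ldr{\sigma}^{-\eta_c}$ near $-\infty$, so no separate treatment of the borderline case is needed; the subtlety that $q-q_\infty$ is only conditionally integrable (and need not tend to zero in $L^1$) never intervenes directly, since the argument depends only on boundedness of the integral $\int_\tau^0(q-q_\infty)d\tau'$.
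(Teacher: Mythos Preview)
Your proposal is correct and follows essentially the same route as the paper: pass to the $\sigma$-time equation (\ref{eq:KleinGordonwrtsigmatime}), use the convergence of $\int_\tau^0(q-q_\infty)d\tau'$ to relate $\tau$ and $\sigma$ and to convert the exponential decay (\ref{eq:qminusqinfetcdecexp}) into the power-law bound $\|\cha^{-1}\|+|\cvph_0|\leq C\ldr{\sigma}^{-2-\eta_c}$, invoke Corollary~\ref{cor:energyestimatehigherorderenergies} (via the localisation argument of Proposition~\ref{prop:asymptoticsexponconvofqtotwo}) to bound $u_\sigma$ and $u$, and then read off $\|u_{\sigma\sigma}\|_{C^l(K)}\leq C_{K,l}\ldr{\sigma}^{-1-\eta_c}$ and integrate once or twice. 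The only cosmetic difference is that you apply Sobolev embedding before substituting into the equation, whereas the paper estimates $\|e_L\partial_\sigma^2 u_a\|_{L^2(G)}$ first and then embeds; both orderings are valid and the higher-order commutator bookkeeping you flag is precisely what the paper handles via Lemma~\ref{lemma:higherordercommutators} and (\ref{eq:eKXzerocomm}).
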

\begin{remark}
The proof is to be found in Subsection~\ref{ssection:qconvtoqinfdifftwo} below. 
\end{remark}
\begin{remark}
In Proposition~\ref{prop:asymptoticsexponconvofqtotwo}, we consider the case that $q$ converges to $2$. In 
Proposition~\ref{prop:asymptoticsexponconvofqtoqinfdifffromtwo}, we consider the case that $q$ converges to $q_{\infty}<2$. Naively, it is then 
of interest to ask what happens if $q$ converges to $q_{\infty}>2$. Note, however, that if $\Lambda\geq 0$ and $\rho\geq\bp$ (the latter inequality
follows from the dominant energy condition), then Remark~\ref{remark:boundonqminustwo} below implies that $q-2\leq 3\g_{\bS}$. In other words, 
$q\leq 2$ for all Bianchi types except IX. Moreover, as was noted in the comments following the statement of Theorem~\ref{thm:main}, 
$\g_{\bS}$ converges to zero even in the case of Bianchi type IX orthogonal perfect fluid developments. If $q$ converges, the limit should 
thus belong to the disc of radius $2$. In Proposition~\ref{prop:asymptoticsexponconvofqtotwo} we consider convergence to the boundary, and in 
Proposition~\ref{prop:asymptoticsexponconvofqtoqinfdifffromtwo} we consider convergence to the interior. 
\end{remark}
\begin{remark}
In case $\eta_{c}\leq 1$, the asymptotics are incomplete. On the other hand, inserting the information that (\ref{eq:usigmaasetacleqone}) holds
into the equation (strictly speaking, the estimate (\ref{eq:usigmaasetacleqone}), as well as an integrated version of it, should be inserted into 
(\ref{eq:KleinGordonwrtsigmatime}) below) yields improved asymptotics; cf. the proof in Subsection~\ref{ssection:qconvtoqinfdifftwo} below. 
\end{remark}

Next, we consider the non-generic subset excluded in Subsection~\ref{ssection:BianchiAgeneric}. 

\begin{example}[Non-generic Bianchi class A developments]\label{example:nongenBclassAdev}
Let $(G,\bge,\bk,\rho_{0})$ be Bianchi orthogonal perfect fluid initial data in the sense of Definition~\ref{def:Bianchiid}. Assume that $G$ is 
unimodular and that $\rho_{0}>0$.
If $G$ is not of type IX, assume that $2/3<\g<2$. If $G$ is of type IX, assume that $1\leq\g<2$. Let $(M,g)$ be the corresponding Bianchi class 
A development; cf. Subsection~\ref{ssection:BianchiclassA}. If $G$ is of type IX, the corresponding development has a monotone volume singularity
both to the future and to the past. Fix, in that case, one monotone volume singularity, say $t_{-}$. If $G$ is not of type IX, then the development
can be assumed to be past causally geodesically incomplete and future causally geodesically complete. Moreover, $t_{-}$ is a monotone volume singularity. 
Fix a bounded $\varphi_{0}$ depending only on $t$. Next, we consider two cases:
\begin{itemize}
\item Assume the asymptotics in the direction of the monotone volume singularity $t_{-}$ to be such that the Wainwright-Hsu variables converge to one 
of the $P_{i}(II)$, $i=1,2,3$. Then the conditions of Proposition~\ref{prop:asymptoticsexponconvofqtoqinfdifffromtwo} are satisfied with 
$q_{\infty}=(3\g-2)/2$ and $\eta_{0}=3q_{\infty}/2$. In particular, if $\g\leq 6/5$, then $\eta_{c}\leq 1$ and the first conclusion of the proposition holds. 
If $6/5<\g<2$, then the second conclusion holds. 
\item Assume the asymptotics in the direction of the monotone volume singularity $t_{-}$ to be such that the Wainwright-Hsu variables converge to 
the fixed point $F$. Then the conditions of Proposition~\ref{prop:asymptoticsexponconvofqtoqinfdifffromtwo} are satisfied with 
$q_{\infty}=(3\g-2)/2$ and $\eta_{0}=2q_{\infty}$. In particular, if $\g\leq 10/9$, then $\eta_{c}\leq 1$ and the first conclusion of the proposition holds. 
If $10/9<\g<2$, then the second conclusion holds. 
\end{itemize}
We justify the above statements in Section~\ref{section:proofsIII} below. 
\end{example}
\begin{remark}\label{remark:afaf}
Isotropic Bianchi type I backgrounds are also discussed in \cite{afaf}. Note that these backgrounds correspond to the fixed point $F$ itself. 
In that case, the metric coefficients are explicit functions of proper time. In \cite{afaf}, the question of blowup of solutions is also addressed. 
We return to this topic in Section~\ref{section:blowupintro} below. 
\end{remark}

\section{A non-silent example}\label{section:nonsilentexample}
In order to contrast the asymptotics derived in the present paper with the non-silent setting, it is of interest to describe the 
asymptotics of solutions to the Klein-Gordon equation in the case of a flat Kasner background. Here, we therefore consider
\begin{equation}\label{eq:gfKdef}
g_{\rofK}:=-dt\otimes dt+t^{2}dx\otimes dx+dy\otimes dy+dz\otimes dz
\end{equation}
on $M_{\rofK}:=\tn{3}\times (0,\infty)$. In this case the Klein-Gordon equation $\Box_{g}u-m^{2}u=0$ (where $m$ is a constant)
can be written 
\[
t\d_{t}(tu_{t})-u_{xx}-t^{2}u_{yy}-t^{2}u_{zz}+m^{2}t^{2}u=0.
\]
On the other hand, the logarithmic volume density corresponding to (\ref{eq:gfKdef}) is $\tau=\ln t/3$. With respect to this 
time coordinate, the Klein-Gordon equation takes the form 
\[
u_{\tau\tau}-9u_{xx}-9e^{6\tau}u_{yy}-9e^{6\tau}u_{zz}+9m^{2}e^{6\tau}u=0. 
\]
When analysing the asymptotics in the direction $\tau\rightarrow-\infty$, it is convenient to divide a given solution, say $u$, 
into two parts: $u=u_{a}+u_{b}$, where 
\[
u_{a}(x,y,z,\tau):=\frac{1}{2\pi}\int_{0}^{2\pi}u_{a}(x',y,z,\tau)dx';
\]
here we think of $\tn{3}$ as being $[0,2\pi]^{3}$ with the ends identified. Due to arguments similar to those presented in 
\cite[Example~4.19, pp.~41--42]{finallinsys} and \cite[Example~4.20, pp.~42--43]{finallinsys}, there are functions 
$u_{a,i}\in C^{\infty}(\tn{2},\ro)$, $i=0,1$, such that 
\[
\d_{\tau}u_{a}-u_{a,1},\ \ \
u_{a}(\cdot,\tau)-u_{a,1}\tau-u_{a,0}
\]
converge to zero exponentially in any $C^{k}$ norm. In fact, there is a homeomorphism between the initial data to $x$-independent
solutions to the Klein-Gordon equation and the functions $u_{a,i}$, $i=0,1$. 

Turning to $u_{b}$, an argument similar to that presented in \cite[Section~5.5, pp.~56--57]{finallinsys} ensures that there is a 
unique corresponding $U_{b}\in C^{\infty}(\tn{3}\times\ro,\ro)$ with the following properties: $U_{b}$ solves the equation 
$u_{\tau\tau}-9u_{xx}=0$; 
\[
\frac{1}{2\pi}\int_{0}^{2\pi}U_{b}(x',y,z,\tau)dx'=0
\]
for all $\tau$; and 
\[
\d_{\tau}u_{b}-\d_{\tau}U_{b},\ \ \
u_{b}-U_{b}
\]
converge to zero exponentially in any $C^{k}$ norm as $\tau\rightarrow-\infty$. In fact, there is a homeomorphism in the 
$C^{\infty}$-topology corresponding to the map from the initial data for $u_{b}$ to the initial data for $U_{b}$. 

Returning to the solution $u$ to the Klein-Gordon equation, it follows that there are functions 
$u_{a,i}\in C^{\infty}(\tn{2},\ro)$, $i=0,1$, and a $U_{b}\in C^{\infty}(\tn{3}\times\ro,\ro)$, solving the equation 
$U_{\tau\tau}-9U_{xx}=0$ and with zero mean value in the $x$-direction, such that 
\[
\d_{\tau}u-\d_{\tau}U_{b}-u_{a,1},\ \ \
u-U_{b}-u_{a,1}\tau-u_{a,0}
\]
converge to zero exponentially with respect to any $C^{k}$-norm. Moreover, there is a homeomorphism (in the $C^{\infty}$-topology)
taking the initial data of $u$ to $u_{a,0}$, $u_{a,1}$ and the initial data for $U_{b}$. One particular consequence of the above 
observations is that $u_{\tau}$ does, in general, not converge. Moreover, $\sigma$ and $\tau$ are, in the present setting, related
by an affine transformation, so that $u_{\sigma}$ also does not converge in the present setting. 

\section{Blow up of solutions}\label{section:blowupintro}

In \cite{afaf}, the authors discuss the question of blow up of solutions to the wave equation on certain Bianchi type I backgrounds; cf.
Remark~\ref{remark:afaf}. It is conceivable that the arguments presented in the present paper could also be used to derive blow up criteria. 
However, we here focus on conclusions that follow from \cite{finallinsys}. The reason for this is that applying the methods of \cite{finallinsys} 
give an indication of what it would be desirable to prove more generally. Here we restrict ourselves to the special case of the Klein-Gordon 
equation (\ref{eq:KG}) on non-flat Kasner backgrounds (where we allow $m=0$ as well as purely imaginary $m$). In particular, we assume the 
background metric to be given by 
\[
g=-dt\otimes dt+\textstyle{\sum}_{i=1}^{n}t^{2p_{i}}dx^{i}\otimes dx^{i}
\]
on $\tn{n}\times (0,\infty)$, where $2\leq n\in\zo$. Here the $p_{i}$ are constants satisfying 
\[
\textstyle{\sum}_{i=1}^{n}p_{i}=\textstyle{\sum}_{i=1}^{n}p_{i}^{2}=1
\]
and the non-flatness condition corresponds to the requirement that $p_{i}<1$ for all $i$. Letting $\tau:=-\ln t$, the Klein-Gordon equation 
on these backgrounds can be written
\begin{equation}\label{eq:nonflatKasnerKleinGordon}
u_{\tau\tau}-\textstyle{\sum}_{i=1}^{d}e^{2\b_{i}\tau}u_{ii}+m^{2}e^{-2\tau}u=0
\end{equation}
on $M_{\rocon}:=\tn{n}\times \ro$, where $m$ is a constant and $\b_{i}=p_{i}-1<0$. The reader interested in a justification of this statement is 
referred to \cite[Example~4.20, pp.~42--43]{finallinsys}. Note that the choice of time coordinate is such that $\tau\rightarrow\infty$ 
corresponds to the singularity. Let $\mu$ be the smallest of the $-\b_{i}$. Then, given $s$ and a smooth solution $u$ to 
(\ref{eq:nonflatKasnerKleinGordon}), there is a
constant $C_{s}$ (depending only on $s$, the coefficients of the equation and the solution); a constant $N$ (depending only on the 
coefficients of the equation); and functions $v_{\infty},u_{\infty}\in C^{\infty}(\tn{n})$ such that 
\begin{equation}\label{eq:uudotasnonflatKasner}
\left\|\left(\begin{array}{c} u(\cdot,\tau) \\ u_{\tau}(\cdot,\tau)\end{array}\right)
-\left(\begin{array}{c} v_{\infty}\tau+u_{\infty} \\ v_{\infty}\end{array}\right)\right\|_{(s)} \leq C_{s}\ldr{\tau}^{N}e^{-2\mu \tau}
\end{equation} 
for all $\tau\geq 0$. Note that the $v_{\infty}$ appearing in (\ref{eq:uudotasnonflatKasner}) corresponds to the $A_{\mathrm{Kasner}}$ appearing in 
\cite[Theorem~1.1]{afaf}. Due to \cite[Example~4.20, pp.~42--43]{finallinsys}, there is a homeomorphism (with respect to the $C^{\infty}$-topology) taking 
$[u(\cdot,0),u_{\tau}(\cdot,0)]$ to $(v_{\infty},u_{\infty})$, let us denote it by $\Psi_{\infty}$. However, this statement can be improved substantially. 
In fact, for every $s\in\ro$ and $\e>0$, $\Psi_{\infty}$ extends to a bounded linear map
\begin{equation}\label{eq:Psiinfext}
\Psi_{\infty,s,\e}:H^{(s+1/2+\e)}(\tn{n})\times H^{(s-1/2+\e)}(\tn{n})\rightarrow H^{(s)}(\tn{n})\times H^{(s)}(\tn{n});
\end{equation}
cf. Subsection~\ref{ssection:blowupcriteria} below for a justification of this statement.
Note, in particular, that even if $u_{\tau}(\cdot,0)$ only belongs to $H^{(s-1/2+\e)}(\tn{n})$, it follows from the continuity of 
$\Psi_{\infty,s,\e}$ that 
\[
v_{\infty}=\lim_{\tau\rightarrow\infty}u_{\tau}(\cdot,\tau)\in H^{(s)}(\tn{n}). 
\]
In other words, the limit of $u_{\tau}$ is in general more regular than the initial datum for $u_{\tau}$, and the gain is, essentially, half a 
derivative. Let $\Phi_{\infty}:=\Psi_{\infty}^{-1}$. Then, for every $s\in\ro$ and $\e>0$, $\Phi_{\infty}$ extends to a bounded linear map
\begin{equation}\label{eq:Phiinfext}
\Phi_{\infty,s,\e}:H^{(s+1/2+\e)}(\tn{n})\times H^{(s+1/2+\e)}(\tn{n})\rightarrow H^{(s+1)}(\tn{n})\times H^{(s)}(\tn{n});
\end{equation}
cf. Subsection~\ref{ssection:blowupcriteria} below for a justification of this statement.
Combining (\ref{eq:Psiinfext}) and (\ref{eq:Phiinfext}) yields the conclusion that the regularity properties of both $\Psi_{\infty,s,\e}$
and $\Phi_{\infty,s,\e}$ are essentially optimal (up to the loss of an $\e$). 

\textbf{$L^{2}$-blow up.} Turning to blow up criteria, there is, for every $\e>0$, a subset, say $\ma_{\e}$, of the set of smooth initial data.
This set is open with respect to the $H^{(1/2+\e)}(\tn{n})\times H^{(-1/2+\e)}(\tn{n})$-topology and dense with respect to the $C^{\infty}$-topology. 
Moreover, if $u$ is a solution arising from initial data in $\ma_{\e}$, then the corresponding $v_{\infty}$ is such that $\|v_{\infty}\|_{L^{2}}>0$, and 
$\|u(\cdot,\tau)\|_{L^{2}}\rightarrow\infty$ as $\tau\rightarrow\infty$. The reader interested in a justification of these statements is referred to 
Subsection~\ref{ssection:blowupcriteria} below. For comparison, note that \cite[Theorem~1.2]{afaf} yields an open criterion ensuring that 
$\|v_{\infty}\|_{L^{2}}>0$. However, the openness is with respect to the $H^{2}(\tn{n})\times H^{1}(\tn{n})$-topology on initial data. It is of course not
necessary to restrict one's attention to smooth initial data; we encourage the interested reader to derive the relevant statements in lower regularity. 

\textbf{$C^{1}$-blow up.} Next, let us formulate a criterion ensuring $C^{1}$-blow up of the solution. First, let $\mc$ be the subset of $C^{\infty}(\tn{n})$ 
such that $\varphi\in\mc$ if and only if $d\varphi(\bx)\neq 0$ for every $\bx$ such that $\varphi(\bx)=0$. Then there is, for every $\e>0$, a subset, 
say $\mb_{\e}$, of the set of smooth initial data. This set is open with respect to the $H^{((n+3)/2+\e)}(\tn{n})\times H^{((n+1)/2+\e)}(\tn{n})$-topology and 
dense with respect to the $C^{\infty}$-topology. Moreover, if $u$ is a solution arising from initial data in $\mb_{\e}$, then the corresponding 
$v_{\infty}$ belongs to $\mc$. In particular, 
$\Sigma_{\infty}:=v_{\infty}^{-1}(0)$ is a smooth hypersurface in $\tn{n}$. Moreover, if $\bx\notin\Sigma_{\infty}$, then $|u(\bx,\tau)|\rightarrow\infty$
as $\tau\rightarrow\infty$. If, on the other hand, $\bx\in\Sigma_{\infty}$, then $|\bd u(\bx,\tau)|_{\bge_{0}}\rightarrow\infty$ as $\tau\rightarrow\infty$. 
Here $\bd$ is the spatial $d$-operator; in other
words, we consider $u$ to be a function of the spatial variable only and calculate the differential on $\tn{n}$. Moreover, $\bge_{0}$ is the 
standard Riemannian metric on $\tn{n}$. The conclusion is thus that at a given $\bx\in\tn{n}$, either $u$ blows up, or the spatial 
derivative of $u$ blows up. Again, it is of course not necessary to restrict one's attention to smooth initial data. 

\textbf{Comparison.} Comparing (in the special case of (\ref{eq:nonflatKasnerKleinGordon})) the methods and results of \cite{afaf} and the present paper 
with those of \cite{finallinsys}, it is clear that the methods used in the first two papers have an advantage in that one would expect 
them to be more robust. On the other hand, it is unclear how to use energy-type methods to prove continuity results such as (\ref{eq:Psiinfext}). In 
order to emphasize the importance of this continuity, note the following: 
\begin{itemize}
\item The limit of $u_{\tau}$ determines the blow up of solutions.
\item Using the results of \cite{finallinsys}, the limit of $u_{\tau}$ is roughly half a derivative more regular than the 
initial data for the first derivatives. 
\item Due to (\ref{eq:Psiinfext}) and (\ref{eq:Phiinfext}), the gain of half a derivative can be expected to be, essentially, optimal. 
\end{itemize}
It would of course be desirable to obtain continuity as in (\ref{eq:Psiinfext}) using energy-type methods. We leave this as an open problem. 
The interested reader is also encouraged to apply the methods of \cite[Chapter~8]{finallinsys} to the Klein--Gordon equation on other Bianchi type I 
backgrounds.

\section{Geometric background material}\label{section:geometricbackground}

In this section, we collect background information concerning the geometry that will be of importance in what follows. 

\subsection{The Raychaudhuri equation}\label{ssection:raychaudhuri}

In the analysis of the asymptotic behaviour in the direction of the singularity, $\Sigma^{i}_{\phantom{i}j}:=\bsigma^{i}_{\phantom{i}j}/\theta$ 
and $\dot{\theta}/\theta^{2}$ play a central role; here we use the notation introduced in (\ref{eq:thetaandsigmaijdef}) and Latin indices 
are raised and lowered with $\bge$. In the present subsection, we focus on the second of these objects. 

\begin{lemma}
Let $(M,g)$ be a Bianchi spacetime satisfying the Einstein equations (\ref{eq:Einsteinsequations}). Then
\begin{equation}\label{eq:Raychaudhurirelpropertime}
-3\frac{\dot{\theta}}{\theta^{2}}=1+q,
\end{equation}
where we use the notation introduced in (\ref{eq:rescaledSigmaij}) and (\ref{eq:qdefinition}).
\end{lemma}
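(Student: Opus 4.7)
The plan is to derive the desired identity by combining two standard inputs: the geometric Raychaudhuri equation for the timelike geodesic congruence generated by $\d_t$, and Einstein's equations rewritten as an expression for the Ricci component $\roRic(\d_t,\d_t)$ in terms of the matter variables $\rho$, $\bp$ and $\Lambda$. Since $\d_t$ is a unit timelike vector field whose orbits are geodesics (the metric (\ref{eq:Bianchimetricdef}) has no $dt\otimes \xi^i$ cross terms, so $\d_t$ is orthogonal to the level sets $G_t$ and its acceleration vanishes), and since this congruence is hypersurface-orthogonal and therefore has vanishing vorticity, the general Raychaudhuri equation reduces to
\[
\dot{\theta}=-\tfrac{1}{3}\theta^{2}-\bsigma^{ij}\bsigma_{ij}-\roRic(\d_t,\d_t).
\]
I would justify this either by quoting the standard formula or by computing directly from $\theta=\tfrac{1}{2}a^{ij}\dot a_{ij}$ and $\bk_{ij}=\tfrac{1}{2}\dot a_{ij}$, using $\dot a^{ij}=-a^{ik}a^{jl}\dot a_{kl}$ to identify the traceless part of $\dot\bk$ with $\bsigma_{ij}$.

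Next I would process the Einstein equations (\ref{eq:Einsteinsequations}). Taking the metric trace gives $S=4\Lambda-T$, where $T=g^{\a\b}T_{\a\b}=-\rho+a^{ij}T_{ij}=-\rho+3\bp$ in view of (\ref{eq:rhobpdef}). Substituting back,
\[
\roRic(\d_t,\d_t)=T_{00}+\tfrac{1}{2}S g_{00}-\Lambda g_{00}=\rho+\tfrac{1}{2}(T-2\Lambda)=\tfrac{1}{2}(\rho+3\bp-2\Lambda).
\]
Inserting this into the Raychaudhuri identity yields
\[
\dot{\theta}=-\tfrac{1}{3}\theta^{2}-\bsigma^{ij}\bsigma_{ij}-\tfrac{1}{2}(\rho+3\bp-2\Lambda).
\]

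Finally, dividing through by $-\theta^{2}/3$ and using the definitions (\ref{eq:rescaledSigmaij}) — which give $\Sigma^{ij}\Sigma_{ij}=\bsigma^{ij}\bsigma_{ij}/\theta^{2}$ and $\Omega_{\rotot}=3(\rho+3\bp-2\Lambda)/\theta^{2}$ — one reads off precisely
\[
-3\frac{\dot{\theta}}{\theta^{2}}=1+3\Sigma^{ij}\Sigma_{ij}+\tfrac{1}{2}\Omega_{\rotot}=1+q,
\]
which is (\ref{eq:Raychaudhurirelpropertime}). The only step requiring care is the first one: verifying the vanishing of the acceleration and vorticity of $\d_t$ from the explicit form (\ref{eq:Bianchimetricdef}), and cleanly separating the trace and traceless parts of the Weingarten map so as to produce the coefficient $\tfrac{1}{3}\theta^{2}$ together with the term $\bsigma^{ij}\bsigma_{ij}$; the remaining algebra is bookkeeping with the definitions from Subsection 2 (Conformal rescaling) and (\ref{eq:rescaledSigmaij})–(\ref{eq:qdefinition}).
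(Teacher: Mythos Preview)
Your proof is correct. The approach differs from the paper's in one organizational respect worth noting. You invoke the Raychaudhuri equation in its standard form $\dot{\theta}=-\tfrac{1}{3}\theta^{2}-\bsigma^{ij}\bsigma_{ij}-\roRic(\d_{t},\d_{t})$ and then compute only the single component $R_{00}$ from Einstein's equations. The paper instead derives the evolution of $\bk^{j}_{\phantom{j}m}$ from scratch, takes its trace to obtain $\dot{\theta}=-\theta^{2}+R_{00}+S-\bS$, expresses $R_{00}+S$ via Einstein's equations, and then eliminates $\bS$ using the Hamiltonian constraint (\ref{eq:bSexpressionusingHamcon}). Your route is shorter and entirely standard; the paper's has the incidental benefit of recording the rescaled Hamiltonian constraint along the way, which it reuses in the next subsection (cf.\ (\ref{eq:rescaledHamcon})). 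For the lemma as stated, either argument is complete.
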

\begin{remark}
With respect to the time coordinate $\tau$, (\ref{eq:Raychaudhurirelpropertime}) can be written
\begin{equation}\label{eq:thetaprimeitoq}
\theta'=-(1+q)\theta,
\end{equation}
where $\theta':=\d_{\tau}\theta$. 
\end{remark}
\begin{proof}
Note, first of all, that 
\begin{equation}\label{eq:bkijdefandformula}
\bk_{ij}:=\bk(e_{i},e_{j})=\ldr{\nabla_{e_{i}}\d_{t},e_{j}}=\frac{1}{2}\dot{a}_{ij},
\end{equation}
where $\ldr{\cdot,\cdot}:=g$. Raising indices with $a^{ij}$, this relation implies that $\dot{a}_{mj}=2a_{mi}\bk^{i}_{\phantom{i}j}$. It can also be 
calculated that 
\[
\d_{t}\bk_{lm}=2\bk^{i}_{\phantom{i}l}\bk_{mi}-(\rotr_{\bge}\bk)\bk_{lm}+R_{lm}-\bR_{lm},
\]
where $R_{lm}$ and $\bR_{lm}$ are the $lm$'th components of the Ricci curvatures of the spacetime and hypersurface respectively;
cf., e.g., \cite[(25.12), p.~438]{stab}. On the other hand, (\ref{eq:bkijdefandformula}) yields
$\dot{a}^{jl}=-2\bk^{jl}$. Combining these two observations yields
\[
\d_{t}\bk^{j}_{\phantom{j}m}=-(\rotr_{\bge}\bk)\bk^{j}_{\phantom{j}m}+a^{jl}R_{lm}-a^{jl}\bR_{lm}.
\]
In particular, 
\[
\dot{\theta}=-\theta^{2}+R_{00}+S-\bS,
\]
where $R_{00}$ is the $00$ component of the spacetime Ricci curvature and $S$ and $\bS$ are the scalar curvatures of the spacetime 
and hypersurface respectively. Taking the trace of (\ref{eq:Einsteinsequations}) yields
\begin{equation}\label{eq:Sexpressionfirstversion}
S=4\Lambda-\rotr T=4\Lambda+T_{00}-a^{ij}T_{ij},
\end{equation}
where $T_{\a\b}:=T(e_{\a},e_{\b})$. Combining this observation with (\ref{eq:Einsteinsequations}) yields
\[
R_{00}+S=T_{00}+\Lambda+\frac{1}{2}S=\frac{3}{2}(\rho-\bp+2\Lambda),
\]
where we used the terminology introduced in (\ref{eq:rhobpdef}). Summarising, 
\begin{equation}\label{eq:dotthetafirstversion}
\dot{\theta}=-\theta^{2}-\bS+3\Lambda+\frac{3}{2}(\rho-\bp).
\end{equation}
On the other hand, the Hamiltonian constraint reads
\[
\frac{1}{2}[\bS-\bk^{ij}\bk_{ij}+(\rotr_{\bge}\bk)^{2}]-\Lambda=\rho.
\]
This equality can be rewritten
\begin{equation}\label{eq:bSexpressionusingHamcon}
\bS=\bsigma^{ij}\bsigma_{ij}-\frac{2}{3}\theta^{2}+2\Lambda+2\rho.
\end{equation}
Combining this observation with (\ref{eq:dotthetafirstversion}) yields the conclusion of the lemma. 
\end{proof}

\subsection{Geometric estimates}

Before proceeding, it is of interest to develop some intuition concerning the matrix with components $2\de^{i}_{m}-3\Sigma^{i}_{\phantom{i}m}$.
The reason for this is that this object appears in the basic energy estimate, cf. (\ref{eq:dtaumebasicestimate}) below. 
\begin{lemma}\label{lemma:Sigmaupperandlowerbound}
Let $(M,g)$ be a Bianchi spacetime. Assume, moreover, that $g$ solves (\ref{eq:Einsteinsequations}), that $\rho\geq 0$ and that 
$\Lambda\geq 0$. Then for all $v\in\rn{3}$ and $t\in I$ such that $\theta(t)\neq 0$, 
\begin{equation}\label{eq:twodeijminusthreeSigmaij}
(2\de^{i}_{m}-3\Sigma^{i}_{\phantom{i}m})a^{mj}v_{i}v_{j}\geq -\frac{3}{2}\frac{\bS_{+}}{\theta^{2}}a^{ij}v_{i}v_{j},
\end{equation}
where $\bS_{+}$ is the positive part of the spatial scalar curvature $\bS$; i.e., $\bS_{+}:=\max\{\bS,0\}$. 
\end{lemma}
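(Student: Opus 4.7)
The plan is to express the left-hand side of (\ref{eq:twodeijminusthreeSigmaij}) in terms of the symmetric quadratic form $\Sigma^{ij} v_i v_j := \Sigma^i_{\phantom{i}m} a^{mj} v_i v_j$, and then bound the largest eigenvalue of the $\bar g$-self-adjoint operator $\Sigma^i_{\phantom{i}j}$ by using two facts: that $\Sigma^i_{\phantom{i}j}$ is trace-free, and that the Hamiltonian constraint bounds $\Sigma^{ij}\Sigma_{ij}$. More precisely, rewriting the claim, what needs to be proven is that
\[
\Sigma^{ij} v_i v_j \leq \left(\tfrac{2}{3} + \tfrac{1}{2} \bar S_+/\theta^{2}\right) a^{ij} v_i v_j
\]
for all $v \in \rn{3}$.

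First I would diagonalise $\Sigma^i_{\phantom{i}j}$ in an $a$-orthonormal basis, obtaining real eigenvalues $\lambda_1 \leq \lambda_2 \leq \lambda_3$. Since $\bsigma$ is trace-free with respect to $\bge$, we have $\lambda_1+\lambda_2+\lambda_3 = 0$, so the estimate reduces to controlling $\lambda_3$ (which is automatically $\geq 0$). By Cauchy--Schwarz applied to $(\lambda_1,\lambda_2)$, $\lambda_3^{2} = (\lambda_1+\lambda_2)^{2} \leq 2(\lambda_1^{2}+\lambda_2^{2}) \leq 2(\Sigma^{ij}\Sigma_{ij}-\lambda_3^{2})$, which gives $\lambda_3^{2} \leq \tfrac{2}{3}\,\Sigma^{ij}\Sigma_{ij}$.

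Next I would use the already-derived identity (\ref{eq:bSexpressionusingHamcon}), i.e.\ $\bS = \bsigma^{ij}\bsigma_{ij} - \tfrac{2}{3}\theta^{2} + 2\Lambda + 2\rho$, which after division by $\theta^{2}$ and the hypothesis $\rho,\Lambda \geq 0$ yields $\Sigma^{ij}\Sigma_{ij} \leq \tfrac{2}{3} + \bS/\theta^{2} \leq \tfrac{2}{3} + \bS_{+}/\theta^{2}$. Substituting into the eigenvalue bound, $\lambda_3^{2} \leq \tfrac{4}{9} + \tfrac{2}{3}\bS_{+}/\theta^{2}$, which is dominated by $\bigl(\tfrac{2}{3} + \tfrac{1}{2}\bS_{+}/\theta^{2}\bigr)^{2}$ since the latter equals $\tfrac{4}{9} + \tfrac{2}{3}\bS_{+}/\theta^{2} + \tfrac{1}{4}(\bS_{+}/\theta^{2})^{2}$. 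Taking square roots gives the desired bound on $\lambda_3$, and the inequality (\ref{eq:twodeijminusthreeSigmaij}) follows after multiplying by $-3$ and adding $2 a^{ij} v_i v_j$.

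There is no serious obstacle: the only subtlety worth double-checking is the direction of the inequality when passing to $\bS_+$, and the algebraic completion of the square when going from $\lambda_3^{2} \leq \tfrac{4}{9} + \tfrac{2}{3}\bS_+/\theta^{2}$ to $\lambda_3 \leq \tfrac{2}{3} + \tfrac{1}{2}\bS_+/\theta^{2}$. Both are straightforward, so the argument is essentially just the Hamiltonian constraint plus the Cauchy--Schwarz bound relating the top eigenvalue of a $3\times 3$ trace-free symmetric matrix to its Frobenius norm.
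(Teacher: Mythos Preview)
Your proposal is correct and follows essentially the same approach as the paper: diagonalise $\Sigma^{i}_{\phantom{i}j}$, use the Hamiltonian constraint (\ref{eq:bSexpressionusingHamcon}) together with $\rho,\Lambda\geq 0$ to bound $\Sigma^{ij}\Sigma_{ij}\leq \tfrac{2}{3}+\bS_{+}/\theta^{2}$, and then use trace-freeness to bound the top eigenvalue. The only cosmetic difference is in the eigenvalue step: the paper introduces auxiliary variables $\lambda_{\pm}=\tfrac{3}{2}(\lambda_{2}+\lambda_{3}),\ \tfrac{\sqrt{3}}{2}(\lambda_{2}-\lambda_{3})$ and shows $9\lambda_{i}^{2}\leq 4(\lambda_{+}^{2}+\lambda_{-}^{2})$ for each $i$, whereas your Cauchy--Schwarz argument $\lambda_{3}^{2}=(\lambda_{1}+\lambda_{2})^{2}\leq 2(\lambda_{1}^{2}+\lambda_{2}^{2})$ reaches the same bound $\lambda_{3}^{2}\leq\tfrac{2}{3}\Sigma^{ij}\Sigma_{ij}$ more directly.
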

\begin{remark}
Only Bianchi type IX is consistent with $\bS>0$. In other words, only if the universal covering group of the Lie group $G$ appearing in 
Definition~\ref{def:Bianchispacetime} is isomorphic to $\mathrm{SU}(2)$ can $\bS$ be strictly positive. For a proof of this statement, see, 
e.g., \cite[Appendix~E]{stab}.
\end{remark}
\begin{proof}
Note, to begin with, that the Hamiltonian constraint can be written
\begin{equation}\label{eq:rescaledHamcon}
1=\frac{3}{2}\Sigma^{ij}\Sigma_{ij}-\frac{3}{2}\frac{\bS}{\theta^{2}}+\Omega_{\Lambda}+\Omega_{\rho};
\end{equation}
cf. (\ref{eq:bSexpressionusingHamcon}). On the other hand, since $\rho,\Lambda\geq 0$, we know that 
$\Omega_{\Lambda}$, $\Omega_{\rho}\geq 0$. As a consequence, $\Sigma$ satisfies the estimate
\begin{equation}\label{eq:Sigmasquaredbound}
\Sigma^{i}_{\phantom{i}j}\Sigma^{j}_{\phantom{j}i}=\Sigma^{ij}\Sigma_{ij}\leq \frac{2}{3}+\frac{\bS_{+}}{\theta^{2}}=\frac{2}{3}+\g_{\bS}
\end{equation}
where $\g_{\bS}$ is defined by (\ref{eq:gammabSdef}) and we used the fact that $\Sigma$ is symmetric. Note that $\Sigma$
can be considered to be an element of $\mathrm{Hom}(TG)$. Moreover, this element is symmetric with respect to the inner product defined
by $\bge$. In particular, there are thus real eigenvalues $\lambda_{l}$, $l=1,2,3$, and corresponding eigenvectors $v_{l}$ such that 
$\Sigma^{i}_{\phantom{i}k}v^{k}_{l}=\lambda_{l}v^{i}_{l}$ (no summation on $l$). Moreover, if $\lambda_{1}\leq\lambda_{2}\leq\lambda_{3}$, then 
\[
\lambda_{1}a_{ij}v^{i}v^{j}\leq a_{ij}\Sigma^{i}_{\phantom{i}k}v^{k}v^{j}\leq \lambda_{3}a_{ij}v^{i}v^{j}.
\]
Note also that the matrix with components $\Sigma^{k}_{\phantom{k}j}\Sigma^{j}_{\phantom{j}i}$ has eigenvectors $v_{l}$, but eigenvalues 
$\lambda_{l}^{2}$. Thus (\ref{eq:Sigmasquaredbound}) implies that 
\begin{equation}\label{eq:sumlambdalsquaredbd}
\lambda_{1}^{2}+\lambda_{2}^{2}+\lambda_{3}^{2}\leq \frac{2}{3}+\g_{\bS}.
\end{equation}
The sum of the $\lambda_{l}$'s vanishes, since $\Sigma$ is trace free. We therefore focus on
\[
\lambda_{+}:=\frac{3}{2}(\lambda_{2}+\lambda_{3}),\ \ \
\lambda_{-}:=\frac{\sqrt{3}}{2}(\lambda_{2}-\lambda_{3}). 
\]
Using this terminology, the estimate (\ref{eq:sumlambdalsquaredbd}) can be written 
\[
\lambda_{+}^{2}+\lambda_{-}^{2}\leq 1+\frac{3}{2}\g_{\bS}.
\]
In particular, $9\lambda_{1}^{2}=4\lambda_{+}^{2}\leq 4+6\g_{\bS}$. Similarly, 
\[
9\lambda_{2}^{2}=(\lambda_{+}+\sqrt{3}\lambda_{-})^{2}=\lambda_{+}^{2}+2\sqrt{3}\lambda_{+}\lambda_{-}+3\lambda_{-}^{2}
\leq 4(\lambda_{+}^{2}+\lambda_{-}^{2})\leq 4+6\g_{\bS}.
\]
The estimate for $\lambda_{3}$ is similar, so that 
\begin{equation*}
\begin{split}
3\Sigma^{i}_{\phantom{i}m}a^{mj}v_{i}v_{j} = & 3a_{il}\Sigma^{i}_{\phantom{i}m}(a^{mj}v_{j})\cdot (a^{lk}v_{k})
 \leq 3\lambda_{3}a_{ij}(a^{ik}v_{k})(a^{jl}v_{l})=3\lambda_{3}a^{ij}v_{i}v_{j}\\
 \leq & (2+3\g_{\bS}/2)a^{ij}v_{i}v_{j}. 
\end{split}
\end{equation*}
In particular, 
\[
(2\de^{i}_{m}-3\Sigma^{i}_{\phantom{i}m})a^{mj}v_{i}v_{j}\geq -3\g_{\bS}a^{ij}v_{i}v_{j}/2. 
\]
The lemma follows.
\end{proof}

There is a related upper bound on $q$ which we simply state as a remark. 

\begin{remark}\label{remark:boundonqminustwo}
Due to the definition of $q$ and (\ref{eq:rescaledHamcon}), it follows that 
\[
q=2+3\frac{\bS}{\theta^{2}}+\frac{3}{2}(\Omega_{\bp}-\Omega_{\rho}-2\Omega_{\Lambda}).
\]
Here, we are mainly interested in the case that $\Lambda\geq 0$ and $\rho\geq\bp$ (note that the latter inequality follows 
from the dominant energy condition). In that case, 
\[
q-2\leq 3\g_{\bS}.
\]
\end{remark}

\subsection{Partial integration}

The analysis of the asymptotic behaviour of solutions is based on a study of energies. In order to evaluate how the energies
evolve, we need to integrate by parts. In that context, the following simple observation is useful. 

\begin{lemma}\label{lemma:divergencecalculations}
Let $G$ be a connected Lie group, $X\in\mfg$ and $h$ be a left invariant Riemannian metric on $G$. Then there is a constant $c_{X}$ such that 
for every $f\in C^{\infty}_{0}(G)$, 
\begin{equation}\label{eq:intGXfmuhfinal}
\int_{G}X(f)\mu_{h}=2c_{X}\int_{G}f\mu_{h}.
\end{equation}
Moreover, 
\[
c_{X}=\xi_{G}(X),
\]
where $\xi_{G}:\mfg\rightarrow\ro$ is defined by (\ref{eq:xiGdef}). In particular, if $G$ is a unimodular Lie group, then $c_{X}=0$. 
\end{lemma}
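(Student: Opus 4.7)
My plan is to deduce the identity from Stokes' theorem applied to the compactly supported density $f\cdot X$, reducing everything to the computation of the $h$-divergence of the left-invariant vector field $X$. The key point is that, for a left-invariant metric on a Lie group, the divergence of a left-invariant vector field is a constant, and the constant is exactly $-\rotr\,\road_X = -2\xi_G(X)$.

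First I would extend $X$ to the left-invariant vector field (still called $X$) on $G$. Since $f$ has compact support, $fX$ does as well, so Stokes' theorem applied to the $(n-1)$-form $\iota_{fX}\mu_h$ gives
\[
0=\int_G \ml_{fX}\mu_h = \int_G \rodiv_h(fX)\,\mu_h = \int_G X(f)\,\mu_h + \int_G f\,\rodiv_h(X)\,\mu_h,
\]
where I have used Cartan's formula and the standard product rule for the divergence. Hence the claim reduces to showing that $\rodiv_h(X) = -2\xi_G(X)$ is a constant function on $G$.

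To compute this constant, I choose any basis $\{e_i\}$ of $\mfg$ with dual basis $\{\xi^i\}$. Because $h$ is left-invariant and $\{e_i\}$ is a left-invariant frame, the components $h_{ij}=h(e_i,e_j)$ are constants, so $\mu_h=(\det h_{ij})^{1/2}\,\xi^1\wedge\xi^2\wedge\xi^3$ has a constant coefficient. Consequently $\rodiv_h(X)\,\mu_h = \ml_X\mu_h$ equals $(\det h_{ij})^{1/2}$ times $\sum_i \xi^1\wedge\cdots\wedge\ml_X\xi^i\wedge\cdots\wedge\xi^3$. Evaluating $(\ml_X\xi^i)(e_j) = -\xi^i([X,e_j])$ and expanding $X = X^k e_k$ and $[e_k,e_j] = C^l_{\,kj}e_l$, only the $\xi^i$ component of $\ml_X\xi^i$ contributes to the wedge product, giving
\[
\ml_X\mu_h = -\Bigl(\textstyle\sum_{i,k} X^k C^i_{\,ki}\Bigr)\mu_h = -(\rotr\,\road_X)\,\mu_h = -2\xi_G(X)\,\mu_h.
\]
Therefore $\rodiv_h(X) = -2\xi_G(X)$ is the promised constant, and substituting into the Stokes identity yields $\int_G X(f)\,\mu_h = 2\xi_G(X)\int_G f\,\mu_h$, with the unimodular case $\xi_G=0$ giving the final sentence of the statement.

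The argument is essentially a bookkeeping exercise, so there is no serious obstacle; the one thing to watch is the sign convention in $(\ml_X\xi^i)(e_j)=-\xi^i([X,e_j])$ and making sure that the constancy of $(\det h_{ij})^{1/2}$ in the left-invariant frame is invoked (so that no extra derivative of this coefficient appears in $\ml_X\mu_h$). Given the setup in the paper, where $\xi_G$ is already defined by (\ref{eq:xiGdef}) as $\tfrac12\rotr\,\road_X$, the identification $c_X=\xi_G(X)$ follows directly from the computation above.
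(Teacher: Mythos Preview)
Your proof is correct and follows essentially the same approach as the paper: both reduce the identity via Stokes' theorem to the computation of $\rodiv_h X$ for a left-invariant $X$, and both show this equals the constant $-2\xi_G(X)$. The only minor difference is that the paper computes the divergence using the Levi-Civita connection and an $h$-orthonormal frame, $\rodiv_h X=\sum_i h(\nabla^h_{\be_i}X,\be_i)=-\sum_i h(\be_i,\road_X\be_i)$, whereas you compute $\ml_X\mu_h$ directly from the structure constants in a left-invariant frame; both routes are standard and yield the same result.
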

\begin{remark}
The notation $f\in C_{0}^{\infty}(G)$ signifies that $f$ is a smooth function from $G$ to $\ro$ with compact support. 
\end{remark}
\begin{proof} 
Note, to begin with, that 
\begin{equation}\label{eq:intGXfmuh}
\begin{split}
 & \int_{G}X(f)\mu_{h}=\int_{G}\ml_{X}(f\mu_{h})-\int_{G}f\ml_{X}\mu_{h}\\
 = & \int_{G}d[i_{X}(f\mu_{h})]-\int_{G}f(\rodiv_{h}X)\mu_{h}=-\int_{G}f(\rodiv_{h}X)\mu_{h},
\end{split}
\end{equation}
where we used Cartan's magic formula and Stokes's theorem. Next, let $\{\be_{i}\}$ be an orthonormal frame of the Lie algebra 
(relative to $h$). Then 
\[
\rodiv_{h}X=\textstyle{\sum}_{i}h(\nabla^{h}_{\be_{i}}X,\be_{i})=\textstyle{\sum}_{i}h(\be_{i},[\be_{i},X])
=-\textstyle{\sum}_{i}h(\be_{i},\road_{X}\be_{i})=-2\xi_{G}(X),
\]
where $\nabla^{h}$ is the Levi-Civita connection associated with $h$. In particular, $c_{X}:=\xi_{G}(X)$ is a constant, and the 
statements of the lemma follow.  
\end{proof}

\subsection{Causal structure}\label{subsection:causalstructure}
As a first step in the study of the Klein-Gordon equation, we wish to know that there is a unique solution corresponding to initial data 
specified on a hypersurface $G_{t}$. The following results ensure that this is the case. 

\begin{lemma}\label{lemma:Bianchispacetimegloballyhyperbolic}
Let $(M,g)$ be a Bianchi spacetime. Then $(M,g)$ is globally hyperbolic, and each $G_{t}$, $t\in I$, is a Cauchy hypersurface.
\end{lemma}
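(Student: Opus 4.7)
The plan is to exhibit each $G_t$ as a Cauchy hypersurface, since global hyperbolicity then follows. For this I would verify two things: first, that $t$, regarded as a function on $M$, is a time function whose level sets are spacelike and achronal; and second, that every inextendible causal curve attains every value in $I$ under $t$.

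First I would observe that since $a_{ij}(t)$ is positive definite for every $t \in I$, the coordinate vector field $\partial_t$ is timelike and future-directed (after a choice of time orientation) and its orthogonal complement $TG_t$ is spacelike with induced metric $a_{ij}(t)\xi^i\otimes\xi^j$. Writing a causal tangent vector $v = v^0\partial_t + v^i e_i$, the condition $g(v,v) \leq 0$ reads $(v^0)^2 \geq a_{ij}(t) v^i v^j \geq 0$, so $v^0 \neq 0$ and hence $v^0 > 0$ for future-directed $v$. Along any future-directed causal curve $\gamma(s)$, this gives $d(t\circ\gamma)/ds > 0$, which makes $t$ a time function and immediately implies each $G_t$ is achronal.

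Next I would argue that an inextendible future-directed causal curve cannot stop at an interior $t$-value. Suppose $\gamma$ is defined on $[s_0, s_*)$ and $t\circ\gamma$ has supremum $t_1 \in I$; reparametrise by $t$, writing $\gamma(t) = (\gamma_G(t), t)$ on $[t_0, t_1)$. The causal inequality becomes $a_{ij}(t)\,\dot{\gamma}_G^i\dot{\gamma}_G^j \leq 1$. On the compact interval $[t_0, t_1]$, $a(t)$ is bounded below by a positive definite constant matrix, so $|\dot\gamma_G(t)|_h$ is bounded, where $h$ is the left-invariant Riemannian metric of \eqref{eq:hRieMetdef}. Since $\gamma_G$ then has finite $h$-length on $[t_0, t_1)$, and every left-invariant Riemannian metric on a connected Lie group is complete (standard, via homogeneity and Hopf--Rinow), $\gamma_G(t)$ converges in $G$ as $t \to t_1^-$. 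This exhibits a continuous, hence causal, extension of $\gamma$ to $t_1$, contradicting inextendibility. The same argument in the past direction shows $t\circ\gamma$ attains every value in $I$, so $\gamma$ meets each $G_t$, and by achronality exactly once.

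The main technical point is the completeness argument for $\gamma_G$: one must convert the causal bound, expressed in the $t$-dependent metric $a(t)$, into a bound in a fixed complete Riemannian metric on $G$, and then invoke completeness of left-invariant metrics to close the curve. Everything else — spacelike character of $G_t$, monotonicity of $t$, achronality — follows directly from the product form \eqref{eq:Bianchimetricdef} of the metric. Having shown each $G_t$ is a Cauchy hypersurface, $(M,g)$ is globally hyperbolic by the standard characterisation.
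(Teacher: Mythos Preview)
Your argument is correct and is the standard one for this class of metrics; it is essentially the argument the paper is citing from \cite[Proposition~20.3, p.~215]{minbok} rather than writing out. One small wording issue: the phrase ``continuous, hence causal, extension'' is imprecise---a continuous extension of a causal curve need not be causal at the new endpoint---but you do not actually need that implication, since exhibiting a limit point $(\gamma_G(t_1),t_1)\in M$ already contradicts future-inextendibility directly.
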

\begin{proof}
The statement follows by an argument which is essentially identical to that presented at the end of the proof of 
\cite[Proposition~20.3, p.~215]{minbok}; cf. \cite[p.~217]{minbok}.
\end{proof}

Due to this observation, we can solve the Klein-Gordon equation on a Bianchi spacetime. In fact, the following result holds. 

\begin{cor}\label{cor:basicexistenceandcompactsupp}
Let $(M,g)$ be a Bianchi spacetime and $\varphi_{i}\in C^{\infty}(M)$, $i=0,1$. Then, given $u_{i}\in C^{\infty}(G)$, $i=0,1$, 
and $t_{0}\in I$, there is a unique smooth solution to
\begin{align}
\Box_{g}u+\varphi_{0}u = & \varphi_{1},\label{eq:ivpeq}\\
u(\cdot,t_{0}) = & u_{0},\label{eq:ivpidz}\\
u_{t}(\cdot,t_{0}) = & u_{1}.\label{eq:ivpido}
\end{align}
Assume, moreover, that the $u_{i}$ have compact support and that for every compact interval $J\subset I$, the support of 
$\varphi_{1}|_{G\times J}$ is compact. Then the solution to (\ref{eq:ivpeq})--(\ref{eq:ivpido}) is such that for every compact 
interval $J\subset I$, the support of $u|_{G\times J}$ is compact. 
\end{cor}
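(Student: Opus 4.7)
The plan is to derive both assertions as standard consequences of the global hyperbolicity of $(M,g)$ (established in Lemma~\ref{lemma:Bianchispacetimegloballyhyperbolic}), combined with the classical Cauchy problem theory for linear wave operators. For the existence and uniqueness claim, $\Box_{g}+\varphi_{0}$ is a linear wave operator on $M$ with smooth coefficients and $G_{t_{0}}$ is a Cauchy hypersurface; Leray's theorem then yields a unique smooth solution $u\in C^{\infty}(M)$ to (\ref{eq:ivpeq})--(\ref{eq:ivpido}) for arbitrary smooth data and smooth source. I would simply invoke this classical result; see, e.g., the treatment of the Cauchy problem for linear wave equations in the globally hyperbolic setting in \cite{stab}.

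For the compact-support assertion, the tool is finite propagation speed. Fix a compact interval $J\subset I$; after enlarging, I may assume $t_{0}\in J$. Set
\[
K:=\mathrm{supp}\,u_{0}\cup\mathrm{supp}\,u_{1}\cup\mathrm{supp}(\varphi_{1}|_{G\times J}),
\]
which is compact by the standing hypotheses on the data and on $\varphi_{1}$. Standard domain-of-dependence theory for linear hyperbolic equations on globally hyperbolic spacetimes (itself a consequence of the uniqueness theorem applied locally, propagating the identity $u\equiv 0$ from outside of $J(K)$) gives
\[
\mathrm{supp}(u|_{G\times J})\subset J(K)\cap (G\times J).
\]
It remains to verify that the right-hand side is compact. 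Choose $t_{\pm}\in I$ with $J\subset [t_{-},t_{+}]$ and compact neighborhoods $\tilde K_{\pm}\subset G_{t_{\pm}}$ of the slicewise-compact sets $J(K)\cap G_{t_{\pm}}$. Any point $p\in J(K)\cap (G\times J)$ then lies either in $J^{+}(K)\cap J^{-}(\tilde K_{+})$ or in $J^{-}(K)\cap J^{+}(\tilde K_{-})$, depending on the sign of its causal relation to $K$ (using that every future-inextendible causal curve from $p$ meets the Cauchy hypersurface $G_{t_{+}}$, and symmetrically in the past). Both of these sets are compact by the definition of global hyperbolicity.

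No genuine obstacle is anticipated: both steps are textbook applications of the global hyperbolicity framework. The only minor technical point is the transition from slicewise compactness of $J(K)\cap G_{t}$ to uniform compactness on the slab $G\times J$, which is handled by the enlargement argument sketched above.
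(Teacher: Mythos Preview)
Your proposal is correct and follows essentially the same route as the paper: existence and uniqueness via the standard Cauchy theory on globally hyperbolic spacetimes, and compact support via domain of dependence combined with compactness of causal diamonds between compact sets. The only cosmetic issue is that your set $K$ mixes subsets of $G$ (the supports of $u_{i}$) with a subset of $G\times J$ (the support of $\varphi_{1}|_{G\times J}$); replacing $\mathrm{supp}\,u_{i}$ by $\mathrm{supp}\,u_{i}\times\{t_{0}\}$, as the paper does, fixes this.
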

\begin{proof}
That there is a unique smooth solution to (\ref{eq:ivpeq})--(\ref{eq:ivpido}) follows, e.g., from \cite[Theorem~12.19, p.~144]{minbok}. 
Next, let $J\subset I$ be a compact interval and assume, without loss of generality, that $t_{0}\in J$. Let
\[
K:=\supp\varphi_{1}|_{G\times J}\cup (\supp u_{0}\times \{t_{0}\})\cup (\supp u_{1}\times \{t_{0}\}).
\] 
Then, due to the global hyperbolicity of $(M,g)$, 
\[
K_{1}:=J^{+}(K)\cap G_{t_{0}}
\]
is compact; cf., e.g., \cite[Lemma~21.6, p.~361]{stab}. Let $U_{1}\subseteq G$ be the open subset such that $K_{1}=(G-U_{1})\times \{t_{0}\}$.
Defining $t_{a}$ and $t_{b}$ by $J=[t_{a},t_{b}]$, let, moreover, 
\[
K_{2}:=J^{-}(K_{1})\cap J^{+}(G_{t_{a}}).
\] 
Note that $K_{2}$ is a compact subset of $G\times [t_{a},t_{0}]$; cf., e.g., \cite[Lemma~21.6, p.~361]{stab}. 

Assume that $p\in G\times [t_{a},t_{0}]$ does not belong to $K_{2}$. 
We then wish to prove that $p\in D^{-}(U_{1}\times \{t_{0}\})$. Assume, to this end, that $\g$ is a future inextendible causal curve
through $p$ that does not meet $U_{1}\times \{t_{0}\}$. Since $G_{t_{0}}$ is a Cauchy surface, there is an $s_{0}$ in the domain of 
definition of $\g$ such that $\g(s_{0})\in G_{t_{0}}$. Since $\g$ does not meet $U_{1}\times \{t_{0}\}$, we conclude that 
$\g(s_{0})\in K_{1}$. In particular, $p\in J^{-}(K_{1})$, so that $p\in K_{2}$. This is a contradiction. Thus 
$p\in D^{-}(U_{1}\times \{t_{0}\})$.

Next, we wish to prove that $\varphi_{1}$ vanishes in $D^{-}(U_{1}\times \{t_{0}\})\cap J^{+}(G_{t_{a}})$. Assume, to this end, that 
$p$ belongs to this set and that $\varphi_{1}(p)\neq 0$. Then $p\in K$, and every future inextendible curve through $p$
intersects $K_{1}$. In particular, there is a future inextendible causal curve which does not intersect $U_{1}\times \{t_{0}\}$. 
Thus $p\notin D^{-}(U_{1}\times \{t_{0}\})$, a contradiction.

Combining the above observations with \cite[Corollary~12.14, p.~141]{minbok} and \cite[Remark~12.15, p.~141]{minbok} yields the 
conclusion that the support of $u|_{G\times [t_{a},t_{0}]}$ is contained in $K_{2}$. The argument concerning $u|_{G\times [t_{0},t_{b}]}$
is similar, and the corollary follows. 
\end{proof}

Consider a silent geometry; i.e., assume that (\ref{eq:normaraisedtominusonehalf}) is satisfied. Then the following holds. 

\begin{lemma}\label{lemma:localisationuntilthesingularity}
Let $(M,g)$ be a Bianchi spacetime. Assume $t_{-}$ to be a silent monotone volume singularity. Then for every compact $K_{0}\subseteq G$,
there is a compact subset $K_{1}$ of $G$ such that 
\begin{equation}\label{eq:JminusKzeroincylinder}
J^{-}(K_{0}\times \{t_{0}\})\subseteq K_{1}\times (t_{-},t_{0}]. 
\end{equation}
Similarly, for every compact $K_{0}\subseteq G$, there is a compact subset $K_{1}$ of $G$ such that 
\begin{equation}\label{eq:JplusKzeroincylinder}
J^{+}(K_{0}\times (t_{-},t_{0}])\cap J^{-}(G_{t_{0}})\subseteq K_{1}\times (t_{-},t_{0}]. 
\end{equation}
In particular, 
\begin{equation}\label{eq:cylindercontinDminusKonCa}
K_{0}\times (t_{-},t_{0}]\subseteq D^{-}(K_{1}\times \{t_{0}\}).
\end{equation}
\end{lemma}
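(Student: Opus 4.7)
The core idea is that silence of $t_{-}$, i.e.\ $\|a^{-1/2}\|\in L^{1}(t_{-},t_{0}]$, implies that causal curves have uniformly bounded spatial length with respect to the left-invariant Riemannian metric $h$ on $G$ defined in (\ref{eq:hRieMetdef}). First I would parametrise a future-directed causal curve $\g(s)=(\bga(s),t(s))$ in $G\times (t_{-},t_{0}]$ by the $t$-coordinate where possible, and observe that causality gives
\[
\dot{t}^{2}\geq a_{ij}(t)\dot\bga^{i}\dot\bga^{j}.
\]
Since $a$ and $h$ are related by a smooth positive definite matrix, there is a constant $C$ (independent of $t$) with $|\dot\bga|_{h}\leq C\|a^{-1/2}(t)\|\,|\dot{t}|$, where $\|\cdot\|$ is the operator norm used in (\ref{eq:normaraisedtominusonehalf}). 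Integrating along $\g$ yields
\[
L_{h}(\bga)\leq C\int_{t_{-}}^{t_{0}}\|a^{-1/2}(t)\|\,dt=:R_{0}<\infty,
\]
which is the key estimate coming from silence.

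Next I would set $K_{1}:=\overline{B_{h}(K_{0},R_{0})}$, the closed $h$-ball of radius $R_{0}$ around the compact set $K_{0}$. Because $h$ is left-invariant, the Riemannian manifold $(G,h)$ is complete, so by Hopf--Rinow closed bounded sets are compact; hence $K_{1}$ is compact. Given any $q\in J^{-}(K_{0}\times\{t_{0}\})$, a past-directed causal curve from some point of $K_{0}\times\{t_{0}\}$ to $q$ has spatial length at most $R_{0}$ in $h$, so its spatial projection stays inside $K_{1}$. This gives (\ref{eq:JminusKzeroincylinder}). The argument for (\ref{eq:JplusKzeroincylinder}) is entirely symmetric: any future-directed causal curve from $K_{0}\times(t_{-},t_{0}]$ that stays in $J^{-}(G_{t_{0}})$ runs only inside the $t$-interval $(t_{-},t_{0}]$, so its spatial $h$-length is again bounded by $R_{0}$ and the same $K_{1}$ works (possibly after relabelling).

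Finally, to deduce (\ref{eq:cylindercontinDminusKonCa}), I would use Lemma~\ref{lemma:Bianchispacetimegloballyhyperbolic}: every $G_{t}$ is a Cauchy hypersurface. Fix $p\in K_{0}\times(t_{-},t_{0}]$ and let $\g$ be a future-inextendible causal curve through $p$. By the Cauchy property, $\g$ meets $G_{t_{0}}$ at some point $(y,t_{0})$, which lies in $J^{+}(p)\cap G_{t_{0}}\subseteq J^{+}(K_{0}\times(t_{-},t_{0}])\cap J^{-}(G_{t_{0}})$. By (\ref{eq:JplusKzeroincylinder}), $(y,t_{0})\in K_{1}\times(t_{-},t_{0}]$, forcing $y\in K_{1}$; hence $p\in D^{-}(K_{1}\times\{t_{0}\})$, as desired.

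The only nontrivial point is the first estimate, i.e.\ correctly relating the causality condition to the $h$-arclength uniformly in $t$; the rest is bookkeeping. The completeness of left-invariant Riemannian metrics on Lie groups, needed for compactness of $K_{1}$, is a standard fact which I would cite rather than prove.
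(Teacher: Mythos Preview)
Your proposal is correct and follows essentially the same route as the paper: bound the $h$-length of the spatial projection of a causal curve by $\int_{t_{-}}^{t_{0}}\|a^{-1/2}\|\,dt$, invoke completeness of the left-invariant metric $h$ to get a compact $K_{1}$, and then use that $G_{t_{0}}$ is a Cauchy hypersurface for the domain-of-dependence statement. One small remark: since $h=\de_{ij}\xi^{i}\otimes\xi^{j}$ is the identity in the frame $\{e_{i}\}$, the inequality $a_{ij}\dot\bga^{i}\dot\bga^{j}\geq\lambda_{\min}(t)|\dot\bga|_{h}^{2}$ gives $|\dot\bga|_{h}\leq\|a^{-1/2}(t)\|\,|\dot t|$ directly, so your constant $C$ can be taken equal to $1$ and the phrase ``$a$ and $h$ are related by a smooth positive definite matrix'' is not quite the right justification.
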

\begin{proof}
If $\lambda_{\min}(t)>0$ is the smallest eigenvalue of $a(t)$, then $\|a^{-1/2}(t)\|=[\lambda_{\min}(t)]^{-1/2}$; note that 
$a$ is a symmetric matrix. In particular, if $\g$ is a causal curve such that $\g(t)=[\bga(t),t]$ and the $\dot{\bga}^{i}$ 
are defined by $\dot{\g}=\d_{t}+\dot{\bga}^{i}e_{i}$, then $0\geq -1+a_{ij}\dot{\bga}^{i}\dot{\bga}^{j}$. As a consequence, 
\[
|\dot{\bga}(t)|_{h}\leq \|a^{-1/2}(t)\|.
\]
In particular, the length of $\bga|_{(t_{-},t_{0}]}$ with 
respect to $h$ (defined by (\ref{eq:hRieMetdef})) is bounded. Since $h$ is a complete Riemannian metric, it follows that if 
$K_{0}\subset G$ is a compact set, then there is a compact subset $K_{1}$ of $G$ such that (\ref{eq:JminusKzeroincylinder}) holds.
The proof of the second statement is similar. In order to prove (\ref{eq:cylindercontinDminusKonCa}), let $p\in K_{0}\times (t_{-},t_{0}]$ 
and let $\g$ be a future inextendible causal curve through $p$. Since $G_{t_{0}}$ is a Cauchy hypersurface, $\g$ intersects $G_{t_{0}}$. 
The intersection point belongs to the left hand side of (\ref{eq:JplusKzeroincylinder}). Thus, due to (\ref{eq:JplusKzeroincylinder}),
the intersection point belongs to $K_{1}\times \{t_{0}\}$. To conclude: every future inextendible causal curve through $p$ intersects
$K_{1}\times \{t_{0}\}$. Thus $p\in D^{-}(K_{1}\times \{t_{0}\})$ and (\ref{eq:cylindercontinDminusKonCa}) follows. 
\end{proof}

\section{Conformal rescaling}\label{section:conformalrescaling}

The analysis is simplified by writing down the Klein-Gordon equation with respect to a conformally rescaled metric. In fact, in analogy 
with the arguments presented in \cite[Section~1.3, pp.~8--13]{finallinsys}, it is convenient to multiply the metric with the mean curvature 
squared. It is also convenient to change the time coordinate. In what follows, we use the following rescaling. 

\begin{lemma}\label{lemma:hgintro}
Let $(M,g)$ be a Bianchi spacetime with a monotone volume singularity $t_{-}$. Then $\tau:I_{0}\rightarrow \mI_{0}$ is a diffeomorphism, 
where $I_{0}:=(t_{-},t_{0})$, $\mI_{0}:=(-\infty,\tau_{0})$ and $\tau_{0}:=\tau(t_{0})$. Moreover, if $\hg:=\theta^{2}g/9$,
then $\hg$ is a Lorentz metric on $M_{0}:=G\times I_{0}$ and 
\begin{equation}\label{eq:hgitohaij}
\hg=-d\tau\otimes d\tau+\ha_{ij}(\tau)\xi^{i}\otimes \xi^{i},
\end{equation}
where $\ha_{ij}=\theta^{2}a_{ij}/9$. 
\end{lemma}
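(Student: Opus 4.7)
The plan is to verify the three assertions in order: that $\tau$ is a diffeomorphism onto $\mI_0$, that $\hg$ is Lorentzian on $M_0$, and that $\hg$ has the claimed form.

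First I would check that $\tau:I_0\to\mI_0$ is a diffeomorphism. Since $t_{-}$ is a monotone volume singularity, Definition~\ref{def:monotonevolumesingularity} ensures that $\theta>0$ on $(t_{-},t_0)$, and by the discussion following that definition $\d_t\tau=\theta/3>0$ on $I_0$. Thus $\tau$ is smooth and strictly increasing on $I_0$, so it is a smooth injection with nowhere vanishing derivative, hence a diffeomorphism onto its image. The image is precisely $\mI_0=(-\infty,\tau_0)$: by construction $\tau(t_0)=\tau_0$, and by (\ref{eq:taudefinition}) together with the definition of monotone volume singularity we have $\tau(t)\to-\infty$ as $t\to t_{-}$.

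Next I would argue that $\hg$ is a Lorentz metric. On $M_0$ the function $\theta^2/9$ is smooth and strictly positive (since $\theta>0$ on $I_0$), so $\hg=(\theta^2/9)g$ is a conformal rescaling of $g$ by a positive smooth factor. Conformal rescaling preserves the signature, so $\hg$ is Lorentzian of the same signature as $g$.

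Finally, I would compute the coordinate expression. Starting from the form (\ref{eq:Bianchimetricdef}), we have
\[
\hg=\frac{\theta^2}{9}g=-\frac{\theta^2}{9}\,dt\otimes dt+\frac{\theta^2}{9}a_{ij}(t)\,\xi^i\otimes\xi^j.
\]
Since $d\tau=(\theta/3)\,dt$ along $I_0$, we obtain $d\tau\otimes d\tau=(\theta^2/9)\,dt\otimes dt$, so the $dt\otimes dt$ term becomes $-d\tau\otimes d\tau$. Setting $\ha_{ij}:=\theta^2 a_{ij}/9$ and reparameterising the time coordinate via the diffeomorphism from step one (so that $\ha_{ij}$ is now viewed as a function of $\tau$), we get (\ref{eq:hgitohaij}). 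No step here is genuinely an obstacle; the only thing requiring a little care is making sure the reparameterisation is legitimate, which is exactly what the first step secures.
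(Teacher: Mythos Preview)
Your proof is correct and follows essentially the same approach as the paper's own proof, which simply notes that $d\tau=(\theta/3)\,dt$ and $\theta>0$ on $I_0$, then declares the lemma to follow. You have merely spelled out the details the paper leaves implicit.
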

\begin{remark}
In the statement of the lemma, we use the terminology introduced in Definitions~\ref{def:Bianchispacetime} and
\ref{def:monotonevolumesingularity}.
\end{remark}
\begin{proof}
Note that $d\tau=(\theta/3)dt$ and that $\theta>0$ on $I_{0}$. The lemma follows.
\end{proof}

\section{The Klein-Gordon equation on a Bianchi spacetime}\label{section:KGonBianchi}

Next, let us formulate the Klein-Gordon equation with respect to the conformally rescaled metric introduced in 
Lemma~\ref{lemma:hgintro}.

\begin{lemma}\label{lemma:conformalrescalingoftheequation}
Let $(M,g)$ be a Bianchi spacetime with a monotone volume singularity $t_{-}$. Then, on $M_{0}$ introduced in Lemma~\ref{lemma:hgintro}, 
the equation
\begin{equation}\label{eq:KleinGordonvarphivarphi}
\Box_{g}u+\varphi_{0}u=\varphi_{1}
\end{equation}
can be written  
\begin{equation}\label{eq:KleinGordonConformallyRescaled}
-u_{\tau\tau}+\ha^{ij}e_{i}[e_{j}(u)]+(q-2)u_{\tau}-2X_{0}(u)+\hvph_{0}u=\hvph_{1},
\end{equation}
where $q$ is defined by (\ref{eq:qdefinition}), 
\begin{equation}\label{eq:Xzerodefinition}
X_{0}:=\xi_{G}^{\sharp}=\frac{1}{2}\ha^{il}\g^{j}_{lj}e_{i},
\end{equation}
$\xi_{G}$ is given by (\ref{eq:xiGdef}), the constants $\g^{i}_{jk}$ are defined by $[e_{j},e_{k}]=\g^{i}_{jk}e_{i}$ and 
$\hvph_{i}:=9\theta^{-2}\varphi_{i}$ for $i=0,1$. Here the operator $\sharp$ is calculated with respect to $\chg$, where 
$\chg$ is the metric induced on $G_{t}$ by $\hg$; i.e., 
for all $Y\in\mfg$, 
\[
\chg(\xi_{G}^{\sharp},Y)=\xi_{G}(Y).
\]
\end{lemma}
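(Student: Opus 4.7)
The plan is to compute $\Box_{g}u$ in the left-invariant frame $\{e_{\a}\}$ directly, and then perform the conformal/time-coordinate change at the end. Writing $\Box_{g}u=g^{\a\b}(e_{\a}e_{\b}u-\G^{\g}_{\a\b}e_{\g}u)$, the first task is to compute the connection coefficients $\G^{\g}_{\a\b}$ in this (non-coordinate) frame via the Koszul formula
\[
2g(\nabla_{e_{\a}}e_{\b},e_{\g})=e_{\a}g_{\b\g}+e_{\b}g_{\g\a}-e_{\g}g_{\a\b}+g([e_{\a},e_{\b}],e_{\g})-g([e_{\b},e_{\g}],e_{\a})+g([e_{\g},e_{\a}],e_{\b}).
\]
Since $g_{00}=-1$, $g_{0i}=0$, $g_{ij}=a_{ij}(t)$ and the only non-trivial structure constants are $\g^{k}_{ij}$ (because $[e_{0},e_{i}]=0$), a direct calculation yields $\G^{\g}_{00}=0=\G^{0}_{0i}$, $\G^{0}_{ij}=\bk_{ij}$ (consistent with (\ref{eq:bkijdefandformula})), $\G^{k}_{0i}=\bk^{k}_{\phantom{k}i}$, and $\G^{k}_{ij}=\bG^{k}_{ij}$ (the spatial Levi-Civita symbols of $\bge$). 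Contracting with $g^{\a\b}$ gives $g^{\a\b}\G^{0}_{\a\b}=a^{ij}\bk_{ij}=\theta$ and $g^{\a\b}\G^{k}_{\a\b}=a^{ij}\bG^{k}_{ij}$, so
\[
\Box_{g}u=-u_{tt}+a^{ij}e_{i}[e_{j}(u)]-\theta u_{t}-a^{ij}\bG^{k}_{ij}e_{k}(u).
\]

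The key algebraic step is identifying the last coefficient with $2a^{kl}\xi_{G,l}$. Expanding $\bG^{k}_{ij}=\tfrac{1}{2}a^{kl}(\g^{m}_{ij}a_{ml}-\g^{m}_{jl}a_{mi}+\g^{m}_{li}a_{mj})$ and tracing with $a^{ij}$, the first term vanishes by antisymmetry $\g^{k}_{ij}=-\g^{k}_{ji}$; the second collapses via $a^{ij}a_{mi}=\de^{j}_{m}$ to $-\tfrac{1}{2}a^{kl}\g^{j}_{jl}$; the third similarly to $\tfrac{1}{2}a^{kl}\g^{i}_{li}$. Using $\g^{i}_{li}=\rotr\road_{e_{l}}=2\xi_{G,l}$ and $\g^{j}_{jl}=-2\xi_{G,l}$, both surviving pieces contribute $a^{kl}\xi_{G,l}$, giving $a^{ij}\bG^{k}_{ij}=2a^{kl}\xi_{G,l}$. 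In particular the formula $X_{0}^{i}=\tfrac{1}{2}\ha^{il}\g^{j}_{lj}$ in (\ref{eq:Xzerodefinition}) is verified, and $X_{0}=\xi_{G}^{\sharp}$ with $\sharp$ taken with respect to $\chg$ because $\chg_{ij}=\ha_{ij}$.

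Finally, I would perform the time change. From $d\tau/dt=\theta/3$ one obtains $u_{t}=(\theta/3)u_{\tau}$ and $u_{tt}=(\theta^{2}/9)u_{\tau\tau}+(\dot{\theta}/3)u_{\tau}$, so
\[
-u_{tt}-\theta u_{t}=-\frac{\theta^{2}}{9}u_{\tau\tau}-\left(\frac{\dot{\theta}}{3}+\frac{\theta^{2}}{3}\right)u_{\tau}
=-\frac{\theta^{2}}{9}\left(u_{\tau\tau}+\left(3+\frac{3\dot{\theta}}{\theta^{2}}\right)u_{\tau}\right).
\]
Raychaudhuri's equation (\ref{eq:Raychaudhurirelpropertime}) gives $3+3\dot\theta/\theta^{2}=3-(1+q)=2-q$. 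Multiplying $\Box_{g}u+\varphi_{0}u=\varphi_{1}$ through by $9/\theta^{2}$ and using $\ha^{ij}=(9/\theta^{2})a^{ij}$ as well as $\hvph_{i}=9\theta^{-2}\varphi_{i}$, the coefficient of $u_{\tau}$ becomes $-(2-q)=q-2$, the spatial Laplacian-type term becomes $\ha^{ij}e_{i}[e_{j}(u)]$, and the first-order spatial term becomes $-2\ha^{kl}\xi_{G,l}e_{k}(u)=-2X_{0}(u)$, yielding (\ref{eq:KleinGordonConformallyRescaled}).

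The only real obstacle is the structure-constant bookkeeping in the identification $a^{ij}\bG^{k}_{ij}=2a^{kl}\xi_{G,l}$; the connection computation, the conformal rescaling, and the Raychaudhuri substitution are all routine once the frame Christoffels are in hand.
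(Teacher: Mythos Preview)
Your proof is correct, but it takes a different route from the paper's. The paper first invokes the conformal identity
\[
\Omega^{-2}\Box_{g}u=\Box_{\hg}u-2\hg(\grad_{\hg}\ln\Omega,\grad_{\hg}u),\qquad \Omega=\theta/3,
\]
computes the gradient term as $(1+q)u_{\tau}$, and then evaluates $\Box_{\hg}u$ directly in the frame $\{e_{0}=\d_{\tau},e_{i}\}$ adapted to $\hg$: there the contracted Christoffels are $\hG^{0}=-3q$ (via (\ref{eq:dtauhamjfinal})) and $\hG^{i}e_{i}=2X_{0}$, and combining gives the $(q-2)u_{\tau}$ coefficient as $3q-2(1+q)$. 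You instead compute $\Box_{g}u$ in the original $t$-frame, obtain the contracted spatial Christoffel $a^{ij}\bG^{k}_{ij}=2a^{kl}\xi_{G,l}$ by the same Koszul bookkeeping, and then perform the time-coordinate change and multiply by $9/\theta^{2}$ at the end, so that the $(q-2)$ coefficient arises from $-u_{tt}-\theta u_{t}$ together with Raychaudhuri. The two routes are equivalent in length and content; the paper's separates the conformal and geometric contributions cleanly, while yours avoids quoting the conformal transformation law and stays closer to the original metric $g$.
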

\begin{remark}
In the statement, we use the terminology introduced in Lemma~\ref{lemma:hgintro} as well as in Definitions~\ref{def:Bianchispacetime} 
and \ref{def:monotonevolumesingularity}. 
\end{remark}
\begin{remark}
The unimodular Lie groups are characterised by the property that $X_{0}=0$. In particular, for Bianchi class A, $X_{0}=0$ and for 
Bianchi class B, $X_{0}\neq 0$. 
\end{remark}
\begin{proof}
Let $\Omega=\theta/3$. It can then be calculated that 
\[
\Omega^{-2}\Box_{g}u=\Box_{\hg}u-2\hg(\grad_{\hg}\ln\Omega,\grad_{\hg}u).
\]
Here, 
\[
\hg(\grad_{\hg}\ln\Omega,\grad_{\hg}u)=\hg^{\a\b}\d_{\a}\ln\Omega\d_{\b}u=-\d_{\tau}\ln(\theta/3)u_{\tau}=(1+q)u_{\tau},
\]
where we appealed to (\ref{eq:thetaprimeitoq}) in the last step. Next, if $\{e_{\a}\}$ is the frame consisting of 
$e_{0}:=\d_{\tau}$ combined with the $\{e_{i}\}$, then
\[
\Box_{\hg}u=\hg^{\a\b}e_{\a}[e_{\b}(u)]-\hG^{\a}e_{\a}(u),
\]
where $\hG^{\a}$ is defined by the relations
\[
\hna_{e_{\a}}e_{\b}=\hG^{\g}_{\a\b}e_{\g},\ \ \
\hG^{\g}:=\hg^{\a\b}\hG^{\g}_{\a\b}
\]
and $\hna$ is the Levi-Civita connection associated with $\hg$. Note that 
\[
\hG^{0}_{\a\b}=-\ldr{\hna_{e_{\a}}e_{\b},e_{0}},
\]
where $\ldr{\cdot,\cdot}=\hg$. In particular, $\hG^{0}_{00}=0$ and 
\[
\hG^{0}_{ij}=\frac{1}{2}\d_{\tau}\ldr{e_{i},e_{j}}=\frac{1}{2}\d_{\tau}\ha_{ij}.
\]
On the other hand, (\ref{eq:bkijdefandformula}) can be written $\dot{a}_{mj}=2a_{mi}\bk^{i}_{\phantom{i}j}$, so that
\[
\d_{\tau}a_{mj}=6a_{mi}\left(\Sigma^{i}_{\phantom{i}j}+\frac{1}{3}\de^{i}_{j}\right).
\]
Combining this observation with (\ref{eq:thetaprimeitoq}) yields
\begin{equation}\label{eq:dtauhamjfinal}
\d_{\tau}\ha_{mj}=6\ha_{mi}\left(\Sigma^{i}_{\phantom{i}j}-\frac{1}{3}q\de^{i}_{j}\right).
\end{equation}
Thus
\[
\hG^{0}=-\hG^{0}_{00}+\ha^{ij}\hG^{0}_{ij}=\ha^{ij}\frac{1}{2}\cdot 6 \ha_{il}\left(\Sigma^{l}_{\phantom{l}j}-\frac{1}{3}q\de^{l}_{j}\right)=-3q,
\]
where we used the fact that $\Sigma^{j}_{\phantom{j}j}=0$. Next, note that 
\[
\hG^{i}_{\a\b}=\ha^{ij}\ldr{\hna_{e_{\a}}e_{\b},e_{j}}.
\]
Thus
\[
\hG^{i}_{00}=\ha^{ij}\ldr{\hna_{e_{0}}e_{0},e_{j}}=-\ha^{ij}\ldr{e_{0},\hna_{e_{0}}e_{j}}
=-\ha^{ij}\ldr{e_{0},\hna_{e_{j}}e_{0}}=0.
\]
Moreover, the Koszul formula yields
\[
\hG^{i}_{jk}=\ha^{il}\ldr{\hna_{e_{j}}e_{k},e_{l}}=\frac{1}{2}\ha^{il}(-\g^{m}_{kl}\ha_{jm}+\g^{m}_{lj}\ha_{km}+\g^{m}_{jk}\ha_{lm}),
\]
where the constants $\g^{i}_{jk}$ are defined by $[e_{j},e_{k}]=\g^{i}_{jk}e_{i}$. In particular, it can thus be computed that 
\[
\hG^{i}=\ha^{jk}\hG^{i}_{jk}=\ha^{il}\g^{j}_{lj}.
\]
In order to calculate which vector field the $\hG^{i}$ correspond to, let $Y\in\mfg$, $Y=Y^{l}e_{l}$ and compute
\[
\ldr{\hG^{i}e_{i},Y}=\ha^{il}\g^{j}_{lj}\ha_{im}Y^{m}=\g^{j}_{lj}Y^{l}.
\]
On the other hand, if we define $\xi_{G,i}:=\xi_{G}(e_{i})$ (so that $\xi_{G}=\xi_{G,i}\xi^{i}$), then 
\[
\xi_{G,i}=\frac{1}{2}\rotr\, \road_{e_{i}}=\frac{1}{2}\g^{j}_{ij}. 
\]
Thus
\[
\xi_{G}(Y)=\xi_{G,i}\xi^{i}(Y)=\frac{1}{2}\g^{j}_{ij}Y^{i}=\frac{1}{2}\ldr{\hG^{i}e_{i},Y}. 
\]
Thus, defining $X_{0}$ by the first relation in (\ref{eq:Xzerodefinition}), the second relation in (\ref{eq:Xzerodefinition}) 
follows. Moreover, $\hG^{i}e_{i}=2X_{0}$. To conclude, 
\[
-\hG^{\a}e_{\a}(u)=3q u_{\tau}-2X_{0}(u).
\]
Combining the above observations yields the conclusion of the lemma. 
\end{proof}

\subsection{Additional change of time coordinate}\label{ssection:addchangetimecoord}

The form of the equation (\ref{eq:KleinGordonConformallyRescaled}) indicates that it would be of interest to change time 
coordinate according to 
\begin{equation}\label{eq:dsigmadtau}
\frac{d\sigma}{d\tau}=C_{\sigma}\exp\left(-\int_{\tau}^{0}(q-2)d\tau'\right)
\end{equation}
for some $C_{\sigma}>0$. In fact, with respect to such a time coordinate, (\ref{eq:KleinGordonConformallyRescaled}) takes the form
\begin{equation}\label{eq:KleinGordonwrtsigmatime}
-u_{\sigma\sigma}+\cha^{ij}e_{i}[e_{j}(u)]-2\chX_{0}(u)+\cvph_{0}u=\cvph_{1},
\end{equation}
where
\begin{align*}
\cha^{ij} := & C_{\sigma}^{-2}\exp\left(2\int_{\tau}^{0}(q-2)d\tau'\right)\ha^{ij},\ \ \ 
\cvph_{0} := C_{\sigma}^{-2}\exp\left(2\int_{\tau}^{0}(q-2)d\tau'\right)\hvph_{0},\\
\chX_{0} := & C_{\sigma}^{-2}\exp\left(2\int_{\tau}^{0}(q-2)d\tau'\right)X_{0},\ \ \ 
\cvph_{1} := C_{\sigma}^{-2}\exp\left(2\int_{\tau}^{0}(q-2)d\tau'\right)\hvph_{1}.
\end{align*}
On the other hand, if $\sigma$ satisfies (\ref{eq:dsigmadtdefrel}), then 
\begin{equation}\label{eq:dsigmadtauprimaryrelation}
\frac{d\sigma}{d\tau}=\frac{d\sigma}{dt}\frac{dt}{d\tau}=\frac{1}{3}(\det a)^{-1/2}\frac{3}{\theta}
=[\theta(0)]^{-1}\exp\left(-\int_{\tau}^{0}(q-2)d\tau'\right),
\end{equation}
where we appealed to (\ref{eq:taudefinition}), (\ref{eq:dsigmadtdefrel}) and (\ref{eq:thetaprimeitoq}). In particular, 
a time coordinate $\sigma$ satisfying (\ref{eq:dsigmadtdefrel}) is of the desired type. In the applications, the following
observation is of interest. 

\begin{lemma}\label{lemma:sigmataulimcorr}
Let $(M,g)$ be a Bianchi spacetime with a monotone volume singularity $t_{-}$. Assume that $g$ solves (\ref{eq:Einsteinsequations}); 
that $\rho\geq\bp$; that $\rho\geq 0$; and that $\Lambda\geq 0$. Assume that $\g_{\bS}\in L^{1}(-\infty,0]$, where $\g_{\bS}$ is given 
by (\ref{eq:gammabSdef}). Then $\sigma\rightarrow-\infty$ corresponds to $\tau\rightarrow-\infty$. 
\end{lemma}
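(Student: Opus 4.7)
\textbf{Proof plan for Lemma~\ref{lemma:sigmataulimcorr}.} The plan is to show that $d\sigma/d\tau$ is bounded below by a positive constant on $(-\infty,0]$, whence integrating from $\tau$ to $0$ forces $\sigma\rightarrow-\infty$ as $\tau\rightarrow-\infty$.

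First I would recall the explicit formula (\ref{eq:dsigmadtauprimaryrelation}),
\[
\frac{d\sigma}{d\tau}=[\theta(0)]^{-1}\exp\left(-\int_{\tau}^{0}(q-2)d\tau'\right),
\]
valid under the assumption that $\sigma$ satisfies (\ref{eq:dsigmadtdefrel}). This identity holds on the whole interval $\mI_0$, and since the factor $[\theta(0)]^{-1}$ is positive, the only issue is to control the exponent uniformly in $\tau\leq 0$.

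Next I would invoke Remark~\ref{remark:boundonqminustwo}: since $\Lambda\geq 0$ and $\rho\geq\bp$, one has $q-2\leq 3\g_{\bS}$ pointwise. Using the hypothesis $\g_{\bS}\in L^{1}(-\infty,0]$, let $M:=\int_{-\infty}^{0}\g_{\bS}(\tau')d\tau'<\infty$. Then for every $\tau\leq 0$,
\[
-\int_{\tau}^{0}(q-2)d\tau'\geq -3\int_{\tau}^{0}\g_{\bS}(\tau')d\tau'\geq -3M,
\]
and consequently
\[
\frac{d\sigma}{d\tau}\geq [\theta(0)]^{-1}e^{-3M}=:c_{0}>0.
\]

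Finally, integrating this lower bound from $\tau$ to $0$ yields
\[
\sigma(0)-\sigma(\tau)=\int_{\tau}^{0}\frac{d\sigma}{d\tau'}d\tau'\geq c_{0}(-\tau),
\]
so $\sigma(\tau)\rightarrow-\infty$ as $\tau\rightarrow-\infty$. Conversely, since $d\sigma/d\tau>0$ the map $\tau\mapsto\sigma$ is a strictly increasing diffeomorphism between its domain and its image, so $\sigma\rightarrow-\infty$ forces $\tau\rightarrow-\infty$ as well. This establishes the claimed correspondence. There is no essential obstacle here; the only non-trivial input is the pointwise inequality $q-2\leq 3\g_{\bS}$ from Remark~\ref{remark:boundonqminustwo}, after which the $L^{1}$ assumption on $\g_{\bS}$ immediately yields the uniform positive lower bound on $d\sigma/d\tau$.
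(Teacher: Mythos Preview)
Your proof is correct and follows essentially the same approach as the paper's: both use Remark~\ref{remark:boundonqminustwo} to obtain $q-2\leq 3\g_{\bS}$, combine this with the $L^{1}$ assumption on $\g_{\bS}$ to bound $d\sigma/d\tau$ from below by a positive constant, and then integrate. The paper phrases the intermediate step as $(q-2)_{+}\in L^{1}(-\infty,0]$, but this is equivalent to your direct estimate on the exponent.
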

\begin{remark}
The assumption $\g_{\bS}\in L^{1}(-\infty,0]$ is automatically fulfilled for all Bianchi types except IX. 
\end{remark}
\begin{proof}
Due to Remark~\ref{remark:boundonqminustwo}, we know that $q-2\leq 3\g_{\bS}$. Combining this estimate with the assumptions, it
is clear that $(q-2)_{+}\in L^{1}(-\infty,0]$. In particular, (\ref{eq:dsigmadtau}) implies that $d\sigma/d\tau\geq c_{0}$ for all
$\tau\leq 0$ and some $c_{0}>0$. Thus, using the fact that $\sigma(0)=0$ (cf. the requirement following (\ref{eq:dsigmadtdefrel})),
\[
-\sigma(\tau)\geq -c_{0}\tau
\]
for all $\tau\leq 0$, so that $\sigma(\tau)\rightarrow-\infty$ as $\tau\rightarrow-\infty$. On the other hand, due to 
(\ref{eq:dsigmadtau}), the only way for $\sigma$ to tend to $-\infty$ is if $\tau\rightarrow-\infty$. The lemma follows. 
\end{proof}

\section{The basic energy}\label{section:basicenergy}

Next, we analyse how the energy (\ref{eq:meepsilonzerodefinition}) evolves over time. Assume, to this end, $\varphi_{1}|_{G\times J}$ to 
have compact support for every compact interval $J\subset I$. Next, fix an $\mff$ with the properties stated in connection with 
(\ref{eq:meepsilonzerodefinition}) and let $h$ be given by (\ref{eq:hRieMetdef}). Given a solution $u$ to (\ref{eq:KleinGordonConformallyRescaled}) 
corresponding to initial data at $\tau=0$ that are compactly supported on $G$, define $\me[u]$ by (\ref{eq:meepsilonzerodefinition});
recall that we, without loss of generality, can assume $\tau_{0}>0$, cf. Remark~\ref{remark:tauzerogreaterthanzero}. 
Note that, due to Corollary~\ref{cor:basicexistenceandcompactsupp}, $\me[u]$ is well defined, smooth, and we are allowed to 
differentiate under the integral sign. In what follows, we tacitly consider the constituents of $\me[u]$, as well as $\me[u]$ itself,
as depending on $\tau$ as opposed to $t$. 
\begin{lemma}\label{lemma:dtaumebasicestimate}
Let $(M,g)$ be a Bianchi spacetime with a monotone volume singularity $t_{-}$. Let $\varphi_{i}\in C^{\infty}(M)$, $i=0,1$, be given and 
assume $\varphi_{1}|_{G\times J}$ to have compact support for every compact interval $J\subset I$. Let $h$ be given by (\ref{eq:hRieMetdef})
and fix an $\mff$ with the properties stated in connection with (\ref{eq:meepsilonzerodefinition}). Let $u$ be a solution to 
(\ref{eq:KleinGordonvarphivarphi}) corresponding to initial data that are compactly supported on $G$. Finally, define $\me[u]$ by 
(\ref{eq:meepsilonzerodefinition}), where all the constituents are considered to be functions of $\tau$. Then
\begin{equation}\label{eq:dtaumebasicestimate}
\begin{split}
\d_{\tau}\me[u] = & \int_{G}\left[(q-2)[u_{\tau}^{2}+\ha^{ij}e_{i}(u)e_{j}(u)]
+(2\de^{i}_{m}-3\Sigma^{i}_{\phantom{i}m})\ha^{mj}e_{i}(u)e_{j}(u)\right]\mu_{h}\\
 & +\int_{G}\left[\hvph_{0}uu_{\tau}-\hvph_{1}u_{\tau}+\mff'\mff u^{2}+\mff^{2}uu_{\tau}\right]\mu_{h}.
\end{split}
\end{equation}
\end{lemma}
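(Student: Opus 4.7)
The plan is to differentiate $\me[u]$ under the integral sign, substitute the equation for $u_{\tau\tau}$, and then integrate by parts in the spatial directions using Lemma~\ref{lemma:divergencecalculations} to dispose of the awkward second-order spatial term. All differentiations under the integral are justified by the compact-support statement of Corollary~\ref{cor:basicexistenceandcompactsupp}, and since the frame $\{e_i\}$ is time-independent, $[\d_\tau, e_i] = 0$, so $\d_\tau e_i(u) = e_i(u_\tau)$.

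First I would compute the three pieces of $\d_\tau \me$. The $\mff^2 u^2/2$ piece yields $\mff'\mff u^2 + \mff^2 u u_\tau$ directly. For the $\ha^{ij}e_i(u)e_j(u)/2$ piece I use (\ref{eq:dtauhamjfinal}) together with $\ha^{ij}\ha_{jk}=\de^i_k$ to obtain
\[
\d_\tau \ha^{ij} = -6\Sigma^i_{\phantom{i}m}\ha^{mj} + 2q\ha^{ij},
\]
which splits the derivative into a ``geometric'' term $\tfrac12(\d_\tau\ha^{ij})e_i(u)e_j(u)$ and a term $\ha^{ij}e_i(u_\tau)e_j(u)$ that I will leave unresolved for the moment. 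For the $u_\tau^2/2$ piece, I substitute (\ref{eq:KleinGordonConformallyRescaled}) (via Lemma~\ref{lemma:conformalrescalingoftheequation}, noting that $\varphi_i$ are replaced by $\hvph_i$ under the conformal rescaling):
\[
u_\tau u_{\tau\tau} = u_\tau \ha^{ij} e_i[e_j(u)] + (q-2)u_\tau^2 - 2u_\tau X_0(u) + \hvph_0 u u_\tau - \hvph_1 u_\tau.
\]

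The main step is to rewrite $\int_G u_\tau \ha^{ij} e_i[e_j(u)]\mu_h$. Pulling the $\tau$-only factor $\ha^{ij}$ outside the spatial integral and applying Lemma~\ref{lemma:divergencecalculations} to the vector field $e_i \in \mfg$ with $f = e_j(u) u_\tau$ gives
\[
\int_G e_i[e_j(u)]\, u_\tau \mu_h = -\int_G e_j(u)\, e_i(u_\tau)\mu_h + 2\xi_{G,i}\int_G e_j(u)\, u_\tau \mu_h.
\]
Multiplying by $\ha^{ij}$, the boundary-like term is exactly $2\int_G X_0(u)\, u_\tau\mu_h$ by the definition (\ref{eq:Xzerodefinition}) of $X_0$, which cancels the $-2\int_G u_\tau X_0(u)\mu_h$ arising from the equation. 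The remaining piece $-\ha^{ij}\int_G e_j(u)\, e_i(u_\tau)\mu_h$ is symmetric in $i,j$ and equals $-\tfrac12\int_G \ha^{ij}\d_\tau[e_i(u)e_j(u)]\mu_h$, which precisely cancels the unresolved $\ha^{ij}e_i(u_\tau)e_j(u)$ term from the derivative of the gradient energy.

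After the cancellations, only the ``geometric'' part $\tfrac12(\d_\tau \ha^{ij})e_i(u)e_j(u)$, the $(q-2)u_\tau^2$, and the lower-order pieces remain. Since
\[
\tfrac12(\d_\tau \ha^{ij})e_i(u)e_j(u) = \bigl[q\de^i_m - 3\Sigma^i_{\phantom{i}m}\bigr]\ha^{mj}e_i(u)e_j(u),
\]
rewriting $q\de^i_m = (q-2)\de^i_m + 2\de^i_m$ regroups the gradient contribution as the sum $(q-2)\ha^{ij}e_i(u)e_j(u) + (2\de^i_m - 3\Sigma^i_{\phantom{i}m})\ha^{mj}e_i(u)e_j(u)$, which is the form claimed in (\ref{eq:dtaumebasicestimate}). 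The one place requiring care is the integration by parts for non-unimodular $G$: the non-vanishing divergence constant $\xi_{G,i}$ is essential to generate precisely the $2\int_G X_0(u)u_\tau\mu_h$ needed to cancel the $-2X_0(u)$ contribution from the equation; once this matching is verified, the identity drops out.
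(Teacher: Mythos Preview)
Your argument is correct and is essentially the same as the paper's: both proofs differentiate under the integral, use (\ref{eq:dtauhamjfinal}) to compute $\d_\tau\ha^{ij}$, and apply Lemma~\ref{lemma:divergencecalculations} to move a spatial derivative, producing the $2X_0(u)u_\tau$ term that cancels against the $-2X_0(u)u_\tau$ from the equation. The only cosmetic difference is the order: the paper integrates by parts on $\int_G\ha^{ij}e_i(u)e_j(u_\tau)\mu_h$ first and then substitutes (\ref{eq:KleinGordonConformallyRescaled}), whereas you substitute first and then integrate by parts on $\int_G u_\tau\ha^{ij}e_i[e_j(u)]\mu_h$; the two routes are algebraically equivalent.
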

\begin{remark}
In the statement, we use the terminology introduced in Lemma~\ref{lemma:hgintro} and Definitions~\ref{def:Bianchispacetime} and 
\ref{def:monotonevolumesingularity}.
\end{remark}
\begin{remark}
Considering (\ref{eq:dtaumebasicestimate}), it is clear that the estimate (\ref{eq:twodeijminusthreeSigmaij}) is of interest. 
Assume, therefore, that $g$ solves (\ref{eq:Einsteinsequations}), that $\rho\geq 0$ and that $\Lambda\geq 0$. Then the second term 
in the integrand (on the first line of the right hand side of (\ref{eq:dtaumebasicestimate})) is non-negative for all the Bianchi 
types except IX (since $\bS\leq 0$ for all the Bianchi types except IX); cf. (\ref{eq:twodeijminusthreeSigmaij}). In the Bianchi 
type IX cases of interest here, the positive part of the scalar curvature, say $\bS_{+}$, is such that $\bS_{+}/\theta^{2}$ decays 
exponentially as $\tau\rightarrow-\infty$. Returning to (\ref{eq:dtaumebasicestimate}), it is then clear that the second term in the 
integrand (on the first line of the right hand side) does not contribute to any significant growth of the energy as $\tau\rightarrow-\infty$. 
\end{remark}
\begin{proof}
Time differentiating $\me[u]$ yields 
\[
\d_{\tau}\me[u]=\int_{G}\left[u_{\tau}u_{\tau\tau}+\frac{1}{2}(\d_{\tau}\ha^{ij})e_{i}(u)e_{j}(u)
+\ha^{ij}e_{i}(u)e_{j}(u_{\tau})+\mff'\mff u^{2}+\mff^{2}uu_{\tau}\right]\mu_{h}.
\]
However, 
\begin{equation*}
\begin{split}
\int_{G}\ha^{ij}e_{i}(u)e_{j}(u_{\tau})\mu_{h}
 = & \int_{G}e_{j}\left[\ha^{ij}e_{i}(u)u_{\tau}\right]\mu_{h}
-\int_{G}\ha^{ij}e_{j}[e_{i}(u)]u_{\tau}\mu_{h}\\
 = & \int_{G}2c_{j}\ha^{ij}e_{i}(u)u_{\tau}\mu_{h}
-\int_{G}\ha^{ij}e_{j}[e_{i}(u)]u_{\tau}\mu_{h}\\
 = & \int_{G}2X_{0}(u)u_{\tau}\mu_{h}
-\int_{G}\ha^{ij}e_{j}[e_{i}(u)]u_{\tau}\mu_{h},
\end{split}
\end{equation*}
where $c_{i}:=\xi_{G}(e_{i})$ and we appealed to (\ref{eq:intGXfmuhfinal}); cf. Lemma~\ref{lemma:divergencecalculations} and the proof of 
Lemma~\ref{lemma:conformalrescalingoftheequation}. Combining this observation with (\ref{eq:KleinGordonConformallyRescaled}) yields
\begin{equation*}
\begin{split}
\d_{\tau}\me[u] = & \int_{G}\left[u_{\tau}\left(u_{\tau\tau}-\ha^{ij}e_{j}[e_{i}(u)]+2X_{0}(u)\right)
+\frac{1}{2}(\d_{\tau}\ha^{ij})e_{i}(u)e_{j}(u)\right]\mu_{h}\\
 & +\int_{G}\left[\mff'\mff u^{2}+\mff^{2}uu_{\tau}\right]\mu_{h}\\
 = & \int_{G}\left[(q-2)u_{\tau}^{2}-(3\Sigma^{i}_{\phantom{i}m}-q\de^{i}_{m})\ha^{mj}e_{i}(u)e_{j}(u)
+\hvph_{0}uu_{\tau}-\hvph_{1}u_{\tau}\right]\mu_{h}\\
 & +\int_{G}\left[\mff'\mff u^{2}+\mff^{2}uu_{\tau}\right]\mu_{h},
\end{split}
\end{equation*}
where we appealed to (\ref{eq:dtauhamjfinal}). The lemma follows.
\end{proof}

Next, we give an example of an estimate of $\me$ that follows from the above calculations. 
\begin{cor}\label{cor:energyestimatefirstexample}
Let $(M,g)$ be a Bianchi spacetime with a monotone volume singularity $t_{-}$. Assume that $g$ solves (\ref{eq:Einsteinsequations}); 
that $\rho\geq\bp$; that $\rho\geq 0$; and that $\Lambda\geq 0$. Let $\varphi_{0}\in C^{\infty}(M)$ be given and
let $\varphi_{1}:=0$. Let $h$ be given by (\ref{eq:hRieMetdef}) and fix an $\mff$ with the properties stated in connection with 
(\ref{eq:meepsilonzerodefinition}). Assume, moreover, that $\g_{\bS}$, given by (\ref{eq:gammabSdef}),
satisfies $\g_{\bS}\in L^{1}((-\infty,0])$ and that there is a function $\mff_{\hvph}\in L^{1}(-\infty,0]$ such that 
$|\hvph_{0}|\leq \mff_{\hvph}\mff$ for all $\tau\leq 0$. Then there is a constant $C_{0}$ such that for every smooth solution $u$ to 
(\ref{eq:KleinGordonConformallyRescaled}) corresponding to compactly supported initial data,
\[
\me[u](\tau)\leq C_{0}\me[u](0)\exp\left[2\int_{\tau}^{0}[2-q(\tau')]d\tau'\right]
\]
for all $\tau\leq 0$. 
\end{cor}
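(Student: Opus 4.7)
The plan is to combine the identity for $\d_{\tau}\me[u]$ supplied by Lemma~\ref{lemma:dtaumebasicestimate} with the pointwise bound of Lemma~\ref{lemma:Sigmaupperandlowerbound} to derive a differential inequality of the form $\d_{\tau}\me[u] \geq [2(q-2) - L(\tau)]\me[u]$ with $L \in L^{1}((-\infty,0])$, and then to integrate it by a standard integrating-factor (Gronwall) argument.

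First I would abbreviate $A := \int_{G} u_{\tau}^{2}\mu_{h}$, $B := \int_{G} \ha^{ij}e_{i}(u)e_{j}(u)\mu_{h}$ and $C := \int_{G} \mff^{2} u^{2}\mu_{h}$, so that $\me[u] = (A+B+C)/2$ and each of $A,B,C$ lies in $[0,2\me[u]]$. Since $\varphi_{1}=0$, only five terms appear on the right of (\ref{eq:dtaumebasicestimate}). The leading contribution $(q-2)(A+B)$ I rewrite as $2(q-2)\me[u] - (q-2)C$; the unwanted $-(q-2)C$ is bounded below by $-(q-2)_{+}C \geq -6\g_{\bS}\me[u]$, using Remark~\ref{remark:boundonqminustwo} (which gives $(q-2)_{+} \leq 3\g_{\bS}$ under the standing hypotheses $\rho\geq\bp$, $\Lambda\geq 0$) together with $C\leq 2\me[u]$. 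Since the conformal rescaling $\ha^{ij} = 9\theta^{-2}a^{ij}$ is a positive scalar multiple, Lemma~\ref{lemma:Sigmaupperandlowerbound} yields the same pointwise inequality with $\ha$ in place of $a$, hence the matrix term in (\ref{eq:dtaumebasicestimate}) is bounded below by $-\tfrac{3}{2}\g_{\bS}B \geq -3\g_{\bS}\me[u]$. The remaining summand $(\mff'/\mff)C$ is nonnegative because $\mff'\geq 0$, so I simply discard it.

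The two cross terms $\int_{G}\hvph_{0}uu_{\tau}\mu_{h}$ and $\int_{G}\mff^{2}uu_{\tau}\mu_{h}$ are both handled by the weighted AM--GM inequality $2\mff|u||u_{\tau}| \leq \mff^{2}u^{2}+u_{\tau}^{2}$. Multiplying by $\mff_{\hvph}$ and using $|\hvph_{0}|\leq \mff_{\hvph}\mff$ gives $\bigl|\int_{G}\hvph_{0}uu_{\tau}\mu_{h}\bigr| \leq \mff_{\hvph}\me[u]$, and multiplying by $\mff$ instead gives $\bigl|\int_{G}\mff^{2}uu_{\tau}\mu_{h}\bigr| \leq \mff\,\me[u]$. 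Both $\mff$ and $\mff_{\hvph}$ are in $L^{1}((-\infty,0])$ by hypothesis. Combining all contributions produces
\[
\d_{\tau}\me[u] \geq 2(q-2)\me[u] - L(\tau)\me[u], \qquad L := 9\g_{\bS} + \mff_{\hvph} + \mff \in L^{1}((-\infty,0]).
\]

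To close, I set $H(\tau) := \me[u](\tau)\exp\bigl(\int_{\tau}^{0}[L(\tau') + 2(2-q(\tau'))]\,d\tau'\bigr)$ and verify directly that $H'(\tau)\geq 0$, so that $H$ is non-decreasing on $(-\infty,0]$ with $H(0) = \me[u](0)$. Rearranging gives $\me[u](\tau) \leq \me[u](0)\exp\bigl(\int_{\tau}^{0}L\,d\tau'\bigr)\exp\bigl(2\int_{\tau}^{0}(2-q)\,d\tau'\bigr)$, and bounding the first factor by $C_{0} := \exp\bigl(\int_{-\infty}^{0}L\,d\tau'\bigr) < \infty$ delivers the corollary. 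The only real technical point is the choice of AM--GM pairing: attaching the factor $\mff$ to $|u|$ produces precisely the energy-density combination $\mff^{2}u^{2}+u_{\tau}^{2}$, and the hypothesis $|\hvph_{0}|\leq\mff_{\hvph}\mff$ is tailored so that the residual weight of each cross term is $L^{1}$ in $\tau$. I do not foresee further obstacles; the delicate Bianchi type IX case is subsumed into the single assumption $\g_{\bS}\in L^{1}$.
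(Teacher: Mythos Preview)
Your proof is essentially the paper's own argument (which is stated in one line: combine Lemmas~\ref{lemma:Sigmaupperandlowerbound} and \ref{lemma:dtaumebasicestimate} with Remark~\ref{remark:boundonqminustwo} to obtain $\d_{\tau}\me \geq 2(q-2)\me - \g_{\rotot}\me$ for some $\g_{\rotot}\in L^{1}$, then integrate), with the term-by-term estimates filled in correctly. The only slip is the sign in your integrating factor: with your $H$ one computes $H' = \bigl[\me' - (L + 2(2-q))\me\bigr]F$, whereas the differential inequality gives $\me' + (L + 2(2-q))\me \geq 0$; replacing the exponent in $H$ by its negative (equivalently, $H(\tau) = \me(\tau)\exp\bigl(\int_{\tau}^{0}[2(q-2)-L]\,d\tau'\bigr)$) restores $H'\geq 0$ and delivers exactly the inequality you state.
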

\begin{remark}
In the statement, we use the terminology introduced in Lemma~\ref{lemma:hgintro} and Definitions~\ref{def:Bianchispacetime} and 
\ref{def:monotonevolumesingularity}.
\end{remark}
\begin{remark}
The constant $C_{0}$ only depends on $\|\mff\|_{L^{1}(-\infty,0]}$, $\|\mff_{\hvph}\|_{L^{1}(-\infty,0]}$ and 
$\|\g_{\bS}\|_{L^{1}(-\infty,0]}$. 
\end{remark}
\begin{proof}
Combining Lemmas~\ref{lemma:Sigmaupperandlowerbound}, \ref{lemma:dtaumebasicestimate}, Remark~\ref{remark:boundonqminustwo} and the 
assumptions, it follows that 
\[
\d_{\tau}\me\geq 2(q-2)\me-\g_{\rotot}\me
\]
for all $\tau\leq 0$, where $\g_{\rotot}\in L^{1}((-\infty,0])$. Moreover, $\|\g_{\rotot}\|_{L^{1}(-\infty,0]}$ only depends on 
$\|\mff\|_{L^{1}(-\infty,0]}$, $\|\mff_{\hvph}\|_{L^{1}(-\infty,0]}$ and $\|\g_{\bS}\|_{L^{1}(-\infty,0]}$. The lemma follows. 
\end{proof}

\section{Higher order energies}\label{section:higherorderenergies}

In order to estimate higher order energies, we apply $e_{K}$ to (\ref{eq:KleinGordonConformallyRescaled}), where $K$ is a 
vector field multiindex; cf. Definition~\ref{definition:vectorfieldmultiindex}. We then appeal to Lemma~\ref{lemma:dtaumebasicestimate}. 
To be allowed to do so, we need to calculate the commutator of 
$e_{K}$ with the operator defined by the left hand side of (\ref{eq:KleinGordonConformallyRescaled}).

\subsection{Commutators}
To begin with, we compute the commutator of $e_{k}$ and $\ha^{ij}e_{i}e_{j}$:
\begin{equation}\label{eq:ekeisquaredreformulation}
\begin{split}
\ha^{ij}e_{k}e_{i}e_{j} = & \ha^{ij}[e_{k},e_{i}]e_{j}+\ha^{ij}e_{i}e_{k}e_{j}=\ha^{ij}\g_{ki}^{l}e_{l}e_{j}
+\ha^{ij}e_{i}[e_{k},e_{j}]+\ha^{ij}e_{i}e_{j}e_{k}\\
 = & \ha^{ij}\g_{ki}^{l}\g_{lj}^{m}e_{m}+\ha^{ij}\g_{ki}^{l}e_{j}e_{l}+\ha^{ij}\g_{kj}^{l}e_{i}e_{l}+\ha^{ij}e_{i}e_{j}e_{k}\\
 = & \ha^{ij}\g_{ki}^{l}\g_{lj}^{m}e_{m}+2\ha^{ij}\g_{ki}^{l}e_{j}e_{l}+\ha^{ij}e_{i}e_{j}e_{k}.
\end{split}
\end{equation}
Next, we calculate the commutator $[e_{K},\ha^{ij}e_{i}e_{j}]$. 

\begin{lemma}\label{lemma:higherordercommutators}
Let $G$ be a $3$-dimensional Lie group, $\{e_{i}\}$ be a basis of the Lie algebra and $\ha^{ij}$, $i,j=1,2,3$, be the components of 
a symmetric matrix. Given a vector field multiindex $K$ with $|K|\geq 1$, there are constants $c_{K,ij}^{K_{1}}$ and $c_{K,j}^{K_{2}}$ 
(for all vector field multiindices $K_{1}$ and $K_{2}$ with the property that $|K_{1}|,|K_{2}|=|K|$) such that 
\begin{equation}\label{eq:haijEKeiejcomm}
\ha^{ij}e_{K}e_{i}e_{j}=\textstyle{\sum}_{|K_{1}|=|K|}\ha^{ij}c_{K,ij}^{K_{1}}e_{K_{1}}
+\textstyle{\sum}_{|K_{2}|=|K|}\ha^{ij}c_{K,j}^{K_{2}}e_{i}e_{K_{2}}+\ha^{ij}e_{i}e_{j}e_{K}.
\end{equation}
\end{lemma}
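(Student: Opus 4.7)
I would proceed by induction on the order $|K|$. The base case $|K|=1$ is exactly the identity (\ref{eq:ekeisquaredreformulation}) stated just before the lemma: with $K=\{(k,1)\}$, one reads off $c^{K_{1}}_{K,ij}=\g^{l}_{ki}\g^{m}_{lj}$ (with $K_{1}$ the order $1$ multiindex $\{(m,1)\}$) and $c^{K_{2}}_{K,j}=2\g^{l}_{kj}$ (with $K_{2}=\{(j,1),(l,1)\}$, or its reordering, of order $1$), plus the trailing term $\ha^{ij}e_{i}e_{j}e_{k}$.

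For the inductive step, fix $K$ with $|K|\ge 2$, split $e_{K}=e_{K_{0}}e_{m}$ with $|K_{0}|=|K|-1$, and apply (\ref{eq:ekeisquaredreformulation}) with $k=m$ after multiplying on the left by $e_{K_{0}}$:
\[
\ha^{ij}e_{K}e_{i}e_{j}=\ha^{ij}\g^{l}_{mi}\g^{n}_{lj}\,e_{K_{0}}e_{n}+2\ha^{ij}\g^{l}_{mi}\,e_{K_{0}}e_{j}e_{l}+\ha^{ij}\,e_{K_{0}}e_{i}e_{j}e_{m}.
\]
The first term is immediately of first type, since $e_{K_{0}}e_{n}$ is a product of order $|K|$ and the coefficient $\g^{l}_{mi}\g^{n}_{lj}$ depends on $i,j$. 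The third term is handled by applying the inductive hypothesis to $\ha^{ij}e_{K_{0}}e_{i}e_{j}$ and then multiplying the resulting identity on the right by $e_{m}$; this preserves the trichotomy, produces the required trailing $\ha^{ij}e_{i}e_{j}e_{K}$, and sends first- and second-type pieces of order $|K|-1$ to first- and second-type pieces of order $|K|$ (note in particular that the second-type coefficients $c^{K_{2}'}_{K_{0},j}$ still depend only on $j$).

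The only non-routine piece is the middle term $2\ha^{ij}\g^{l}_{mi}\,e_{K_{0}}e_{j}e_{l}$, where the index $j$ contracted with $\ha^{ij}$ is trapped between $e_{K_{0}}$ and $e_{l}$ rather than sitting at the front. To deal with it, I would first prove an auxiliary commutator lemma by a short induction on $|K|$: for every multiindex $K$ with $|K|\ge 1$ and every $j\in\{1,2,3\}$,
\[
e_{K}e_{j}=e_{j}e_{K}+\textstyle{\sum}_{|K'|=|K|}d^{K'}_{K,j}e_{K'},
\]
with constants $d^{K'}_{K,j}$ built from iterated structure constants, the induction using the base identity $e_{p}e_{j}=e_{j}e_{p}+\g^{q}_{pj}e_{q}$ and the recursion $e_{K_{0}}e_{m}e_{j}=e_{K_{0}}(e_{j}e_{m}+\g^{q}_{mj}e_{q})$.

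Inserting this expansion into the middle term splits it into $2\ha^{ij}\g^{l}_{mi}\,e_{j}e_{K_{0}}e_{l}$ plus a linear combination of $2\ha^{ij}\g^{l}_{mi}d^{K''}_{K_{0},j}\,e_{K''}e_{l}$, and the latter are order $|K|$ with coefficients depending on $i,j$, hence of first type. For the former, I would swap the dummy labels $i\leftrightarrow j$ and invoke $\ha^{ij}=\ha^{ji}$ to rewrite it as $2\ha^{ij}\g^{l}_{mj}\,e_{i}e_{K_{0}}e_{l}$, which is of second type: $e_{i}$ stands at the front, $e_{K_{0}}e_{l}$ has order $|K|$, and the coefficient $2\g^{l}_{mj}$ depends only on $j$ (the index $l$ being a free choice parameterising the various $K_{2}$'s). \emph{The main obstacle} in the argument is precisely this bookkeeping: the lemma forces $c^{K_{2}}_{K,j}$ to depend on $j$ alone, and it is exactly the symmetry of $\ha^{ij}$ together with the auxiliary commutator lemma that enforces this restriction; everything else is a mechanical reshuffling of indices that preserves the trichotomy established in the base case.
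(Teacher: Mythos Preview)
Your proof is correct but proceeds in the reverse direction from the paper, and the difference is instructive. The paper peels off the \emph{leftmost} factor of $e_{K}$, writing $e_{K}=e_{i_{1}}e_{K_{1}}$ with $|K_{1}|=|K|-1$, then applies the inductive hypothesis to $\ha^{ij}e_{K_{1}}e_{i}e_{j}$ and multiplies on the left by $e_{i_{1}}$. The only reshuffling needed is then the single commutation $e_{i_{1}}e_{i}=e_{i}e_{i_{1}}+\g_{i_{1}i}^{l}e_{l}$ to move the extra factor past the leading $e_{i}$ in the second-type terms, followed by one more appeal to (\ref{eq:ekeisquaredreformulation}) for the trailing $\ha^{ij}e_{i_{1}}e_{i}e_{j}e_{K_{1}}$. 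No auxiliary lemma is required. Your approach, peeling off the \emph{rightmost} factor and applying (\ref{eq:ekeisquaredreformulation}) first, works too, but forces you to commute $e_{j}$ past the entire block $e_{K_{0}}$; hence your need for the auxiliary identity $e_{K}e_{j}=e_{j}e_{K}+\sum_{|K'|=|K|}d^{K'}_{K,j}e_{K'}$. That identity is of course true and its inductive proof is routine, so your argument is complete; the paper simply sidesteps it by choosing the other end of the word. One small slip: in your base case description you write $K_{2}=\{(j,1),(l,1)\}$ ``of order $1$'', which is inconsistent; after the $i\leftrightarrow j$ swap the correct identification is $e_{K_{2}}=e_{l}$, i.e.\ $K_{2}=\{(l,1)\}$.
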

\begin{remark}
The constants $c_{K,ij}^{K_{1}}$ and $c_{K,j}^{K_{2}}$ only depend on $i$, $j$, $K_{1}$, $K_{2}$, $K$ and the structure constants 
$\g^{k}_{ij}$. 
\end{remark}
\begin{proof}
Note that, due to (\ref{eq:ekeisquaredreformulation}), (\ref{eq:haijEKeiejcomm}) holds for $|K|=1$. Assume, inductively, 
that there is an $l\geq 1$ such that (\ref{eq:haijEKeiejcomm}) holds for $|K|\leq l$. Given a vector field multiindex $K$ 
such that $|K|=l+1$, there are $K_{1}$ and $i_{1}$ such that $e_{K}=e_{i_{1}}e_{K_{1}}$. Appealing to the inductive assumption, 
\begin{equation}\label{eq:termstobehandledininductiveargument}
\ha^{ij}e_{K}e_{i}e_{j}=\textstyle{\sum}_{|K_{a}|=|K_{1}|}\ha^{ij}c_{K_{1},ij}^{K_{a}}e_{i_{1}}e_{K_{a}}
+\textstyle{\sum}_{|K_{b}|=|K_{1}|}\ha^{ij}c_{K_{1},j}^{K_{b}}e_{i_{1}}e_{i}e_{K_{b}}+\ha^{ij}e_{i_{1}}e_{i}e_{j}e_{K_{1}}.
\end{equation}
The first term on the right hand side is already of a form consistent with the inductive hypothesis. Since 
\[
\ha^{ij}c_{K_{1},j}^{K_{b}}e_{i_{1}}e_{i}e_{K_{b}}=\ha^{ij}c_{K_{1},j}^{K_{b}}\g_{i_{1}i}^{l}e_{l}e_{K_{b}}+\ha^{ij}c_{K_{1},j}^{K_{b}}e_{i}e_{i_{1}}e_{K_{b}}
\]
and both of the terms appearing on the right hand side of this equality are of a form consistent with the inductive hypothesis, the 
second term appearing in (\ref{eq:termstobehandledininductiveargument}) can be handled. In order to demonstrate that the last
term on the right hand side of (\ref{eq:termstobehandledininductiveargument}) can be rewritten in the desired form, it is sufficient to 
appeal to (\ref{eq:ekeisquaredreformulation}). To conclude, the inductive hypothesis holds for all $l\geq 1$. 
\end{proof}

\textbf{Computing $[e_{K},X_{0}]$.} Finally, before applying $e_{K}$ to the equation, we need to compute the commutator of $e_{K}$ 
and $X_{0}$. Note, to this end, that 
\begin{equation}\label{eq:ekXzerocomm}
e_{k}X_{0}=\frac{1}{2}\ha^{il}\g^{j}_{lj}\g^{m}_{ki}e_{m}+X_{0}e_{k},
\end{equation}
where we appealed to (\ref{eq:Xzerodefinition}). We wish to prove that, given a vector field multiindex $K$, there are constants 
$d_{K,ij}^{K_{1}}$ (for all vector field multiindices $K_{1}$ with $|K_{1}|=|K|$ and all $i,j=1,2,3$) such that 
\begin{equation}\label{eq:eKXzerocomm}
e_{K}X_{0}=\textstyle{\sum}_{|K_{1}|=|K|}\ha^{ij}d_{K,ij}^{K_{1}}e_{K_{1}}+X_{0}e_{K}.
\end{equation}
Due to (\ref{eq:ekXzerocomm}), we know that (\ref{eq:eKXzerocomm}) holds for $|K|=1$. Assuming (\ref{eq:eKXzerocomm}) to hold for
$|K|=l\geq 1$, it can be demonstrated that it holds for $|K|=l+1$; the argument is similar to, but simpler than, the proof of 
Lemma~\ref{lemma:higherordercommutators}. Thus (\ref{eq:eKXzerocomm}) holds for all $|K|\geq 1$. 

\subsection{Higher order energy estimates.} Fixing a vector field multiindex $K$ and applying $e_{K}$ to 
(\ref{eq:KleinGordonConformallyRescaled}) yields (assuming $\varphi_{0}$ to only depend on $t$)
\begin{equation}\label{eq:eKappliedtoequation}
\begin{split}
-(e_{K}u)_{\tau\tau}+\ha^{ij}e_{i}e_{j}(e_{K}u)+(q-2)(e_{K}u)_{\tau}-2X_{0}(e_{K}u)+\hvph_{0}e_{K}u=\hvph_{1,K},
\end{split}
\end{equation}
where 
\begin{equation}\label{eq:hvphKonedef}
\hvph_{1,K}:=\left[\ha^{ij}e_{i}e_{j},e_{K}\right]u+2[e_{K},X_{0}]u+e_{K}(\hvph_{1}).
\end{equation}

\begin{lemma}\label{lemma:dtaumehigherorderestimate}
Let $(M,g)$ be a Bianchi spacetime with a monotone volume singularity $t_{-}$. Let $\varphi_{i}\in C^{\infty}(M)$, $i=0,1$, be given and 
assume $\varphi_{1}|_{G\times J}$ to have compact support for every compact interval $J\subset I$. Assume, moreover, $\hvph_{0}$ to only 
depend on $\tau$. Let $h$ be given by (\ref{eq:hRieMetdef}) and fix an $\mff$ with the properties stated in connection with 
(\ref{eq:meepsilonzerodefinition}). Given a smooth solution $u$ to (\ref{eq:KleinGordonConformallyRescaled}) corresponding to initial data 
that are compactly supported on $G$, define $\me_{l}[u]$ by (\ref{eq:meepsilonldefinition}). Then
\begin{equation}\label{eq:dtaumehigherorderestimate}
\begin{split}
\d_{\tau}\me_{l}[u] = & \int_{G}\textstyle{\sum}_{|K|\leq l}(q-2)\left[(e_{K}u)_{\tau}^{2}+\ha^{ij}e_{i}e_{K}(u)e_{j}e_{K}(u)\right]\mu_{h}\\
 & +\int_{G}\textstyle{\sum}_{|K|\leq l}\left[ (2\de^{i}_{m}-3\Sigma^{i}_{\phantom{i}m})\ha^{mj}e_{i}e_{K}(u)e_{j}e_{K}(u)
+\hvph_{0}e_{K}(u)(e_{K}u)_{\tau}\right]\mu_{h}\\
 & +\int_{G}\textstyle{\sum}_{|K|\leq l}\textstyle{\sum}_{|K_{1}|=|K|}\ha^{ij}c_{K,ij}^{K_{1}}e_{K_{1}}(u)(e_{K}u)_{\tau}\mu_{h}\\
 & +\int_{G}\textstyle{\sum}_{|K|\leq l}\textstyle{\sum}_{|K_{2}|=|K|}\ha^{ij}c_{K,j}^{K_{2}}e_{i}e_{K_{2}}(u)(e_{K}u)_{\tau}\mu_{h}\\
 & +\int_{G}\textstyle{\sum}_{|K|\leq l}\left[-e_{K}(\hvph_{1})(e_{K}u)_{\tau}
+\mff'\mff[e_{K}(u)]^{2}+\mff^{2}e_{K}(u)(e_{K}u)_{\tau}\right]\mu_{h},
\end{split}
\end{equation}
where $c_{K,ij}^{K_{1}}$ and $c_{K,j}^{K_{2}}$ are constants only depending on $i$, $j$, $K_{1}$, $K_{2}$, $K$ and the structure constants 
$\g^{k}_{ij}$. 
\end{lemma}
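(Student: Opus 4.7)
The plan is to repeat, for each $e_K u$ with $|K|\leq l$, essentially the same calculation that appears in the proof of Lemma~\ref{lemma:dtaumebasicestimate}. First I would apply $e_K$ to (\ref{eq:KleinGordonConformallyRescaled}); since $\hvph_0$ depends only on $\tau$ by assumption, this yields (\ref{eq:eKappliedtoequation}) with the source $\hvph_{1,K}$ of (\ref{eq:hvphKonedef}), where the commutator identities (\ref{eq:haijEKeiejcomm}) and (\ref{eq:eKXzerocomm}) are used to recast the commutators $[\ha^{ij}e_i e_j,e_K]u$ and $[e_K,X_0]u$ in the stated form.

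Differentiating $\me_l[u]$ under the integral sign (permissible thanks to the compact support guaranteed by Corollary~\ref{cor:basicexistenceandcompactsupp}) gives
\[
\d_\tau \me_l[u] = \sum_{|K|\leq l}\int_G \left\{ (e_K u)_\tau (e_K u)_{\tau\tau} + \tfrac{1}{2}(\d_\tau\ha^{ij}) e_i e_K(u) e_j e_K(u) + \ha^{ij} e_i e_K(u) e_j[(e_K u)_\tau] + \mff'\mff[e_K(u)]^2 + \mff^2 e_K(u)(e_K u)_\tau \right\}\mu_h.
\]
The mixed spatial-time-derivative term would be integrated by parts exactly as in the proof of Lemma~\ref{lemma:dtaumebasicestimate}, using Lemma~\ref{lemma:divergencecalculations} to handle the divergence on $G$. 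This produces $2X_0(e_K u)(e_K u)_\tau - \ha^{ij}e_j e_i e_K(u)(e_K u)_\tau$ under the integral sign; no boundary contribution survives because $u(\cdot,\tau)$ has compact support on $G$ for every $\tau$.

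Substituting for $(e_K u)_{\tau\tau}$ from (\ref{eq:eKappliedtoequation}), the $\ha^{ij}e_i e_j(e_K u)$ piece cancels with the $-\ha^{ij}e_j e_i e_K(u)$ obtained from integration by parts (by symmetry of $\ha^{ij}$), and the two $X_0(e_K u)$ contributions cancel outright. What is left from this substitution is $\bigl[(q-2)(e_K u)_\tau + \hvph_0 e_K u - \hvph_{1,K}\bigr](e_K u)_\tau$. The surviving $\tfrac{1}{2}(\d_\tau\ha^{ij}) e_i e_K(u) e_j e_K(u)$ is rewritten using $\d_\tau \ha^{ij} = -6\ha^{jl}(\Sigma^i_{\phantom{i}l} - \tfrac{1}{3}q\delta^i_l)$, which follows from (\ref{eq:dtauhamjfinal}); combining it with the $(q-2)(e_K u)_\tau^2$ contribution produces exactly the $(q-2)[\cdots]$ and $(2\de^i_m - 3\Sigma^i_{\phantom{i}m})\ha^{mj}[\cdots]$ pieces on the first two lines of (\ref{eq:dtaumehigherorderestimate}).

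Finally, I would unfold $-\hvph_{1,K}(e_K u)_\tau$ using (\ref{eq:hvphKonedef}): the commutator $-[\ha^{ij}e_i e_j, e_K]u$ expands via (\ref{eq:haijEKeiejcomm}) into the two sums over $K_1$ and $K_2$ with coefficients $\ha^{ij}c_{K,ij}^{K_1}$ and $\ha^{ij}c_{K,j}^{K_2}$; the piece $-2[e_K,X_0]u$, by (\ref{eq:eKXzerocomm}), has precisely the same index structure as the $K_1$-sum and is absorbed into the $c_{K,ij}^{K_1}$ constants (the lemma only asserts existence of such constants with the stated dependencies); and the term $-e_K(\hvph_1)(e_K u)_\tau$ already appears in the stated form. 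The only genuine difficulty is bookkeeping: one must track which terms come from which commutator expansion and verify that all resulting contributions fit the claimed index patterns. There is no analytic obstacle beyond what the proof of Lemma~\ref{lemma:dtaumebasicestimate} already handles.
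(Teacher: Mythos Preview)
Your proposal is correct and follows essentially the same approach as the paper. The only cosmetic difference is that the paper invokes Lemma~\ref{lemma:dtaumebasicestimate} as a black box applied to $e_{K}u$ (with source $\hvph_{1,K}$) rather than re-deriving the integration-by-parts step, and then unfolds $-\hvph_{1,K}(e_{K}u)_{\tau}$ exactly as you describe, absorbing the $[e_{K},X_{0}]$ contribution from (\ref{eq:eKXzerocomm}) into the $c_{K,ij}^{K_{1}}$ constants.
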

\begin{remark}
In the statement, we use the terminology introduced in Lemma~\ref{lemma:hgintro} and Definitions~\ref{def:Bianchispacetime} and 
\ref{def:monotonevolumesingularity}.
\end{remark}
\begin{proof}
We wish to apply Lemma~\ref{lemma:dtaumebasicestimate} to (\ref{eq:eKappliedtoequation}). In order to be allowed to do so, we need to verify that 
the function corresponding to $\varphi_{1}$ in Lemma~\ref{lemma:dtaumebasicestimate} satisfies the stated requirements. However, this follows from 
the assumptions of the present lemma and Corollary~\ref{cor:basicexistenceandcompactsupp}. Appealing to Lemma~\ref{lemma:dtaumebasicestimate}, it 
can be calculated that 
\begin{equation*}
\begin{split}
\d_{\tau}\me_{l}[u] = & \int_{G}\textstyle{\sum}_{|K|\leq l}(q-2)\left[(e_{K}u)_{\tau}^{2}+\ha^{ij}e_{i}e_{K}(u)e_{j}e_{K}(u)\right]\mu_{h}\\
 & +\int_{G}\textstyle{\sum}_{|K|\leq l}\left[ (2\de^{i}_{m}-3\Sigma^{i}_{\phantom{i}m})\ha^{mj}e_{i}e_{K}(u)e_{j}e_{K}(u)
+\hvph_{0}e_{K}(u)(e_{K}u)_{\tau}\right]\mu_{h}\\
 & +\int_{G}\textstyle{\sum}_{|K|\leq l}\left[-\hvph_{1,K}(e_{K}u)_{\tau}
+\mff'\mff[e_{K}(u)]^{2}+\mff^{2}e_{K}(u)(e_{K}u)_{\tau}\right]\mu_{h}.
\end{split}
\end{equation*}
We need to consider the term $-\hvph_{1,K}(e_{K}u)_{\tau}$ appearing in the integrand more carefully. Due to (\ref{eq:haijEKeiejcomm}),
(\ref{eq:eKXzerocomm}) and (\ref{eq:hvphKonedef}),
\begin{equation*}
\begin{split}
-\hvph_{1,K}(e_{K}u)_{\tau} = & \textstyle{\sum}_{|K_{1}|=|K|}\ha^{ij}c_{K,ij}^{K_{1}}e_{K_{1}}(u)(e_{K}u)_{\tau}\\
 & +\textstyle{\sum}_{|K_{2}|=|K|}\ha^{ij}c_{K,j}^{K_{2}}e_{i}e_{K_{2}}(u)(e_{K}u)_{\tau}-e_{K}(\hvph_{1})(e_{K}u)_{\tau}.
\end{split}
\end{equation*}
Combining the last two observations yields the desired result. 
\end{proof}

Due to (\ref{eq:dtaumehigherorderestimate}), we obtain a result analogous to Corollary~\ref{cor:energyestimatefirstexample}.

\begin{cor}\label{cor:energyestimatehigherorderenergies}
Let $(M,g)$ be a Bianchi spacetime with a monotone volume singularity $t_{-}$. Assume that $g$ solves (\ref{eq:Einsteinsequations}); 
that $\rho\geq\bp$; that $\rho\geq 0$; and that $\Lambda\geq 0$. Let $\varphi_{0}\in C^{\infty}(M)$ be such that it only depends on 
$t$ and let $\varphi_{1}:=0$. Let $h$ be given by (\ref{eq:hRieMetdef}) and fix an $\mff$ with the properties stated in connection with 
(\ref{eq:meepsilonzerodefinition}). Assume that $\g_{\bS}$, given by (\ref{eq:gammabSdef}), satisfies 
$\g_{\bS}\in L^{1}(-\infty,0]$; and that there is a function $\mff_{\hvph}\in L^{1}(-\infty,0]$ such that 
$|\hvph_{0}|\leq \mff_{\hvph}\mff$ for all $\tau\leq 0$. Finally, assume that there is a function 
$\mff_{a}\in L^{1}(-\infty,0]$ such that $\|\ha^{-1}\|\leq \mff_{a}\mff$, where $\ha^{-1}$ is the matrix with components $\ha^{ij}$. 
Then there is a constant $C_{l}$ such that for every smooth solution $u$ to (\ref{eq:KleinGordonConformallyRescaled}) corresponding to 
compactly supported initial data,
\begin{equation}\label{eq:energyestimatehigherorderenergies}
\me_{l}[u](\tau)\leq C_{l}\me_{l}[u](0)\exp\left[2\int_{\tau}^{0}[2-q(\tau')]d\tau'\right]
\end{equation}
for all $\tau\leq 0$. 
\end{cor}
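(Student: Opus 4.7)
The plan is to establish a pointwise differential inequality
\[
\partial_\tau \me_l \geq 2(q-2)\me_l - \gamma_{\rotot}\me_l
\]
with $\gamma_{\rotot}\geq 0$ in $L^1(-\infty,0]$, and then obtain (\ref{eq:energyestimatehigherorderenergies}) by an integrating factor argument (equivalently, Grönwall applied to $\partial_\tau \ln \me_l$ where $\me_l>0$; the case $\me_l(0)=0$ forces $u\equiv 0$ by uniqueness in Corollary~\ref{cor:basicexistenceandcompactsupp}), with $C_l=\exp(\|\gamma_{\rotot}\|_{L^1})$. The point of departure is the exact identity (\ref{eq:dtaumehigherorderestimate}) from Lemma~\ref{lemma:dtaumehigherorderestimate}; the task is to bound each term on its right-hand side from below by a negative multiple of $\me_l$ with $L^1$ coefficient, modulo the leading $2(q-2)\me_l$ that must be retained exactly.

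The easy terms I would dispose of first. The first line of (\ref{eq:dtaumehigherorderestimate}) can be rewritten as $2(q-2)\cdot$(integrand of $\me_l$) minus $(q-2)\mff^2[e_Ku]^2$; the former integrates to exactly $2(q-2)\me_l$, and the latter is controlled by Remark~\ref{remark:boundonqminustwo} (giving $q-2\leq 3\gamma_{\bS}$) to produce a contribution bounded below by $-6\gamma_{\bS}\me_l$. The shear term on the second line is bounded below by $-3\gamma_{\bS}\me_l$ by Lemma~\ref{lemma:Sigmaupperandlowerbound}. The non-negative term $\mff'\mff[e_Ku]^2$ is discarded, and $e_K(\hvph_1)$ vanishes since $\varphi_1=0$. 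The three ``cross terms'' $\hvph_0 e_K(u)(e_Ku)_\tau$, $\ha^{ij}c_{K,ij}^{K_1}e_{K_1}(u)(e_Ku)_\tau$, and $\mff^2 e_K(u)(e_Ku)_\tau$ are all of the form $A\cdot B\cdot(e_Ku)_\tau$, where $A$ is pointwise dominated by an $L^1$ function of $\tau$ (namely $\mff_{\hvph}\mff$, $\|\ha^{-1}\|\leq\mff_a\mff$, and $\mff^2\leq\mff(0)\mff$ respectively) and $B$ has order at most $l$; a weighted Young inequality $|xy|\leq x^2/(2c)+cy^2/2$ with $c$ set to $\mff_{\hvph}$, $\mff_a$, and $\mff$ respectively bounds each by $C\gamma_i\me_l$ with integrable $\gamma_i$.

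The main obstacle is the fourth-line term $\ha^{ij}c_{K,j}^{K_2}e_ie_{K_2}(u)(e_Ku)_\tau$, because $e_ie_{K_2}(u)$ has order up to $l+1$ and is not a constituent of $\me_l$ on its own. The key observation is that the quadratic form $\ha^{ij}e_ie_{K_2}(u)e_je_{K_2}(u)$ (with $|K_2|\leq l$) \emph{is} in $\me_l$. I would therefore apply the Cauchy--Schwarz inequality for the positive-definite matrix $\ha$,
\[
|\ha^{ij}V_jX_i|\leq \bigl[\ha^{ij}V_iV_j\bigr]^{1/2}\bigl[\ha^{ij}X_iX_j\bigr]^{1/2},
\]
with $V_j=c_{K,j}^{K_2}$ and $X_i=e_ie_{K_2}(u)$, and use the straightforward bound $\ha^{ij}c_{K,i}^{K_2}c_{K,j}^{K_2}\leq C\|\ha^{-1}\|\leq C\mff_a\mff$ to reduce the term to
\[
C\,\mff_a^{1/2}\mff^{1/2}\cdot|(e_Ku)_\tau|\cdot\bigl[\ha^{ij}e_ie_{K_2}(u)e_je_{K_2}(u)\bigr]^{1/2}.
\]
A final weighted Young step with weight $\mff_a^{1/2}/\mff^{1/2}$ distributes this as $\tfrac{C}{2}\mff_a\ha^{ij}e_ie_{K_2}(u)e_je_{K_2}(u)+\tfrac{C}{2}\mff[(e_Ku)_\tau]^2$, which integrates to at most $C(\mff_a+\mff)\me_l$.

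Summing all contributions, the total loss takes the form $\gamma_{\rotot}=C(\gamma_{\bS}+\mff_{\hvph}+\mff_a+\mff)$, which lies in $L^1(-\infty,0]$ by the hypotheses of the corollary together with $\mff\in L^1(-\infty,0]$. The integrating-factor argument then gives (\ref{eq:energyestimatehigherorderenergies}), with $C_l$ depending only on $\|\mff\|_{L^1}$, $\|\mff_{\hvph}\|_{L^1}$, $\|\mff_a\|_{L^1}$, $\|\gamma_{\bS}\|_{L^1}$, $\|\mff\|_{L^\infty}$, and the structure constants of $\mfg$ entering $c_{K,ij}^{K_1}$ and $c_{K,j}^{K_2}$.
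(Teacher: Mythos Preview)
Your proposal is correct and follows essentially the same route as the paper: derive $\partial_\tau\me_l\geq 2(q-2)\me_l-\gamma_{\rotot}\me_l$ with $\gamma_{\rotot}\in L^1$ from the identity (\ref{eq:dtaumehigherorderestimate}), then integrate. The only notable variation is in the fourth-line commutator term: the paper factors $\ha^{ij}=\sum_l\hb^{il}\hb^{lj}$ through the matrix square root $\hb=(\ha^{-1})^{1/2}$ and bounds $|\hb^{lj}|\leq\|\ha^{-1}\|^{1/2}\leq\tfrac{1}{2}(\mff+\mff_a)$ via AM--GM, whereas you invoke Cauchy--Schwarz for the form $\ha^{ij}$ directly and then use a weighted Young inequality; these are equivalent manoeuvres. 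One cosmetic point: your bound $\mff^2\leq\mff(0)\mff$ introduces an unnecessary dependence on $\|\mff\|_{L^\infty}$; writing $|\mff^2 e_K(u)(e_Ku)_\tau|\leq\tfrac{\mff}{2}(\mff^2|e_K(u)|^2+|(e_Ku)_\tau|^2)$ instead keeps $C_l$ depending only on the $L^1$ norms and structure constants, matching the paper's remark.
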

\begin{remark}
The constant $C_{l}$ only depends on  $l$, the structure constants associated with $\{e_{i}\}$, $\|\mff\|_{L^{1}(-\infty,0]}$, 
$\|\mff_{\hvph}\|_{L^{1}(-\infty,0]}$, $\|\mff_{a}\|_{L^{1}(-\infty,0]}$ and $\|\g_{\bS}\|_{L^{1}(-\infty,0]}$. 
\end{remark}
\begin{remark}
Under the assumptions of the lemma, 
\begin{equation}\label{eq:eKusigmaLtwoestimate}
\int_{G}\textstyle{\sum}_{|K|\leq l}(e_{K}u)_{\sigma}^{2}\mu_{h}\leq 2C_{\sigma}^{-2}C_{l}\me_{l}[u](0)
\end{equation}
for all $\sigma\leq 0$; cf. (\ref{eq:dsigmadtau}). 
\end{remark}
\begin{proof}
Consider (\ref{eq:dtaumehigherorderestimate}). Under the conditions of the present corollary, we wish to demonstrate that the right hand
side can be estimated from below by 
\begin{equation}\label{eq:lowerboundToBeDemonstrated}
2(q-2)\me_{l}-\bga_{\rotot}\me_{l},
\end{equation}
where $\bga_{\rotot}$ is an element of $L^{1}(-\infty,0]$. All the terms on the right hand side of (\ref{eq:dtaumehigherorderestimate}), 
except the third and fourth lines, can be estimated as in the proof of Corollary~\ref{cor:energyestimatefirstexample}. Consider the third 
line on the right hand side of (\ref{eq:dtaumehigherorderestimate}). Since the $c^{K_{1}}_{K,ij}$ are constants, we need to estimate
\[
|\ha^{ij}e_{K_{1}}(u)(e_{K}u)_{\tau}|\leq \|\ha^{-1}\|\cdot |e_{K_{1}}(u)(e_{K}u)_{\tau}|
\leq \mff_{a}\frac{1}{2}[\mff^{2}|e_{K_{1}}(u)|^{2}+|(e_{K}u)_{\tau}|^{2}].
\]
This is an estimate of the desired type. Next, consider the fourth line on the right hand side of (\ref{eq:dtaumehigherorderestimate}).
We need to estimate
\[
\ha^{ij}e_{i}e_{K_{2}}(u)(e_{K}u)_{\tau}=\textstyle\sum_{l}\hb^{il}\hb^{lj}e_{i}e_{K_{2}}(u)(e_{K}u)_{\tau},
\]
where $\hb^{ij}$ are the components of the square root of $\ha^{-1}$. Note that 
\[
|\hb^{lj}|\leq \|\ha^{-1}\|^{1/2}\leq \mff_{a}^{1/2}\mff^{1/2}\leq\frac{1}{2}(\mff+\mff_{a})
\]
and that 
\[
|\hb^{il}e_{i}e_{K_{2}}(u)|\leq \left[\textstyle{\sum}_{l}\hb^{il}e_{i}e_{K_{2}}(u)\hb^{jl}e_{j}e_{K_{2}}(u)\right]^{1/2}
\leq \left[\ha^{ij}e_{i}e_{K_{2}}(u)e_{j}e_{K_{2}}(u)\right]^{1/2}.
\]
Thus the fourth line on the right hand side of (\ref{eq:dtaumehigherorderestimate}) can be estimated as desired.
\end{proof}

\subsection{Limits in a model case}\label{ssection:limitsinamodelcase}

Finally, we are in a position to prove Proposition~\ref{prop:asymptoticsexponconvofqtotwo}. 

\begin{proof}[Proposition~\ref{prop:asymptoticsexponconvofqtotwo}]
The idea of the proof is to appeal to Corollary~\ref{cor:energyestimatehigherorderenergies}, Sobolev embedding and the equation
(\ref{eq:KleinGordonConformallyRescaled}). Note, however, that Corollary~\ref{cor:energyestimatehigherorderenergies} only applies 
to solutions corresponding to initial data with compact support, a restriction we do not impose here. The first step of the argument
is therefore to demonstrate that we can ``localise'' the solution $u$. 

\textbf{Localising the solution $u$.}
Let $U$ be an open subset of $G$ with compact closure $K$. Assume, moreover, $U$ to be diffeomorphic to the ball of radius $1$ and center
$0$ in $\rn{3}$ and denote the diffeomorphism by $\psi$. Let $U_{0}=\psi^{-1}[B_{1/2}(0)]$. Then $U_{0}$ has compact closure contained in $U$. 
Due to the fact that $\|\ha^{-1}\|$ decays exponentially, it is clear that 
\[
\int_{t_{-}}^{t_{a}}\|a^{-1/2}\|dt=\int_{-\infty}^{0}\frac{3}{\theta}\|a^{-1/2}\|d\tau=\int_{-\infty}^{0}\|\ha^{-1}(\tau)\|^{1/2}d\tau<\infty,
\]
where $\tau(t_{a})=0$. We are thus allowed to appeal to Lemma~\ref{lemma:localisationuntilthesingularity}. In particular, there is a compact 
subset $K_{1}$ such that 
\begin{equation}\label{eq:KcylinderDminusKone}
K\times (t_{-},t_{0}]\subseteq D^{-}(K_{1}\times\{t_{0}\});
\end{equation}
cf. (\ref{eq:cylindercontinDminusKonCa}). Let $\chi\in C^{\infty}_{0}(G)$ be such that $\chi(x)=1$ for all $x\in K_{1}$. Let $u_{a}$ be the 
solution to (\ref{eq:KleinGordonvarphivarphi}) corresponding to the initial data given by $\chi u(\cdot,t_{0})$ and $\chi u_{t}(\cdot,t_{0})$. 
Then, due to (\ref{eq:KcylinderDminusKone}); \cite[Corollary~12.14, p.~141]{minbok}; and \cite[Remark~12.15, p.~141]{minbok}; the functions $u$ 
and $u_{a}$ coincide in $K\times (t_{-},t_{0}]$. If we want to analyse the asymptotic behaviour
of $u$ in $K$ as $t\rightarrow t_{-}$, we might thus as well consider $u_{a}$. 

\textbf{Appealing to the energy estimates.} In order to justify that we are allowed to appeal to 
Corollary~\ref{cor:energyestimatehigherorderenergies}, note that if we define $\mff(\tau):=e^{\eta_{0}\tau/2}$, then $\mff$ has the properties 
stated in connection with (\ref{eq:meepsilonzerodefinition}). Choosing $\mff_{\hvph}(\tau)=\mff_{a}(\tau)=C_{0}e^{\eta_{0}\tau/2}$, we are then allowed
to apply Corollary~\ref{cor:energyestimatehigherorderenergies} to $u_{a}$. Since $q-2$ is integrable, there are thus constants $C_{l}$ such 
that 
\[
\me_{l}[u_{a}](\tau)\leq C_{l}\me_{l}[u_{a}](0)
\]
for all $\tau\leq 0$. Introducing
\[
\mf_{l}[u_{a}]:=\frac{1}{2}\int_{G}\textstyle{\sum}_{|K|\leq l}[e_{K}(u_{a})]^{2}\mu_{h}, 
\]
it can be calculated that 
\begin{equation*}
\begin{split}
\d_{\tau}\mf_{l} = & \int_{G}\textstyle{\sum}_{|K|\leq l}e_{K}(u_{a})e_{K}(\d_{\tau}u_{a})]\mu_{h}\geq -2\mf_{l}^{1/2}[u_{a}]\me_{l}^{1/2}[u_{a}]\\
 \geq & -2C_{l}^{1/2}\me_{l}^{1/2}[u_{a}](0)\mf_{l}^{1/2}[u_{a}].
\end{split}
\end{equation*}
Using this estimate, it can be demonstrated that 
\[
\mf_{l}^{1/2}[u_{a}](\tau)\leq \mf_{l}^{1/2}[u_{a}](0)+C_{l}^{1/2}\me_{l}^{1/2}[u_{a}](0)|\tau|
\]
for all $\tau\leq 0$. In particular, there are constants $D_{a,l}$ such that 
\[
\mf_{l}^{1/2}[u_{a}](\tau)\leq D_{l,a}\ldr{\tau}
\]
for all $\tau\leq 0$. 

\textbf{Appealing to the equation.}
With the above information at hand, it is of interest to return to (\ref{eq:eKappliedtoequation}). In particular, 
\begin{equation*}
\begin{split}
\|\ha^{ij}e_{i}e_{j}(e_{K}u_{a})\|_{L^{2}(G)}+\|X_{0}(e_{K}u_{a})\|_{L^{2}(G)} & \\
+\|[\ha^{ij}e_{i}e_{j},e_{K}]u_{a}\|_{L^{2}(G)}
+\|[e_{K},X_{0}]u_{a}\|_{L^{2}(G)}  &\leq E_{a,l}\ldr{\tau}e^{\eta_{0}\tau}
\end{split}
\end{equation*}
for all $\tau\leq 0$ and $|K|\leq l$. Here
\[
\|\phi\|_{L^{2}(G)}:=\left(\int_{G}|\phi|^{2}\mu_{h}\right)^{1/2}
\]
for every $\phi\in C_{0}(G)$. Moreover, due to the boundedness of $\me_{l}[u_{a}]$ for $\tau\leq 0$ and (\ref{eq:qminustwoetcdecexp}),
\[
\|(q-2)(e_{K}u_{a})_{\tau}\|_{L^{2}(G)}\leq E_{a,l}e^{\eta_{0}\tau}
\]
for all $\tau\leq 0$ and $|K|\leq l$. Combining these observations with (\ref{eq:qminustwoetcdecexp}) and (\ref{eq:eKappliedtoequation}) yields 
the conclusion that 
\[
\|(e_{K}u_{a})_{\tau\tau}\|_{L^{2}(G)}\leq E_{a,l}\ldr{\tau}e^{\eta_{0}\tau}
\]
for all $\tau\leq 0$ and $|K|\leq l$. Combining this observation with Sobolev embedding yields the conclusion that 
\[
\|\d_{\tau}^{2}u_{a}(\cdot,\tau)\|_{C^{l}(U_{0})}\leq E_{a,l}\ldr{\tau}e^{\eta_{0}\tau}
\]
for all $\tau\leq 0$. Thus (\ref{eq:utauqconvtotwo}) holds. 

In order to prove (\ref{eq:usigmaqconvtotwo}), note, to begin with, that 
\[
\exp\left(-\int_{\tau}^{0}(q-2)d\tau'\right)=C_{\sigma}^{-1}C_{a}[1+O(e^{\eta_{0}\tau})]
\]
for some $C_{a}>0$, where $C_{\sigma}$ is the constant appearing in (\ref{eq:dsigmadtau}); cf. (\ref{eq:dsigmadtauprimaryrelation}). Thus 
there is a constant $C_{b}$ such that 
\[
-\sigma(\tau)=-C_{a}\tau-C_{b}+O(e^{\eta_{0}\tau}),
\]
where we used the fact that $\sigma(0)=0$; cf. the requirement following (\ref{eq:dsigmadtdefrel}). In particular, 
\begin{equation}\label{eq:sigmataurelconvtotwo}
\sigma(\tau)=C_{a}\tau+C_{b}+O(e^{\eta_{0}\tau}).
\end{equation}
Combining these observations with (\ref{eq:dsigmadtau}) and (\ref{eq:utauqconvtotwo}) yields (\ref{eq:usigmaqconvtotwo}). The lemma follows. 
\end{proof}

\subsection{Limits when $q$ does not converge to $2$}\label{ssection:qconvtoqinfdifftwo}
Next, we prove Proposition~\ref{prop:asymptoticsexponconvofqtoqinfdifffromtwo}. 

\begin{proof}[Proposition~\ref{prop:asymptoticsexponconvofqtoqinfdifffromtwo}]
We begin by changing to the time coordinate $\sigma$ introduced in (\ref{eq:dsigmadtdefrel}). Due to the assumptions,
there is a constant $c_{q}$ such that 
\[
\int_{\tau}^{0}[q(\tau')-2]d\tau'=\int_{\tau}^{0}(q_{\infty}-2)d\tau'+\int_{\tau}^{0}[q(\tau')-q_{\infty}]d\tau'=(2-q_{\infty})\tau-c_{q}+o(1).
\]
Inserting this information into (\ref{eq:dsigmadtau}) yields
\[
\frac{d\sigma}{d\tau}=C_{\sigma}\exp[-(2-q_{\infty})\tau+c_{q}][1+o(1)];
\]
note that (\ref{eq:dsigmadtauprimaryrelation}) is satisfied. Integrating this relation yields
\[
\sigma(\tau)=-\frac{C_{\sigma}}{2-q_{\infty}}e^{-(2-q_{\infty})\tau+c_{q}}[1+o(1)]
\]
for all $\tau\leq 0$. For future reference, it is of interest to note that the above computations yield the existence of a
constant $c_{a}>0$ such that 
\[
\ldr{\sigma(\tau)}=c_{a}e^{-(2-q_{\infty})\tau}[1+o(1)]
\]
for all $\tau\leq 0$. Due to this relation, it can be deduced that there is a constant $c_{b}$ such that 
\[
\tau=-\frac{1}{2-q_{\infty}}\ln\ldr{\sigma}+c_{b}+o(1).
\]
In particular, 
\[
\exp\left(2\int_{\tau}^{0}(q-2)d\tau'\right)=O(\ldr{\sigma}^{-2}). 
\]
Returning to (\ref{eq:KleinGordonwrtsigmatime}) and appealing to (\ref{eq:qminusqinfetcdecexp}), it is clear that 
\[
\|\cha^{-1}\|+|\cvph_{0}|\leq C\ldr{\sigma}^{-2-\eta_{c}}
\]
for all $\sigma\leq 0$ and some constant $C>0$; here $\eta_{c}$ is given by (\ref{eq:etacdef}). The same holds for the coefficients of $\chX$. 

Due to (\ref{eq:qminusqinfetcdecexp}), it is possible to choose exponentially decaying functions $\mff$, $\mff_{\hvph}$ 
and $\mff_{a}$ in such a way that the conditions of Corollary~\ref{cor:energyestimatehigherorderenergies} are satisfied. 
Thus (\ref{eq:eKusigmaLtwoestimate}) holds (for a suitably localised solution; cf. the proof of 
Proposition~\ref{prop:asymptoticsexponconvofqtotwo}). Combining this observation with (\ref{eq:KleinGordonwrtsigmatime}); the estimate 
for the coefficients of this equation (described above); and arguments similar to those presented in the proof of 
Proposition~\ref{prop:asymptoticsexponconvofqtotwo} yields the conclusion that a suitably localised solution (such as 
$u_{a}$ in Proposition~\ref{prop:asymptoticsexponconvofqtotwo}) satisfies
\[
\|e_{L}\d_{\sigma}^{2}u_{a}\|_{L^{2}(G)}\leq C_{K,l}\ldr{\sigma}^{-1-\eta_{c}}
\]
for all $\sigma\leq 0$ and all vector field multiindices $L$ satisfying $|L|\leq l$. Appealing to Sobolev embedding (for, potentially, infinitely
many different localisations $u_{a}$) yields the conclusion that there is a $u_{1}\in C^{\infty}(G)$ such that for every compact set $K\subset G$ and 
every $0\leq l\in\zo$, there is a constant $C_{K,l}$ such that 
\[
\|u_{\sigma}(\cdot,\sigma)-u_{1}\|_{}\leq C_{K,l}\ldr{\sigma}^{-\eta_{c}}
\]
for all $\sigma\leq 0$. Thus the first conclusion of the proposition holds. If $\eta_{c}>1$, this estimate can be integrated in order to yield
the second conclusion. 
\end{proof}

\section{Proofs I}\label{section:proofsI}

The purpose of the present section is to prove the statements made in Section~\ref{section:applications}. We begin with Proposition~\ref{prop:osc} 
and the statements made in Example~\ref{example:usigmaconvgeneralI}.

\subsection{Conditional results yielding convergence of the $\sigma$-derivative}\label{ssection:condresoscillatory}
In the present subsection, we prove Proposition~\ref{prop:osc} and the statements made in Example~\ref{example:usigmaconvgeneralI}.

\begin{proof}[Proposition~\ref{prop:osc}]
By assumption, the conditions of Theorem~\ref{thm:main} are satisfied. In particular, we are thus allowed to use the 
conclusions of that theorem, such as (\ref{eq:eKusigmaLtwoestimateintro}). Combining (\ref{eq:eKusigmaLtwoestimateintro}); localisations
as in the proof of Proposition~\ref{prop:asymptoticsexponconvofqtotwo}; and Sobolev embedding, it is clear that, given a solution $u$, a compact 
set $K\subseteq G$ and an $0\leq l\in\zo$, there is a constant $C_{K,l}$ such that 
\[
\|u(\cdot,\sigma)\|_{C^{l}(K)}\leq C_{K,l}\ldr{\sigma}
\]
for all $\sigma\leq 0$. Inserting this information into (\ref{eq:KleinGordonwrtsigmatime}), it is clear that 
\begin{equation}\label{eq:usigmasigmaCkestimate}
\|u_{\sigma\sigma}[\cdot,\sigma(\tau)]\|_{C^{l}(K)}\leq C_{K,l}\exp\left(2\int_{\tau}^{0}(q-2)d\tau'\right)\mff^{2}(\tau)\ldr{\sigma(\tau)},
\end{equation}
where we appealed to (\ref{eq:hvphzhasharpcond}) and the fact that if $X_{0}=X_{0}^{i}e_{i}$, then there is a constant
$C$ such that $|X^{i}_{0}|\leq C\|\ha^{-1}\|$, cf. (\ref{eq:Xzerodefinition}). Integrating 
(\ref{eq:usigmasigmaCkestimate}) from $\sigma_{0}$ to $\sigma_{1}\leq 0$ yields
\begin{equation}\label{eq:usigmadiffCkestimate}
\|u_{\sigma}(\cdot,\sigma_{1})-u_{\sigma}(\cdot,\sigma_{0})\|_{C^{l}(K)}
\leq C_{K,l}\int_{\tau_{0}}^{\tau_{1}}\exp\left(\int_{\tau}^{0}(q-2)d\tau'\right)\mff^{2}(\tau)\ldr{\sigma(\tau)}d\tau,
\end{equation}
where we used (\ref{eq:dsigmadtau}) and $\tau_{0}$, $\tau_{1}$ correspond to $\sigma_{0}$, $\sigma_{1}$ respectively. In order to 
proceed, note that (\ref{eq:dsigmadtau}) yields
\begin{equation}\label{eq:sigmatauform}
-\sigma(\tau)=\int_{\tau}^{0}C_{\sigma}\exp\left(-\int_{\tau'}^{0}(q-2)d\tau''\right)d\tau';
\end{equation}
recall that $\sigma(0)=0$ due to the requirements made in connection with (\ref{eq:dsigmadtdefrel}). Inserting this information into 
(\ref{eq:usigmadiffCkestimate}), it is clear that we need to estimate two integrals. First, we need to verify that 
\[
\int_{\tau_{0}}^{\tau_{1}}\exp\left(\int_{\tau}^{0}(q-2)d\tau'\right)\mff^{2}(\tau)d\tau
\]
is bounded as $\tau_{0}\rightarrow-\infty$. However, this is obvious due to the fact that $(q-2)_{+}\in L^{1}(-\infty,0]$ and the fact that 
(\ref{eq:mffsqmomoneintegrable}) holds; recall that under the present circumstances, Remark~\ref{remark:boundonqminustwo} applies. 
Next, we need to estimate
\begin{equation}\label{eq:secondintegraltobeestimated}
\int_{\tau_{0}}^{\tau_{1}}\exp\left(\int_{\tau}^{0}(q-2)d\tau'\right)\mff^{2}(\tau)\int_{\tau}^{0}\exp\left(-\int_{\tau'}^{0}(q-2)d\tau''\right)d\tau'
d\tau. 
\end{equation}
Note, to this end, that 
\begin{equation*}
\begin{split}
 & \exp\left(\int_{\tau}^{0}(q-2)d\tau'\right)\int_{\tau}^{0}\exp\left(-\int_{\tau'}^{0}(q-2)d\tau''\right)d\tau'\\
 = & \int_{\tau}^{0}\exp\left(\int_{\tau}^{\tau'}(q-2)d\tau''\right)d\tau'\leq C|\tau|
\end{split}
\end{equation*}
for all $\tau\leq 0$, where we used the fact that $(q-2)_{+}\in L^{1}(-\infty,0]$. Due to this observation, the expression
(\ref{eq:secondintegraltobeestimated}) can be estimated by 
\[
\int_{\tau_{0}}^{\tau_{1}}C\ldr{\tau}\mff^{2}(\tau)d\tau 
\]
which is finite as $\tau_{0}\rightarrow-\infty$; cf. (\ref{eq:mffsqmomoneintegrable}). To conclude, the expression on the right 
hand side of (\ref{eq:usigmadiffCkestimate}) converges to a finite number as $\tau_{0}\rightarrow-\infty$. Combining this 
observation with (\ref{eq:usigmadiffCkestimate}) yields the conclusion of the proposition. 
\end{proof}

Next, we justify the statements made in Example~\ref{example:usigmaconvgeneralI}. Before turning to the details, note the following. If 
$\lambda_{0},\kappa_{0}>0$ and $0<\kappa_{1}\leq 1$, then there is a constant $C_{\kappa}$ (depending 
only on $\lambda_{0}$, $\kappa_{0}$ and $\kappa_{1}$) such that 
\begin{equation}\label{eq:explintest}
\int_{\tau_{a}}^{\tau_{b}}\ldr{\tau}^{\kappa_{0}}e^{-\lambda_{0}\ldr{\tau}^{\kappa_{1}}}d\tau
\leq C_{\kappa}\ldr{\tau_{b}}^{\kappa_{0}+1-\kappa_{1}}e^{-\lambda_{0}\ldr{\tau_{b}}^{\kappa_{1}}}
\end{equation}
for all $\tau_{a}\leq\tau_{b}\leq -1$. We leave the verification of this statement to the reader. In what follows, we also appeal to the 
general observations concerning Bianchi class A developments made in Subsection~\ref{ssection:genobBianchiA} below. 

\begin{proof}[Example~\ref{example:usigmaconvgeneralI}]
We wish to apply Proposition~\ref{prop:osc}. To this end, we first need to verify that the conditions of Theorem~\ref{thm:main} are satisfied. 
Due to (\ref{eq:hasharpsubexpbd}), the monotone volume singularity $t_{-}$ is silent. Moreover, by assumption, $g$ solves (\ref{eq:Einsteinsequations}); 
$\rho\geq\bp$; $\rho\geq 0$; $\Lambda\geq 0$; and $\varphi_{0}\in C^{\infty}(M)$ only depends on $t$. 
Next, we have to verify that $\g_{\bS}\in L^{1}(-\infty,0]$. In the case of all Bianchi class A types but IX, $\g_{\bS}=0$, and there is nothing to 
prove. However, the Bianchi type IX case requires an argument. That $\g_{\bS}\in L^{1}(-\infty,0]$ can be verified directly without appealing to 
(\ref{eq:hasharpsubexpbd}). However, since the corresponding argument is somewhat more involved, we here rely on (\ref{eq:hasharpsubexpbd}).
In order to prove the integrability of $\g_{\bS}$, note that combining the arguments presented in Subsection~\ref{ssection:genobBianchiA} below
with \cite[(9), p.~414]{BianchiIXattr} yields
\begin{align}
\d_{\tau}\ha^{11} = & 2(q+2\Sigma_{+})\ha^{11},\ \ \d_{\tau}(\Nt\Nth)=2(q+2\Sp)\Nt\Nth,\label{eq:hauooevo}\\
\d_{\tau}\ha^{22} = & 2(q-\Sigma_{+}-\sqrt{3}\Sm)\ha^{22},\ \ \d_{\tau}(\No\Nth)=2(q-\Sigma_{+}-\sqrt{3}\Sm)\No\Nth,\label{eq:hauttevo}\\
\d_{\tau}\ha^{33} = & 2(q-\Sigma_{+}+\sqrt{3}\Sm)\ha^{33},\ \ \d_{\tau}(\No\Nt)=2(q-\Sigma_{+}+\sqrt{3}\Sm)\No\Nt.\label{eq:hauththevo}
\end{align}
Moreover, $\ha^{ij}=0$ if $i\neq j$. In particular, since the $N_{i}$ all have the same sign in the case of Bianchi type IX, there 
are constants $C_{a,i}>0$, $i=1,2,3$, such that 
\[
\ha^{11}=C_{a,1}\Nt\Nth,\ \ \
\ha^{22}=C_{a,1}\No\Nth,\ \ \
\ha^{33}=C_{a,1}\No\Nt.
\]
In particular, all the $N_{i}N_{j}$, $i\neq j$, are integrable due to (\ref{eq:hasharpsubexpbd}). On the other hand, 
(\ref{eq:scalarcurvaturerescaled}) below implies that 
\[
\g_{\bS}\leq N_{1}N_{2}+N_{2}N_{3}+N_{3}N_{1}.
\]
Combining these observations yields the conclusion that $\g_{\bS}\in L^{1}(-\infty,0]$.

In order to verify (\ref{eq:hvphzhasharpcond}), note that since $q\geq 0$ in the present setting, it is clear that $e^{\tau}\theta$ is bounded
from below by a positive constant for $\tau\leq 0$. Combining this observation with the fact that $\varphi_{0}$ is bounded, it is clear that 
$e^{-2\tau}|\hvph_{0}|$ is bounded for $\tau\leq 0$. In particular, there is thus a constant $C_{\rotot}>0$ such that if we let 
\begin{equation}\label{eq:mffdef}
\mff(\tau):=C_{\rotot}\exp(-\lambda_{0}\ldr{\tau}^{\a_{0}}), 
\end{equation}
then (\ref{eq:hvphzhasharpcond}) holds for all $\tau\leq 0$. Note also that $\mff$ has the properties stated at the beginning of 
Subsection~\ref{ssection:energiesintro}. To conclude, the conditions of Theorem~\ref{thm:main} are satisfied. 

Finally, in order to apply Proposition~\ref{prop:osc}, we need to verify that (\ref{eq:mffsqmomoneintegrable}) holds. However, this 
is an immediate consequence of (\ref{eq:mffdef}). In order to justify (\ref{eq:usigmauoasymptoticswithrate}), note that the proof of 
Proposition~\ref{prop:osc} yields the conclusion that 
\[
\|u_{\sigma}(\cdot,\sigma_{1})-u_{\sigma}(\cdot,\sigma_{0})\|_{C^{l}(K)}
\leq C_{K,l}\int_{\tau_{0}}^{\tau_{1}}C\ldr{\tau}\mff^{2}(\tau)d\tau 
\]
where $\tau_{0}$, $\tau_{1}$ correspond to $\sigma_{0}$, $\sigma_{1}$ respectively. Combining this estimate with (\ref{eq:explintest}) and
(\ref{eq:mffdef}) yields (\ref{eq:usigmauoasymptoticswithrate}).

Finally, in order to prove the statements in Example~\ref{example:usigmaconvgeneralI} concerning Bianchi type I, II, VI${}_{0}$ and 
VII${}_{0}$ vacuum solutions, it is sufficient to combine (\ref{eq:hauooevo})--(\ref{eq:hauththevo}) with the conclusions of 
\cite{minbok}. In order to justify this statement in greater detail, note that if $(\Sigma_{+},\Sigma_{-})$
converges to one of the points in (\ref{eq:speciallimitpoints}), then the corresponding monotone volume singularity is not silent; this is 
justified in Subsection~\ref{ssection:genobBianchiA} below. In particular, this situation is excluded by the assumptions. Due to 
\cite[Propositions~22.15, 22.16 and 22.18]{minbok} and \cite[Lemma~22.17, p.~240]{minbok}, the only possibility that remains is that the 
solution converges to a point on the Kasner circle different from the points in (\ref{eq:speciallimitpoints}). Due to the observations made
in connection with (\ref{eq:criteriaforsilence}), this implies that $\|\ha^{-1}\|$ converges to zero exponentially. In particular, 
(\ref{eq:hasharpsubexpbd}) holds with $\a_{0}=1$.
\end{proof}

\subsection{Conditional results yielding full asymptotics}\label{ssection:condresoscillatoryII}
The purpose of the present subsection is to prove Proposition~\ref{prop:fullasymptoticsBVIIIandIXgen} as well as the statements made in 
Example~\ref{example:fullasymptotics}. In what follows, we appeal to the general observations concerning Bianchi class A developments 
made in Subsection~\ref{ssection:genobBianchiA} below. 

\begin{proof}[Proposition~\ref{prop:fullasymptoticsBVIIIandIXgen}]
In analogy with the argument justifying (\ref{eq:usigmasigmaCkestimate}), it can be demonstrated that 
\begin{equation}\label{eq:usigmasigmaCkestimateII}
\|u_{\sigma\sigma}[\cdot,\sigma(\tau)]\|_{C^{l}(K)}\leq C_{K,l}\exp\left(2\int_{\tau}^{0}(q-2)d\tau'\right)[\|\ha^{-1}(\tau)\|+|\hvph_{0}(\tau)|]\ldr{\sigma(\tau)}.
\end{equation}
Integrating this estimate from $\sigma_{0}$ to $\sigma_{1}$, where $\sigma_{0}\leq\sigma_{1}\leq 0$ yields 
\begin{equation*}
\begin{split}
 & \|u_{\sigma}(\cdot,\sigma_{1})-u_{\sigma}(\cdot,\sigma_{0})\|_{C^{l}(K)}\\
 \leq & C_{K,l}\int_{\tau_{0}}^{\tau_{1}}\exp\left(\int_{\tau}^{0}(q-2)d\tau'\right)[\|\ha^{-1}(\tau)\|+|\hvph_{0}(\tau)|]\ldr{\sigma(\tau)}d\tau;
\end{split}
\end{equation*}
here $\tau_{i}$ corresponds to $\sigma_{i}$, $i=0,1$. Due to the proof of Proposition~\ref{prop:osc}, we obtain convergence as 
$\tau_{0}\rightarrow-\infty$. Moreover, 
\begin{equation*}
\begin{split}
 & \|u_{\sigma}(\cdot,\sigma_{1})-u_{1}\|_{C^{l}(K)}\\
 \leq & C_{K,l}\int_{-\infty}^{\tau_{1}}\exp\left(\int_{\tau}^{0}(q-2)d\tau'\right)[\|\ha^{-1}(\tau)\|+|\hvph_{0}(\tau)|]\ldr{\sigma(\tau)}d\tau.
\end{split}
\end{equation*}
Integrating this estimate from $\sigma_{a}$ to $\sigma_{b}$, where $\sigma_{a}\leq \sigma_{b}\leq 0$ yields
\begin{equation*}
\begin{split}
 & \|u(\cdot,\sigma_{b})-u(\cdot,\sigma_{a})-u_{1}(\sigma_{b}-\sigma_{a})\|_{C^{l}(K)}\\
 \leq & C_{K,l}\int_{\tau_{a}}^{\tau_{b}}
\int_{-\infty}^{\tau_{1}}\exp\left(\int_{\tau}^{0}(q-2)d\tau'\right)[\|\ha^{-1}(\tau)\|+|\hvph_{0}(\tau)|]\ldr{\sigma(\tau)}d\tau\\
 & \phantom{C_{K,l}\int_{\tau_{a}}^{\tau_{b}}\int_{-\infty}^{\tau_{1}}}\cdot\exp\left(-\int_{\tau_{1}}^{0}(q-2)d\tau'\right)d\tau_{1}\\
 \leq & C_{K,l}\int_{\tau_{a}}^{\tau_{b}}
\int_{-\infty}^{\tau_{1}}\exp\left(\int_{\tau}^{\tau_{1}}(q-2)d\tau'\right)[\|\ha^{-1}(\tau)\|+|\hvph_{0}(\tau)|]\ldr{\sigma(\tau)}d\tau d\tau_{1},
\end{split}
\end{equation*}
where $\tau_{a}$ and $\tau_{b}$ correspond to $\sigma_{a}$ and $\sigma_{b}$ respectively. Combining this estimate with the fact that 
$(q-2)_{+}\in L^{1}(-\infty,0]$ (note that the integrability of $(q-2)_{+}$ follows from Remark~\ref{remark:boundonqminustwo} and the fact that the conditions
of Theorem~\ref{thm:main} are fulfilled) yields
\begin{equation}\label{eq:Usbsaest}
\|U(\cdot,\sigma_{b})-U(\cdot,\sigma_{a})\|_{C^{l}(K)} \leq C_{K,l}\int_{\tau_{a}}^{\tau_{b}}
\int_{-\infty}^{\tau_{1}}[\|\ha^{-1}(\tau)\|+|\hvph_{0}(\tau)|]\ldr{\sigma(\tau)}d\tau d\tau_{1},
\end{equation}
where 
\[
U(\cdot,\sigma):=u(\cdot,\sigma)-u_{1}\cdot\sigma. 
\]
In order to estimate $\ldr{\sigma(\tau)}$, note that (\ref{eq:sigmatauform}) yields
\begin{equation*}
\begin{split}
|\sigma(\tau)| \leq & C_{\sigma}\int_{\tau}^{0}\exp\left(-\int_{\tau'}^{0}(q-2)d\tau''\right)d\tau'
 \leq C|\tau|\exp\left(-\int_{\tau}^{0}(q-2)d\tau'\right),
\end{split}
\end{equation*}
where we used the fact that $(q-2)_{+}\in L^{1}(-\infty,0]$ and the fact that $\sigma(0)=0$; cf. the requirement following (\ref{eq:dsigmadtdefrel}). 
To conclude, 
\[
\ldr{\sigma(\tau)}\leq C\ldr{\tau}\exp\left(-\int_{\tau}^{0}(q-2)d\tau'\right)
\]
for all $\tau\leq 0$. Inserting this information into (\ref{eq:Usbsaest}), it is clear that we need to estimate 
\begin{equation}\label{eq:integralyieldfullas}
\int_{\tau_{a}}^{\tau_{b}}
\int_{-\infty}^{\tau_{1}}[\|\ha^{-1}(\tau)\|+|\hvph_{0}(\tau)|]\ldr{\tau}\exp\left(-\int_{\tau}^{0}(q-2)d\tau'\right)d\tau d\tau_{1}.
\end{equation}
By assumption, this expression converges as $\tau_{a}\rightarrow-\infty$. In particular, $U(\cdot,\sigma)$ converges in $C^{l}(K)$ for every 
compact $K\subset G$ and every $0\leq l\in\zo$. Thus there is a function $u_{0}$ with the properties stated in the proposition.
\end{proof}

Next, we prove the statements made in Example~\ref{example:fullasymptotics}. 

\begin{proof}[Example~\ref{example:fullasymptotics}]
Due to the assumptions in Example~\ref{example:fullasymptotics}, we know that $\varphi_{0}$ is bounded. We begin by using this information 
in order to estimate the contribution of $\varphi_{0}$ to the integral (\ref{eq:integralyieldfullas}). Note, to this end, that
\[
|\hvph_{0}|\exp\left(-\int_{\tau}^{0}(q-2)d\tau'\right)=9\theta^{-2}|\varphi_{0}|\exp\left(-\int_{\tau}^{0}(q-2)d\tau'\right).
\]
Since $\theta$ satisfies (\ref{eq:thetaprimeitoq}), this equality implies
\begin{equation}\label{eq:hvphiprodest}
|\hvph_{0}|\exp\left(-\int_{\tau}^{0}(q-2)d\tau'\right)=9\theta^{-2}(0)|\varphi_{0}|\exp\left(-\int_{\tau}^{0}3qd\tau'\right).
\end{equation}
On the other hand, due to (\ref{eq:hauooevo})--(\ref{eq:hauththevo}), it is clear that 
\[
\d_{\tau}\det\ha^{-1}=6q\det\ha^{-1}.
\]
Combining this equality with (\ref{eq:hvphiprodest}) and the assumption that $\varphi_{0}$ is bounded yields 
\begin{equation}\label{eq:hvphiprodestfinal}
|\hvph_{0}|\exp\left(-\int_{\tau}^{0}(q-2)d\tau'\right)\leq C(\det\ha^{-1})^{1/2}\leq C\|\ha^{-1}\|^{3/2}.
\end{equation}
Due to the assumptions, the proof of 
Proposition~\ref{prop:fullasymptoticsBVIIIandIXgen}, (\ref{eq:hvphiprodestfinal}) and the above observations, it follows that 
\begin{equation}\label{eq:UsbsaestII}
\|U(\cdot,\sigma_{b})-U(\cdot,\sigma_{a})\|_{C^{l}(K)} \leq C_{K,l}\int_{\tau_{a}}^{\tau_{b}}
\int_{-\infty}^{\tau_{1}}\ldr{\tau}\|\ha^{-1}(\tau)\|\exp\left(-\int_{\tau}^{0}(q-2)d\tau'\right)d\tau d\tau_{1}
\end{equation}
for all $\sigma_{a}\leq\sigma_{b}\leq 0$. Due to (\ref{eq:hasharpsubexpbdint}) and two applications of (\ref{eq:explintest}), it follows that 
\[
\|U(\cdot,\sigma_{b})-U(\cdot,\sigma_{a})\|_{C^{l}(K)} \leq C_{K,l}\ldr{\tau_{b}}^{3-2\a_{0}}e^{-2\lambda_{0}\ldr{\tau_{b}}^{\a_{0}}}
\]
for all $\sigma_{a}\leq\sigma_{b}\leq 0$. Thus (\ref{eq:asymptoticsforucompas}) holds. 

Finally, we need to justify the statements in Example~\ref{example:fullasymptotics} concerning Bianchi type I, II, VI${}_{0}$ and 
VII${}_{0}$ vacuum solutions. However, the only additional complication in the present setting compared with the statements of 
Example~\ref{example:usigmaconvgeneralI} is that we need to estimate the second factor on the left hand side of (\ref{eq:hasharpsubexpbdint}).
Note, to this end, that due to the proof of Example~\ref{example:usigmaconvgeneralI}, we know that the Wainwright-Hsu variables converge 
to a point on the Kasner circle different from the points in (\ref{eq:speciallimitpoints}). Considering the equations for the $N_{i}$, note
that $N_{i}'=f_{i}(q,\Sigma_{+},\Sigma_{-})N_{i}$ for some function $f_{i}$; cf. \cite[(9), p.~414]{BianchiIXattr}. Since the solution converges
to the Kasner circle, $f_{i}(q,\Sigma_{+},\Sigma_{-})$ converges to a number. Moreover, the form of the $f_{i}$ immediately implies that, on the 
Kasner circle, the only way for $f_{i}$ to equal zero is if the relevant point on the Kasner circle is one of the points appearing in 
(\ref{eq:speciallimitpoints}). In other words, in the limit point, $f_{i}$ is either strictly positive or strictly negative. If $f_{i}$ is 
strictly negative in the limit and $N_{i}\neq 0$, then $|N_{i}|$ tends to infinity exponentially. This is not consistent with the fact that 
the solution converges to a point on the Kasner circle. If $N_{i}\neq 0$, the limit of $f_{i}$ must thus be positive. In other words, all the 
$N_{i}$ converge to zero exponentially. Combining this observation with the Wainwright-Hsu equations, cf. \cite[pp.~414--415]{BianchiIXattr},
it follows that $q-2$ converges to zero exponentially. The second factor on the left hand side of (\ref{eq:hasharpsubexpbdint}) is thus 
bounded. The statements of the example follow. 
\end{proof}

\section{Proofs II}\label{section:proofsII}

The purpose of the present section is to justify the statements made in Examples~\ref{example:asymptoticsstifffluid}, 
\ref{example:nonoscBclassAdev},  \ref{example:genericBianchiclassA} and \ref{example:nonexcBianchiclassB}. To do so, we need to prove that 
Proposition~\ref{prop:asymptoticsexponconvofqtotwo} applies in the situations considered in these examples. The proofs are based on 
results concerning the asymptotics of Bianchi class A and non-special Bianchi class B developments obtained in 
\cite{BianchiIXattr,RadermacherNonStiff,RadermacherStiff}. We make some general observations concerning 
the relevant developments in Subsections~\ref{ssection:genobBianchiA} and \ref{ssection:genobBianchiB} below. These observations follow from 
\cite{RadermacherNonStiff,BianchiIXattr} and form the basis of the analysis of the present section.
We begin by considering the stiff fluid case, which is defined by the condition that $\g=2$. 

\subsection{The stiff fluid case.}\label{ssection:proofsstifffluidcase}
In the case of orthogonal stiff fluids, there are results concerning the asymptotics of Bianchi class A solutions (cf., e.g., 
\cite{BianchiIXattr}) and the asymptotics of non-special Bianchi class B solutions (cf., e.g., \cite{RadermacherStiff}).
The present analysis is based on these references. However, since the details are somewhat different in the two cases, we 
treat them separately. 

\textit{Bianchi class A.} We need to demonstrate that Proposition~\ref{prop:asymptoticsexponconvofqtotwo} applies. To this
end, we combine \cite{BianchiIXattr} with the general observations collected in Subsection~\ref{ssection:genobBianchiA}.
The asymptotic behaviour is described in terms of the Wainwright-Hsu variables $\Sigma_{\pm}$, $N_{i}$, $i=1,2,3$, and $\Omega$;
cf. Subsection~\ref{ssection:genobBianchiA}. Due to \cite[Theorem~19.1, p. 478]{BianchiIXattr}, these variables converge
to a type I point. Say that the corresponding value of $(\Sigma_{+},\Sigma_{-})$ is $(s_{+},s_{-})$. Then, due to \cite[Theorem~19.1, p. 478]{BianchiIXattr},
$s_{+}^{2}+s_{-}^{2}<1$. Moreover, due to \cite[Theorem~19.1, p. 478]{BianchiIXattr}, there are additional restrictions for Bianchi types II, 
VI${}_{0}$, VII${}_{0}$, VIII and IX. In order to explain the restrictions, note that the $N_{i}$ satisfy $N_{i}'=f_{i}(q,\Sigma_{+},\Sigma_{-})N_{i}$ for some 
function $f_{i}$; cf. \cite[(9), p.~414]{BianchiIXattr}. The restrictions on $(s_{+},s_{-})$ appearing in \cite[Theorem~19.1, p.~478]{BianchiIXattr}
correspond to the following requirement: if the development is such that $N_{i}$ is non-zero, then $f_{i}(q,\Sigma_{+},\Sigma_{-})$ 
converges to a strictly positive number. In particular, the $N_{i}$ thus decay to zero exponentially. Combining this observation with the 
Hamiltonian constraint (cf. \cite[(11), p.~415]{BianchiIXattr}) yields the conclusion that $q-2$ converges to zero exponentially. 
Since $\bS/\theta^{2}$ is a quadratic polynomial in the $N_{i}$, cf. (\ref{eq:scalarcurvaturerescaled}) below, 
it is also clear that $\g_{\bS}$ decays exponentially. Turning to the restrictions on $\varphi_{0}$, note first that since $q-2$ converges to zero 
exponentially, $e^{3\tau}\theta(\tau)$ converges to a positive number as $\tau\rightarrow-\infty$; recall that $\theta'=-(1+q)\theta$. 
In particular there is a positive constant, say $C_{\theta}$, such that $e^{3\tau}\theta(\tau)\geq C_{\theta}^{-1}$ for all $\tau\leq 0$.
On the other hand, $\det a(\tau)=e^{6\tau}$ by the definition of $\tau$. Combining these observations with the assumption 
(\ref{eq:varphizeroestimate}) yields
\begin{equation}\label{eq:hvphqtotwoest}
|\hvph_{0}(\tau)|=9[\theta(\tau)]^{-2}|\varphi_{0}(\tau)| \leq 9C_{\theta}^{2}C_{\varphi}e^{6\tau}(e^{6\tau})^{-1+\eta_{\varphi}}
\leq 9C_{\theta}^{2}C_{\varphi}e^{6\eta_{\varphi}\tau}
\end{equation}
for all $\tau\leq 0$. Thus $\hvph_{0}$ converges to zero exponentially. Next, since $s_{+}^{2}+s_{-}^{2}<1$, the comments made in connection 
with (\ref{eq:criteriaforsilence}) below imply that $\|\ha^{-1}\|$ converges to zero exponentially. Thus Proposition~\ref{prop:asymptoticsexponconvofqtotwo} 
applies and we have verified the statements made in Example~\ref{example:asymptoticsstifffluid} for Bianchi class A developments. 

\textit{Bianchi class B.} Turning to the non-exceptional Bianchi class B developments, we again need to demonstrate that we can apply 
Proposition~\ref{prop:asymptoticsexponconvofqtotwo}. To this end, we combine \cite{RadermacherStiff} with the general observations 
collected in Subsection~\ref{ssection:genobBianchiB}. The 
asymptotic behaviour is described in terms of the variables introduced by Hewitt and Wainwright and denoted by $\Sigma_{+}$, $\tSi$, 
$\Delta$, $N_{+}$, $\tA$ and $\Omega$; cf. Subsection~\ref{ssection:genobBianchiB}. 

Due to \cite[Proposition~4.3, p.~9]{RadermacherStiff}, there are $s\in (-1,1)$ and $\ts\in [0,1)$ such that 
\[
\lim_{\tau\rightarrow-\infty}(\Sigma_{+},\tSi,\Delta,N_{+},\tA)(\tau)=(s,\ts,0,0,0).
\]
Moreover, $s^{2}+\ts<1$. Using the terminology of \cite{RadermacherStiff}, solutions thus converge to an element of the so-called \textit{Jacobs set} 
$\mJ$; cf. \cite[Definition~1.2, p.~4]{RadermacherStiff}. Next, \cite[(4), p.~3]{RadermacherStiff} yields
\[
q=2(1-\tA-\tN),
\]
where $\tN$ is given by (\ref{eq:tNdef}) below. 
In particular, it is thus clear that $q$ converges to $2$. Combining this observation with the fact that $s>-1$, it follows 
that $\tA$ converges to zero exponentially; cf., e.g., \cite[(3), p.~3]{RadermacherStiff}. Turning to $\Delta$ and $N_{+}$, note
that there are three possibilities as far as the limit as $\tau\rightarrow-\infty$ is concerned; a given solution converges to a 
point in one of the following sets:
\begin{align*}
\mJ_{-} := & \mJ\cap \{(1+\Sigma_{+})^{2}<3\tSi\},\\
\mJ_{0} := & \mJ\cap \{(1+\Sigma_{+})^{2}=3\tSi\},\\
\mJ_{+} := & \mJ\cap \{(1+\Sigma_{+})^{2}>3\tSi\}.
\end{align*}
If the limit point is in $\mJ_{-}\cup \mJ_{0}$, then $\tA$, $\Sigma_{+}$, $\tSi$, $\tN$, $q$, $\Delta$ and $N_{+}$ all converge 
exponentially towards their limiting values; cf. \cite[Proposition~5.1, pp.~10--11]{RadermacherStiff}. Assume now that the 
limiting point is in $\mJ_{+}$. Returning to \cite[(3), p.~3]{RadermacherStiff}, it is clear that 
\begin{equation}\label{eq:DeltaNpsys}
\left(\begin{array}{c} \Delta\\ N_{+}\end{array}\right)'=B\left(\begin{array}{c} \Delta\\ N_{+}\end{array}\right),
\end{equation}
where 
\begin{equation}\label{eq:limit}
\lim_{\tau\rightarrow-\infty}B(\tau)=\left(\begin{array}{cc} 2(s+1) & 2\ts \\ 6 & 2(s+1)\end{array}\right)=:B_{\infty}. 
\end{equation}
In particular, the real parts of the eigenvalues of $B_{\infty}$ are strictly positive if the limit point is in $\mJ_{+}$. 
Combining the above observations with, e.g., the results of \cite{IOP} yields the conclusion that $\Delta$ and 
$N_{+}$ converge to zero exponentially. To conclude, $\tA$, $\tN$, $N_{+}$, $\Delta$ and $q-2$ converge to zero exponentially. 
This implies that $\Sigma_{+}$ and $\tSi$ converge to their limits exponentially. Due to the observations made at the end 
of Subsection~\ref{ssection:genobBianchiB}, it is thus clear that $\|\ha^{-1}\|$ converges to zero exponentially. 
In the case of Bianchi class B, the scalar curvature of the spatial hypersurfaces of homogeneity is negative, so that 
$\g_{\bS}=0$; cf., e.g., \cite[Appendix~E]{stab}. Finally, in order to estimate $\hvph_{0}$, we can proceed as in the Bianchi class A case. Thus 
Proposition~\ref{prop:asymptoticsexponconvofqtotwo} applies and we have verified the statements made in 
Example~\ref{example:asymptoticsstifffluid} for non-exceptional Bianchi class B developments. 

\subsection{The non-stiff fluid case}\label{ssection:asymptoticsinthenonstifffluidcase}
Next, we consider Bianchi orthogonal perfect fluid developments with a linear equation of state $p=(\g-1)\rho$, where $\g<2$. In the case of 
Bianchi class A, we focus on vacuum and equations of state with $2/3<\g<2$. In the case of Bianchi class B, we focus on vacuum and equations 
of state with $0\leq \g<2/3$. In both cases, the restrictions arise from the results available in the literature. As before, we need to verify that
Proposition~\ref{prop:asymptoticsexponconvofqtotwo} applies. We begin by considering vacuum Bianchi class A. 

\textbf{Non-oscillatory Bianchi class A vacuum developments.} Recall the analysis concerning vacuum Bianchi type I, II, VI${}_{0}$ and VII${}_{0}$
developments presented in the proofs of Examples~\ref{example:usigmaconvgeneralI} and \ref{example:fullasymptotics}; cf. 
Subsections~\ref{ssection:condresoscillatory} and \ref{ssection:condresoscillatoryII}. Due to this analysis, $q-2$ and $\|\ha^{-1}\|$ converge 
to zero exponentially. Moreover, $\g_{\bS}=0$ for the Bianchi types of interest here. Finally, the assumptions of 
Example~\ref{example:nonoscBclassAdev} combined with an analysis similar to that presented in the stiff fluid setting, cf. 
(\ref{eq:hvphqtotwoest}), imply that $\hvph_{0}$ converges to zero exponentially. Thus Proposition~\ref{prop:asymptoticsexponconvofqtotwo} 
applies and the statements made in Example~\ref{example:nonoscBclassAdev} follow.

\textbf{Generic Bianchi type I, II and VII${}_{0}$ developments.} Next, we prove the statements made in Example~\ref{example:genericBianchiclassA}.
In the case of all the relevant Bianchi types, $\g_{\bS}=0$. In what follows, we therefore do not comment on the corresponding condition in
Proposition~\ref{prop:asymptoticsexponconvofqtotwo}. We consider the different Bianchi types separately and we start with generic Bianchi type I 
perfect fluid developments with $2/3<\g<2$.

\textit{Bianchi type I.} According to \cite[Proposition~8.1, p.~428]{BianchiIXattr}, $(\Sigma_{+},\Sigma_{-},\Omega)$ converges 
to $(s_{+},s_{-},0)$, with $s_{+}^{2}+s_{-}^{2}=1$ (note that the fixed point $F$ has been removed by assumption). If $(s_{+},s_{-})$ equals
one of the points appearing in (\ref{eq:speciallimitpoints}), then the monotone volume singularity is not silent; cf. 
Subsection~\ref{ssection:genobBianchiA}. Since this is incompatible with the assumptions, we can assume the limit $(s_{+},s_{-})$ to be 
different from the points appearing in (\ref{eq:speciallimitpoints}). By the arguments presented in connection 
with (\ref{eq:speciallimitpoints}), it follows that $\|\ha^{-1}\|$ converges to zero exponentially. Turning to $\Omega$, it converges to zero 
exponentially due to \cite[(9), p.~414]{BianchiIXattr} 
and the fact that $q\geq 2$ in the limit. Combining this observation with the constraint, \cite[(11), p.~415]{BianchiIXattr}, implies that 
$q-2$ converges to zero exponentially. Finally, that $\hvph_{0}$ converges to zero exponentially follows from (\ref{eq:hvphqtotwoest}). 
Thus Proposition~\ref{prop:asymptoticsexponconvofqtotwo} applies and the statements made in Example~\ref{example:genericBianchiclassA} 
follow in the case of Bianchi type I.

\textit{Bianchi type II.} In the case of Bianchi type II, we can, without loss of generality, assume that $N_{1}>0$, $N_{2}=0$
and $N_{3}=0$. In this case, $\Sigma_{-}$ is either always zero or never zero (since the conditions $\Sigma_{-}=0$ and $N_{2}=N_{3}$ 
define an invariant set). Let us first consider the case that $\Sigma_{-}=0$. According to 
\cite[Proposition~9.1, p.~428]{BianchiIXattr}, there are then three possibilities: the solution converges to $F$; the solution
equals $P_{1}^{+}(II)$; or $(\Omega,\Sigma_{+},N_{1})$ converges to $(0,-1,0)$. The first two possibilities are incompatible with the 
assumption of genericity. The third possibility is incompatible with the requirement of silence; cf. 
Subsection~\ref{ssection:genobBianchiA}. Assume that $\Sigma_{-}\neq 0$. Then, according to 
\cite[Proposition~9.1, p.~428]{BianchiIXattr}, the solution converges to a point in $\mK_{2}\cup\mK_{3}$. Here, the 
$\mK_{i}$ are given by \cite[Definition~6.1, p.~421]{BianchiIXattr}. In particular, $(\Omega,\Sigma_{+},\Sigma_{-},N_{1})$
converges to $(0,s_{+},s_{-},0)$, where $s_{+}^{2}+s_{-}^{2}=1$ and $(s_{+},s_{-})$ is different from the points 
appearing in (\ref{eq:speciallimitpoints}). By the arguments presented in connection with (\ref{eq:speciallimitpoints}), it
then follows that $\|\ha^{-1}\|$ converges to zero exponentially.  Since $q$ converges to $2$, it is clear that $\Omega$
converges to zero exponentially; cf. \cite[(9), p.~414]{BianchiIXattr}. Since the solution converges to a point in 
$\mK_{2}\cup\mK_{3}$, $s_{+}<1/2$, so that $N_{1}$ converges to zero exponentially. Combining these observations with the 
constraint, \cite[(11), p.~415]{BianchiIXattr}, yields the conclusion that $q-2$ converges to zero exponentially. 
Finally, that $\hvph_{0}$ converges to zero exponentially follows from (\ref{eq:hvphqtotwoest}). 
Thus Proposition~\ref{prop:asymptoticsexponconvofqtotwo} applies and the statements made in Example~\ref{example:genericBianchiclassA} 
follow in the case of Bianchi type II. 

\textit{Bianchi type VI${}_{0}$.} Since we lack an appropriate reference on the asymptotic behaviour of Bianchi type VI${}_{0}$ non-vacuum
orthogonal perfect fluid developments with $\g<2$, we are not in a position to analyse the asymptotic behaviour of solutions to the Klein-Gordon 
equation on such backgrounds. 

\textit{Bianchi type VII${}_{0}$, the locally rotationally symmetric case.} In the case of Bianchi type VII${}_{0}$, we can, without 
loss of generality, assume that $N_{1}=0$, $N_{2}>0$ and $N_{3}>0$. It is of interest to first consider the locally rotationally
symmetric subcase; i.e., to assume that $N_{2}=N_{3}$ and $\Sigma_{-}=0$. Note that in this case, the constraint, \cite[(11), p.~415]{BianchiIXattr}, 
reads
\begin{equation}\label{eq:BianchiVIIzeroconstraintTaubcase}
\Omega+\Sigma_{+}^{2}=1.
\end{equation}
Due to \cite[Proposition~10.1, p.~430]{BianchiIXattr}, there are three possibilities. Either the solution converges to $\Sigma_{+}=1$ on
the Kasner circle; the solution converges to $F$; or $\Sigma_{+}$ converges to $-1$. Convergence to $F$ has been excluded by the condition
of genericity. If $\Sigma_{+}$ converges to $-1$, then the monotone volume singularity is not silent; cf. the comments made at the end of 
Subsection~\ref{ssection:genobBianchiA}. Again, this case is thus excluded by the assumptions. Let us therefore assume that $\Sigma_{+}$ 
converges to $1$. Then $q$ converges to $2$, so that $\Omega$ converges to zero exponentially due to \cite[(9), p.~414]{BianchiIXattr}. 
Combining this observation with (\ref{eq:BianchiVIIzeroconstraintTaubcase}) yields the conclusion that $\Sigma_{+}$ converges to $1$ 
exponentially. In particular, $q$ converges to $2$ exponentially. Since $q$ converges to $2$ and $(\Sigma_{+},\Sigma_{-})$ converges to 
$(1,0)$, the observations made in connection with (\ref{eq:speciallimitpoints}) imply that $\|\ha^{-1}\|$ converges to zero 
exponentially. Finally, that $\hvph_{0}$ converges to zero exponentially follows from (\ref{eq:hvphqtotwoest}). 
Thus Proposition~\ref{prop:asymptoticsexponconvofqtotwo} applies and the statements made in Example~\ref{example:genericBianchiclassA} 
follow in the case of locally rotationally symmetric Bianchi type VII${}_{0}$ developments. 

\textit{Bianchi type VII${}_{0}$, the general case.} Due to \cite[Proposition~10.2, p.~431]{BianchiIXattr}, generic and non-locally rotationally
symmetric Bianchi type VII${}_{0}$ solutions (with $N_{1}=0$, $N_{2}>0$ and $N_{3}>0$) converge to a point in $\mK_{1}$. Recalling that $\mK_{1}$ 
is the subset of the Kasner circle with $\Sigma_{+}>1/2$, it is clear that $N_{2}$, $N_{3}$ and $\Omega$ converge to zero exponentially, so that 
we can argue as in the locally rotationally symmetric setting; cf. \cite[(9), p.~414]{BianchiIXattr}. This completes the proof of the 
statements made in Example~\ref{example:genericBianchiclassA}. 

\textbf{Bianchi class B.}
In the case of Bianchi class B, we are mainly interested in solutions converging to the \textit{Kasner parabola}; cf. 
\cite[Definition~1.15, p.~8]{RadermacherNonStiff}. The Kasner parabola is defined to be the set of points with 
\[
(\Omega,\Delta,\tA,N_{+})=(0,0,0,0),
\]
so that $\Sigma_{+}^{2}+\tSi=1$; cf. (\ref{eq:HamConBianchiBRescaled}) and (\ref{eq:tNdef}) below. Moreover, its elements are fixed points. 
In addition to the Kasner parabola, there are some solutions that converge to a plane wave equilibrium point; cf. 
\cite[Definition~1.17, p.~8]{RadermacherNonStiff}. There are also some solutions that converge to the fixed point, say $F_{B}$, characterised by 
\[
(\Sigma_{+},\tSi,\Delta,\tA,N_{+})=(0,0,0,0,0).
\]
Next, the points Taub 1 and Taub 2 (denoted T1 and T2) are of special importance. Here T1 and T2 are defined by the conditions
\[
(\Sigma_{+},\tSi,\Delta,\tA,N_{+})=(-1,0,0,0,0),\ \ \ 
(\Sigma_{+},\tSi,\Delta,\tA,N_{+})=(1/2,3/4,0,0,0)
\]
respectively. 

Due to the fact that the results of \cite{RadermacherNonStiff} are restricted to either vacuum or $0\leq \g<2/3$, we also
restrict our attention to these two cases. Combining \cite[Proposition~4.2, pp.~19--20]{RadermacherNonStiff},
\cite[Proposition~4.4, p.~21]{RadermacherNonStiff}, \cite[Proposition~5.1, p.~23]{RadermacherNonStiff} and 
\cite[Proposition~6.1, p.~26]{RadermacherNonStiff}, there are the following possibilities as far as the asymptotics are concerned:
\begin{itemize}
\item The solution coincides with the fixed point T1.
\item The solution coincides with the fixed point $F_{B}$.
\item The solution converges to a plane wave equilibrium point. 
\item The solution converges to a point on the Kasner parabola different from T1. 
\end{itemize}
If the solution coincides with the fixed point T1, it is clear that $2+2\Sigma_{+}=0$ for the entire solution. In particular, the 
function $\hf^{1}_{1}$ introduced in Subsection~\ref{ssection:genobBianchiB} is constant; cf. (\ref{eq:hfootaudersv}) and 
\cite[(6), p.~7]{RadermacherNonStiff}. Thus the corresponding monotone volume singularity is not silent. By arguments presented in connection with 
(\ref{eq:hfootaudersv}), a solution that converges to a plane wave equilibrium point is such that the corresponding monotone volume 
singularity is not silent. Next, let us assume that the solution is the fixed point $F_{B}$. This can only happen in the non-vaccum
setting, and then $q<0$, $\Sigma_{+}=0$ and $\tSi=0$; cf. \cite[(6), p.~7]{RadermacherNonStiff}. In particular, $\hf^{1}_{1}$ and the matrix 
with components $\hf^{A}_{B}$ are
unbounded as $\tau\rightarrow-\infty$. Thus, the corresponding monotone volume singularity is not silent. What remains is to consider
a solution converging to a point on the Kasner parabola different from T1. 

\textit{Convergence to a point strictly between T1 and T2.} Assuming the solution to converge to T2 or to a point on the Kasner parabola to 
the left of T2, \cite[Proposition~6.2, p.~26]{RadermacherNonStiff} applies and yields the conclusion that $(\Sigma_{+},\tSi)$ converges 
exponentially to its limiting value, say $(s,\ts)$. Moreover, $\Omega$, $\tA$, $N_{+}$, $\Delta$, $\tN$ and $q-2$ converge to zero exponentially. 
Assume now that $-1<s<1/2$. Then, by observations made in Subsection~\ref{ssection:genobBianchiB} (cf., in particular, the conclusions following 
(\ref{eq:ineqcharBVImone})), it follows that $\|\ha^{-1}\|$ converges to zero exponentially. 
Next, note that $\g_{\bS}=0$ for all Bianchi class B developments. Finally, that $\hvph_{0}$ converges to zero exponentially follows from 
(\ref{eq:hvphqtotwoest}). Thus Proposition~\ref{prop:asymptoticsexponconvofqtotwo} applies and the statements made in 
Example~\ref{example:nonexcBianchiclassB} follow in the case of convergence to a point on the Kasner parabola strictly between T1 and T2.

\textit{Convergence to T2.} Assume the solution to converge to T2. Then $\Omega$, $\tA$, $N_{+}$, $\Delta$, $\tN$ and $q-2$ converge to 
zero exponentially, as before. Thus, due to the observations made at the end of Subsection~\ref{ssection:genobBianchiB} below, the solution 
is a locally rotationally symmetric Bianchi type VI${}_{-1}$ solution, a case which is excluded by the assumptions. 

\textit{Convergence to a point to the right of T2.} Assume that the Wainwright-Hsu variables of the solution converge to a point to the 
right of T2. Then $(\Sigma_{+},\tSi)$ converges to, say, $(s,\ts)$, where $s>1/2$. Moreover, as in the stiff fluid case, $\Delta$ and 
$N_{+}$ satisfy the equation (\ref{eq:DeltaNpsys}), where $B$ satisfies (\ref{eq:limit}). Since, $s>1/2$, the eigenvalues of the right 
hand side of (\ref{eq:limit}) both have positive real part. Thus, due to, e.g., the results of \cite{IOP}, the functions $\Delta$ and 
$N_{+}$ converge to zero exponentially. Next, since $q$ converges to $2$ and $\Sigma_{+}$ converges to $s$, it follows from 
\cite[(5), p.~6]{RadermacherNonStiff} that $\tA$ converges to zero exponentially. In addition, \cite[(11), p.~7]{RadermacherNonStiff};
the fact that $q$ converges to $2$; and the fact that $\g<2$ imply that $\Omega$ converges to zero exponentially. Finally, since 
$\tA$ and $N_{+}$ converge to zero exponentially, it follows that $\tN$ converges to zero exponentially; cf. 
\cite[(7), p.~7]{RadermacherNonStiff}. Combining these observations with \cite[(15), p.~16]{RadermacherNonStiff} yields the conclusion that 
$q-2$ converges to zero exponentially. Next, by observations made in Subsection~\ref{ssection:genobBianchiB} (cf., in particular, the 
conclusions following (\ref{eq:ineqcharBVImone})), it follows that $\|\ha^{-1}\|$ converges to zero exponentially. Moreover, $\g_{\bS}=0$. 
Finally, that $\hvph_{0}$ converges to zero exponentially follows from (\ref{eq:hvphqtotwoest}). Thus 
Proposition~\ref{prop:asymptoticsexponconvofqtotwo} applies and the statements made in Example~\ref{example:nonexcBianchiclassB} follow in 
the case of convergence to a point on the Kasner parabola strictly to the right of T2.

\section{Proofs III}\label{section:proofsIII}

The purpose of the present section is to prove the statements made in Example~\ref{example:nongenBclassAdev}. We need to consider two cases; 
either that the Wainwright-Hsu variables of the solution converge to $P_{i}^{+}(II)$ for some $i=1,2,3$; or that they converge to the fixed point
$F$. 

\textit{Convergence to $P_{i}^{+}(II)$.}
Consider the fixed points $P_{i}^{+}(II)$, $i=1,2,3$. Since the different points are related by a symmetry (cf. \cite[p.~415]{BianchiIXattr}), 
it is sufficient to focus on $P_{1}^{+}(II)$; cf. Definition~\ref{def:fixedpoints}. Consider a solution converging to $P_{1}^{+}(II)$. Then $q$ 
converges to $q_{\infty}=(3\g-2)/2$, $\Sigma_{+}$ converges to $s_{+}=(3\g-2)/8$, $\Sigma_{-}$ converges to zero and $\Omega$ converges to 
$\Omega_{0}:=1-(3\g-2)/16$. Since $N_{2}$ and $N_{3}$ converge to zero and $N_{1}$ converges to a strictly positive number, it is clear from 
(\ref{eq:scalarcurvaturerescaled}) that $\bS$ eventually becomes negative. This means that there is a $T$ such that $\g_{\bS}(\tau)=0$ for 
$\tau\leq T$. Next, note that $q\rightarrow q_{\infty}<2$ (since $\g<2$). Moreover, since $\Omega'=2(q-q_{\infty})\Omega$, 
cf. \cite[(9), p.~414]{BianchiIXattr}, and $\Omega$ converges to a strictly positive number, it is clear that 
\[
\int_{\tau}^{0}[q(\tau')-q_{\infty}]d\tau
\]
converges as $\tau\rightarrow-\infty$. In particular, 
\begin{equation}\label{eq:intqest}
\int_{\tau}^{0}q(\tau')d\tau=\int_{\tau}^{0}[q(\tau')-q_{\infty}]d\tau-q_{\infty}\tau=-q_{\infty}\tau+O(1). 
\end{equation}
Next, note that \cite[(9), p.~414]{BianchiIXattr}, combined with the fact that $N_{1}$ converges to a strictly positive number implies that 
\[
\int_{\tau}^{0}[q(\tau')-4\Sigma_{+}(\tau')]d\tau'
\]
converges as $\tau\rightarrow-\infty$. Thus 
\[
\int_{\tau}^{0}\Sigma_{+}(\tau')d\tau'=\int_{\tau}^{0}\left[\Sigma_{+}(\tau')-\frac{1}{4}q(\tau')\right]d\tau'
+\frac{1}{4}\int_{\tau}^{0}[q(\tau')-q_{\infty}]d\tau'-\frac{1}{4}q_{\infty}\tau.
\]
Thus
\begin{equation}\label{eq:intsigmaplus}
\int_{\tau}^{0}\Sigma_{+}(\tau')d\tau'=-\frac{1}{4}q_{\infty}\tau+O(1).
\end{equation}
Next, note that $N_{2}$ and $N_{3}$ converge to zero exponentially; this follows from \cite[(9), p.~414]{BianchiIXattr} and the assumed 
convergence to $P_{1}(II)$. Thus $S_{-}$ introduced on \cite[p.~415]{BianchiIXattr} converges to 
zero exponentially. Combining this information with \cite[(9), p.~414]{BianchiIXattr}; the fact that $q$ converges to $q_{\infty}<2$; and 
the fact that $\Sigma_{-}$ converges to zero implies that $\Sigma_{-}$ converges to zero exponentially. Due to (\ref{eq:intqest}); 
(\ref{eq:intsigmaplus}); the exponential convergence of $\Sigma_{-}$ to zero; and (\ref{eq:hauooevo})--(\ref{eq:hauththevo}), it is 
clear that 
\[
|\ha^{11}(\tau)|\leq Ce^{3q_{\infty}\tau},\ \ \ |\ha^{22}(\tau)|\leq Ce^{3q_{\infty}\tau/2},\ \ \ |\ha^{33}(\tau)|\leq Ce^{3q_{\infty}\tau/2}
\]
for all $\tau\leq 0$. In particular, $\|\ha^{-1}(\tau)\|\leq Ce^{3q_{\infty}\tau/2}$ for all $\tau\leq 0$. Next, note that 
\[
\theta^{-2}(\tau)=\theta^{-2}(0)\exp\left(-\int_{\tau}^{0}2[1+q(\tau')]d\tau'\right)\leq Ce^{2(1+q_{\infty})\tau}
\]
for all $\tau\leq 0$. Since $\varphi_{0}$ is bounded, we conclude that $|\hvph_{0}(\tau)|\leq Ce^{2(1+q_{\infty})\tau}$ for all $\tau\leq 0$. 
Summarising, it is clear that Proposition~\ref{prop:asymptoticsexponconvofqtoqinfdifffromtwo} applies with $\eta_{0}=3q_{\infty}/2$.
The statements in Example~\ref{example:nongenBclassAdev} concerning solutions converging to one of the $P_{i}(II)$, $i=1,2,3$, follow. 

\textit{Convergence to $F$.}
Recall the critical point $F$ introduced in Definition~\ref{def:fixedpoints}. Consider a development such that the corresponding 
Wainwright-Hsu variables converge to $F$. Then $q$ converges to $q_{\infty}:=(3\g-2)/2$; cf. the formula at the 
bottom of \cite[p.~414]{BianchiIXattr}. Note that $q_{\infty}>0$ since $\g>2/3$. Since $\Sigma_{+}$ and $\Sigma_{-}$ converge to zero, 
\cite[(9), p.~414]{BianchiIXattr} implies that the $N_{i}$ converge to zero exponentially. In particular, the $S_{\pm}$ introduced on 
\cite[p.~415]{BianchiIXattr} converge to zero exponentially. The only way for the above to be consistent with 
\cite[(9), p.~414]{BianchiIXattr} is that the $\Sigma_{\pm}$ converge to zero exponentially; note that $q_{\infty}<2$ since $\g<2$. Combining 
this observation with the constraint, 
\cite[(11), p.~415]{BianchiIXattr}, yields the conclusion that $\Omega$ converges exponentially to $1$ and $q$ converges exponentially to 
$q_{\infty}$. Returning to \cite[(9), p.~414]{BianchiIXattr} with this information in mind yields the conclusion that there is a constant $C$ such that 
\begin{equation}\label{eq:NietcestFconv}
|N_{i}(\tau)|^{2}+|\Sigma_{+}(\tau)|+|\Sigma_{-}(\tau)|+|\Omega(\tau)-1|+|q(\tau)-q_{\infty}|\leq Ce^{2q_{\infty}\tau}
\end{equation}
for all $\tau\leq 0$. Next, since $\bS/\theta^{2}$ is given by (\ref{eq:scalarcurvaturerescaled}), it is clear that $\g_{\bS}\leq Ce^{2q_{\infty}\tau}$ 
for all $\tau\leq 0$. Moreover, (\ref{eq:hauooevo})--(\ref{eq:hauththevo}) imply that $\|\ha^{-1}\|\leq Ce^{2q_{\infty}\tau}$ 
for all $\tau\leq 0$. Finally, 
\[
|\hvph_{0}(\tau)|\leq 9\theta^{-2}(\tau)|\varphi_{0}(\tau)|\leq Ce^{2(q_{\infty}+1)\tau}
\]
for all $\tau\leq 0$. Due to the above, it is clear that Proposition~\ref{prop:asymptoticsexponconvofqtoqinfdifffromtwo} applies with 
$\eta_{0}=2q_{\infty}$. The statements in Example~\ref{example:nongenBclassAdev} concerning solutions converging to $F$ follow.

\section{Appendix}

\subsection{General observations, Bianchi class A developments}\label{ssection:genobBianchiA}
In this subsection, we collect some general observations concerning Bianchi class A orthogonal perfect fluid developments. 
The main source of the statements made here is \cite{BianchiIXattr}.

In the case of Bianchi class A orthogonal perfect fluid developments, the matrix $a$ appearing in (\ref{eq:Bianchimetricdef}) is 
diagonal; cf. \cite[Lemma~21.2, p.~488]{BianchiIXattr}. Moreover, the Wainwright-Hsu variables $\Sigma_{\pm}$, $N_{i}$, $i=1,2,3$, 
and $\Omega$ introduced in \cite{waihsu89} can be used to analyse the asymptotics of solutions. These variables are also introduced 
in \cite[p.~487]{BianchiIXattr}; see also \cite[p.~233]{minbok} for a presentation in the vacuum setting. Note that the $\theta$ 
appearing in \cite{BianchiIXattr} is the mean curvature of the spatial hypersurfaces
of homogeneity, so that it coincides with the $\theta$ appearing in the present paper. Since the time coordinate $\tau$ appearing 
in \cite{BianchiIXattr} satisfies \cite[(137), p.~487]{BianchiIXattr} and the $\tau$ appearing in the present paper satisfies the same
relation (cf. the proof of Lemma~\ref{lemma:hgintro}), the two $\tau$'s can be assumed to coincide. Finally, since $\theta'=-(1+q)\theta$
both in the present paper and in \cite{BianchiIXattr}, it is clear that the $q$ appearing in \cite{BianchiIXattr} coincides with the 
$q$ appearing in the present paper. 

Next, consider $\ha_{ij}$. Due to (\ref{eq:dtauhamjfinal}), it can be calculated that 
\begin{equation}\label{eq:dtauhaijbianchia}
\d_{\tau}\ha^{ij}=2\left(q\de^{i}_{l}-3\Sigma^{i}_{\phantom{i}l}\right)\ha^{lj}.
\end{equation}
In the cases of interest here, $\Sigma^{i}_{\phantom{i}l}=0$ if $i\neq l$ and $\ha^{ij}=0$ if $i\neq j$; cf. 
\cite[Lemma~21.2, p.~488]{BianchiIXattr}. However, when comparing the present paper with \cite{BianchiIXattr} some care is required
when referring to indices. The reason for this is that we use a fixed frame (which is not orthonormal) in the present paper, whereas
the indices appearing in \cite{BianchiIXattr} refer to an orthonormal frame. On the other hand, the orthonormal frame appearing in 
\cite{BianchiIXattr} is of the form $\be_{0}=\d_{t}$ and $\be_{i}=a_{i}e_{i}$ (no summation on $i$), where $\{e_{\a}\}$ is the frame 
appearing in the present paper; cf. \cite[Lemma~21.2, p.~488]{BianchiIXattr}. Moreover, the $a_{i}$ are strictly positive functions 
of time only. In particular, the $\Sigma_{22}$ appearing in \cite{BianchiIXattr} is different from the $\Sigma_{22}$ appearing in the 
present paper. On the other hand, $\Sigma^{2}_{\phantom{2}2}$ and $\Sigma^{3}_{\phantom{3}3}$ of the present paper coincide with $\Sigma_{22}$ 
and $\Sigma_{33}$ of \cite{BianchiIXattr}. Using the terminology of the present paper, the $\Sigma_{\pm}$ variables introduced in 
\cite[(138), p.~487]{BianchiIXattr} thus satisfy 
\[
\Sigma_{+}=\frac{3}{2}(\Sigma^{2}_{\phantom{2}2}+\Sigma^{3}_{\phantom{3}3}),\ \ \
\Sigma_{-}=\frac{\sqrt{3}}{2}(\Sigma^{2}_{\phantom{2}2}-\Sigma^{3}_{\phantom{3}3}).
\]
Assume now that $q$ converges to a limit, say $q_{\infty}$, and that $\Sigma_{\pm}(\tau)\rightarrow s_{\pm}$ as $\tau\rightarrow-\infty$. 
Then 
\begin{equation}\label{eq:criteriaforsilence}
(q-3\Sigma^{1}_{\phantom{1}1},q-3\Sigma^{2}_{\phantom{2}2},q-3\Sigma^{3}_{\phantom{3}3})\rightarrow
(q_{\infty}+2s_{+},q_{\infty}-s_{+}-\sqrt{3}s_{-},q_{\infty}-s_{+}+\sqrt{3}s_{-}).
\end{equation}
If all the components of the vector on the right hand side are strictly positive, it then follows from (\ref{eq:dtauhaijbianchia})
that $\|\ha^{-1}\|$ converges to zero exponentially. We are mainly interested in the case that $q_{\infty}=2$ and 
$s_{+}^{2}+s_{-}^{2}\leq 1$. In that setting, the only way that one of the components of the vector on the right hand side of 
(\ref{eq:criteriaforsilence}) can be $\leq 0$ is that one of the following conditions are satisfied:
\begin{equation}\label{eq:speciallimitpoints}
(s_{+},s_{-})=(-1,0),\ \ \
(s_{+},s_{-})=(1/2,-\sqrt{3}/2),\ \ \
(s_{+},s_{-})=(1/2,\sqrt{3}/2).
\end{equation}
Next, it is of interest to note that 
\begin{equation}\label{eq:scalarcurvaturerescaled}
\frac{\bS}{\theta^{2}}=-\frac{1}{2}(N_{1}^{2}+N_{2}^{2}+N_{3}^{2})+N_{1}N_{2}+N_{2}N_{3}+N_{3}N_{1};
\end{equation}
cf., e.g., \cite[Lemma~19.11, p.~209]{minbok}. This expression can only be strictly positive in the case of Bianchi type 
IX. Finally, note that if $q_{\infty}>-1$ and $\varphi_{0}$ is bounded, then $\hvph_{0}$ converges to zero exponentially; this 
is due to the fact that $\theta'=-(1+q)\theta$. 

\textit{The non-stiff and non-silent Bianchi class A setting.} Considering (\ref{eq:speciallimitpoints}) and the adjacent text, it 
is clear that the points in (\ref{eq:speciallimitpoints}) play a special role. Assume that $(\Sigma_{+},\Sigma_{-})$ converges to one 
of these points. By applying the symmetries of the equations 
(cf. \cite[p.~415]{BianchiIXattr} and \cite{waihsu89}), we can assume $(\Sigma_{+},\Sigma_{-})$ to converge to $(-1,0)$. 
Due to \cite[Proposition~3.1, p.~416]{BianchiIXattr}, it then follows that the corresponding solution is contained in the 
invariant set characterised by $\Sigma_{-}=0$ and $N_{2}=N_{3}$ (assuming $2/3<\g<2$). Since $q$ is bounded from below by $2\Sigma_{+}^{2}$, it is
clear that $q$, in the limit, is bounded from below by $2$. Combining this information with \cite[(9), p.~414]{BianchiIXattr}
implies that $\Omega$ converges to zero exponentially (in the non-stiff fluid setting). Thus $q$ converges to $2$. Combining this information with 
\cite[(9), p.~414]{BianchiIXattr} again yields the conclusion that $N_{1}$, $N_{1}N_{2}$ and $N_{1}N_{3}$ converge to zero 
exponentially. On the other hand, the constraint \cite[(11), p.~415]{BianchiIXattr} implies that 
\[
\Omega+\Sigma_{+}^{2}+\frac{3}{4}(N_{1}^{2}-4N_{1}N_{2})=1.
\]
Combining this equality with previous observations implies that $\Sigma_{+}+1$ converges to zero exponentially and that $q-2$ converges to 
zero exponentially. In particular, $q+2\Sigma_{+}$ is integrable. On the other hand, 
\[
\d_{\tau}\ha^{11}=2(q-3\Sigma^{1}_{\phantom{1}1})\ha^{11}=2(q+2\Sigma_{+})\ha^{11}. 
\] 
In particular, $\ha^{11}$ thus converges to a strictly positive number as $\tau\rightarrow-\infty$. Consequently, the singularity is not silent;
cf. (\ref{eq:intainvhainv}). 

\subsection{General observations,  Bianchi class B solutions}\label{ssection:genobBianchiB}
In the case of non-exceptional Bianchi class B solutions, there are variables introduced by Hewitt and 
Wainwright, cf. \cite{hewandwain}, that can be used to describe the asymptotics; cf. also \cite{RadermacherNonStiff}. The relevant 
variables are denoted $\Sigma_{+}$, $\tSi$, $\Delta$, $N_{+}$, $\tA$ and $\Omega$. For future reference, it is of interest to note
that the Hamiltonian constraint is equivalent to 
\begin{equation}\label{eq:HamConBianchiBRescaled}
\Omega+\Sigma_{+}^{2}+\tSi+\tA+\tN=1,
\end{equation}
where
\begin{equation}\label{eq:tNdef}
\tN:=\frac{1}{3}(N_{+}^{2}-\kappa \tA)
\end{equation}
and $\kappa$ is a parameter associated with the Lie group. In order see why this is the case, note that on 
\cite[p.~65]{RadermacherNonStiff}, it is pointed out that the Hamiltonian constraint is equivalent to 
\cite[(56), p.~60]{RadermacherNonStiff}. By a simple change of variables, the latter equality is equivalent to 
\cite[(65), p.~61]{RadermacherNonStiff}. Normalising according to \cite[(67)--(68), p.~62]{RadermacherNonStiff}
reproduces (\ref{eq:HamConBianchiBRescaled}). Turning to the time coordinates, note that the $\theta$ appearing in 
\cite{RadermacherNonStiff} is the mean curvature of $G_{t}$. Thus the $\theta$ appearing in \cite{RadermacherNonStiff}
coincides with the $\theta$ appearing in the present paper. Since the time coordinate $\tau$ appearing in 
\cite{RadermacherNonStiff} is related to proper time according to \cite[(69), p.~62]{RadermacherNonStiff}, it is 
clear that the $\tau$ appearing in \cite{RadermacherNonStiff} can be chosen to coincide with the time coordinate 
$\tau$ appearing in the present paper. Finally, due to \cite[(70), p.~62]{RadermacherNonStiff}, it is clear that the 
deceleration parameter $q$ appearing in \cite{RadermacherNonStiff} coincides with the $q$ appearing in the present 
paper; $\theta'=-(1+q)\theta$ both in \cite{RadermacherNonStiff} and in the present paper, so that, since $\tau$ and $\theta$
conicide, the two $q$'s coincide. To conclude: the $\tau$, $\theta$ and $q$ appearing in \cite{RadermacherNonStiff} are 
the same as the objects $\tau$, $\theta$ and $q$ appearing in the present paper. Note also that, in the case of Bianchi 
class B, the scalar curvature of the spatial hypersurfaces of homogeneity is negative, so that $\g_{\bS}=0$. 

\textit{The causal structure.} In the applications, we need to estimate $\|\ha^{-1}\|$. Note, to this end, that the frame 
used to describe the metric in \cite[Subsection~11.6, pp.~67--72]{RadermacherNonStiff} is orthonormal and denoted by 
$\{e_{\a}\}$, where $e_{0}=\d_{t}$. However, it is constructed using a basis $e_{0}$, $\tilde{e}_{i}$, $i=1,2,3$, where 
$\tilde{e}_{i}$, $i=1,2,3$, is a basis of $\mfg$. Due to \cite[pp.~69--70]{RadermacherNonStiff}, 
\[
e_{A}=f_{A}^{B}\tilde{e}_{B},\ \ \
e_{1}=f_{1}^{1}\tilde{e}_{1}
\]
for $A,B\in \{2,3\}$ (in what follows, capital Latin indices range from $2$ to $3$), where the $f_{A}^{B}$ and $f_{1}^{1}$ 
satisfy the following initial value problems:
\begin{align*}
\d_{t}f_{1}^{1} = & \left(\bsigma_{A}^{\phantom{A}A}-\frac{1}{3}\theta\right)f_{1}^{1},\ \ \ f_{1}^{1}(0)=1,\\
\d_{t}f_{A}^{C} = & \left(-\bsigma_{A}^{\phantom{A}B}-\frac{1}{3}\theta\de_{A}^{B}+\Omega_{1}\e_{A}^{\phantom{A}B}\right)f_{B}^{C},\ \ \ 
f_{A}^{C}(0)=\de_{A}^{C}.
\end{align*}
Moreover, $\bsigma_{A}^{\phantom{A}B}$ denotes the shear; cf. $\bsigma_{ij}$ introduced in (\ref{eq:thetaandsigmaijdef}).
Here $\Omega_{1}$ is a gauge quantity which can be chosen freely; in what follows, we choose it to equal $0$. Let $h_{1}^{1}$ and 
$h_{A}^{B}$ be such that
\[
h_{1}^{1}f_{1}^{1}=1,\ \ \ h_{A}^{B}f_{B}^{C}=\de_{A}^{C}. 
\]
Next, let $\{\txi^{i}\}$ be the dual basis of $\{\te_{i}\}$ and define
\[
\xi^{1}:=h_{1}^{1}\txi^{1},\ \ \
\xi^{A}:=h^{A}_{B}\txi^{B},
\]
$A=2,3$. Then $\{\xi^{i}\}$ is the dual basis of $\{e_{i}\}$. In particular, the spacetime metric can be written
\[
g=-dt\otimes dt+\textstyle{\sum}_{i}\xi^{i}\otimes \xi^{i}=-dt\otimes dt+(h_{1}^{1})^{2}\txi^{1}\otimes \txi^{1}
+\sum_{A}h^{A}_{B}h^{A}_{C}\txi^{B}\otimes \txi^{C}.
\]
In particular, writing $g$ in the form (\ref{eq:Bianchimetricdef}) (with $\xi^{i}$ replaced by $\txi^{i}$), the matrix with 
components $a_{ij}$ is block diagonal (with $a_{1A}=a_{A1}=0$), and 
\[
a^{11}=(f^{1}_{1})^{2},\ \ \
a^{AB}=\textstyle{\sum}_{C}f^{A}_{C}f^{B}_{C}.
\]
In particular, 
\[
\ha^{11}=9\theta^{-2}(f^{1}_{1})^{2},\ \ \
\ha^{AB}=9\theta^{-2}\textstyle{\sum}_{C}f^{A}_{C}f^{B}_{C}.
\]
It is therefore of interest to introduce
\[
\hf^{1}_{1}:=3\theta^{-1}f^{1}_{1},\ \ \
\hf^{A}_{B}:=3\theta^{-1}f^{A}_{B}
\]
and to calculate
\begin{equation}\label{eq:hfootauder}
\d_{\tau}\hf^{1}_{1}=(1+q)\hf^{1}_{1}+(3\Sigma_{A}^{\phantom{A}A}-1)\hf^{1}_{1}=(q+3\Sigma_{A}^{\phantom{A}A})\hf^{1}_{1},
\end{equation}
where
\[
\Sigma_{A}^{\phantom{A}B}:=\theta^{-1}\bsigma_{A}^{\phantom{A}B}.
\]
At this stage it is important to note that we here use the index conventions of \cite{RadermacherNonStiff} (as opposed to those of 
the present paper). In particular, the indices refer to an orthonormal frame, so that they are raised and lowered with the 
Kronecker delta. Similarly, 
\begin{equation}\label{eq:hfABtauder}
\d_{\tau}\hf^{A}_{B}=(q\de^{C}_{B}-3\Sigma_{B}^{\phantom{B}C})\hf^{A}_{C}.
\end{equation}
Before proceeding, it is of interest to note that $\Sigma_{A}^{\phantom{A}A}=2\Sigma_{+}/3$; this is a consequence of 
\cite[(52), p.~59]{RadermacherNonStiff} and \cite[(67), p.~62]{RadermacherNonStiff}. Thus
\begin{equation}\label{eq:hfootaudersv}
\d_{\tau}\hf^{1}_{1}=(q+2\Sigma_{+})\hf^{1}_{1}.
\end{equation}
Note that $\tA^{1/2}$ satisfies the same equation, cf. \cite[(5), p.~6]{RadermacherNonStiff}, and that $\tA>0$ in Bianchi class B. 
Thus $\hf^{1}_{1}$ is a strictly positive multiple of $\tA^{1/2}$. In particular, $\hf^{1}_{1}$ converges to zero exponentially if and only if 
$\tA$ converges to zero exponentially. 

\textit{The plane wave solutions.} In the case of Bianchi class B, the so-called \textit{plane wave equilibrium points} are 
equilbrium points characterised by \cite[Definition~1.17, p.~8]{RadermacherNonStiff}. In particular, a solution that converges 
to a plane wave equilibrium point is such that the limit of $\tA$ is strictly positive. Combining this fact with the observations
made following (\ref{eq:hfootaudersv}), it is clear that the corresponding monotone volume singularity is not silent.

\textit{Convergence of the normalised shear.} Before proceeding, it is of interest to verify that $\Sigma_{B}^{\phantom{B}C}$
converges. Note that, due to the constructions described in \cite[Subsection~11.6, pp.~67--72]{RadermacherNonStiff}, 
the equation \cite[(54), p.~59]{RadermacherNonStiff} is satisfied. Moreover, the $\bsigma_{AB}$ and $n_{AB}$ can be retrived
from solutions to these equations using \cite[(52), p.~59]{RadermacherNonStiff}. If $\tsi_{AB}$ is defined by 
\cite[(52), p.~59]{RadermacherNonStiff}, let $\tSi_{AB}:=\tsi_{AB}/\theta$. Then \cite[(54), p.~59]{RadermacherNonStiff}
implies that 
\[
\d_{\tau}\tSi_{AB}=\frac{3}{\theta^{2}}\d_{t}\tsi_{AB}+(1+q)\tSi_{AB}
=(q-2)\tSi_{AB}-2N_{+}\tN_{AB}\pm 2\tA^{1/2}{}^{*}\tN_{AB},
\]
where $\tN_{AB}=\tin_{AB}/\theta$, ${}^{*}\tN_{AB}={}^{*}\tin_{AB}/\theta$, $N_{+}=n_{+}/\theta$ and we use the notation introduced
in \cite[(52)--(53), p.~59]{RadermacherNonStiff}. Note that if $N_{+}$ and $\tA$ converge to zero exponentially, then 
$\tN=3\tN^{AB}\tN_{AB}/2$ converges to zero exponentially; cf. (\ref{eq:tNdef}), \cite[(58), p.~60]{RadermacherNonStiff} and 
\cite[(67), p.~62]{RadermacherNonStiff}. Since ${}^{*}\tN_{AB}=\tN_{A}^{\phantom{A}C}\e_{CB}$,
cf. \cite[(53), p.~59]{RadermacherNonStiff}, it follows that ${}^{*}\tN_{AB}$ converges to zero exponentially. Assuming 
additionally that $q-2$ converges to zero exponentially, it is clear that $\tSi_{AB}$ converges exponentially. On the other hand, 
due to \cite[(52), p.~59]{RadermacherNonStiff} and \cite[(67), p.~62]{RadermacherNonStiff}, $\Sigma_{AB}=\tSi_{AB}+\Sigma_{+}\de_{AB}/3$.
Assuming, in addition to the above, that $\Sigma_{+}$ converges exponentially, it follows that $\Sigma_{AB}$ converges exponentially.

\textit{Convergent asymptotics, $q\rightarrow 2$.} Before discussing convergence, note that the state space associated with the 
Hewitt-Wainwright variables is compact; cf. \cite[(7)--(9), p.~7]{RadermacherNonStiff}. Assume now that $N_{+}$, 
$\tA$ and $q-2$ converge to zero exponentially. Then $\tN$ converges to zero exponentially due to \cite[(7), p.~7]{RadermacherNonStiff}. 
Combining these observations with \cite[(5), p.~6]{RadermacherNonStiff} yields the conclusion that $(\Sigma_{+},\tSi)$ converges exponentially
to, say, $(s,\ts)$. Due to the exponential convergence of the variables and the above discussion of the convergence of $\Sigma_{AB}$, it is 
clear that $\Sigma_{AB}$ converges exponentially to a limit, say $S_{AB}$. Next, due to \cite[(11), p.~7]{RadermacherNonStiff}, it is also
clear that $\Omega$ converges exponentially to $0$ if $\g<2$ and that if $\g=2$ and $\Omega$ is initially positive, then $\Omega$ converges 
exponentially to a strictly positive number. In either case, $\Omega$ converges exponentially to a limit, say $\Omega_{0}$. Finally, due to
\cite[(8), p.~7]{RadermacherNonStiff}, $\Delta$ converges to zero exponentially. Considering 
(\ref{eq:hfootaudersv}), it is clear that if $s>-1$, then $\hf^{1}_{1}$ converges to zero exponentially. 
Turning to (\ref{eq:hfABtauder}), it is of interest to calculate the eigenvalues of $3S_{B}^{\phantom{B}C}/2$ in terms of $s$ and $\ts$. Say that 
the eigenvalues of the matrix with components $3S_{B}^{\phantom{B}C}/2$ are $\lambda_{\pm}$. Then
\begin{equation}\label{eq:spluslambdatworel}
s=\lim_{\tau\rightarrow-\infty}\Sigma_{+}(\tau)=\frac{3}{2}\lim_{\tau\rightarrow-\infty}\Sigma_{A}^{\phantom{A}A}(\tau)
=\lambda_{+}+\lambda_{-}.
\end{equation}
It can also be computed that 
\[
\Sigma^{AB}\Sigma_{AB}=\tSi^{AB}\tSi_{AB}+\frac{2}{9}\Sigma_{+}^{2}=\frac{2}{3}\tSi+\frac{2}{9}\Sigma_{+}^{2},
\]
where we use the notation introduced in \cite[Subsection~11.2, pp.~57--62]{RadermacherNonStiff}. In particular, 
\begin{equation}\label{eq:lambdatwosqrel}
\lambda_{+}^{2}+\lambda_{-}^{2}=\frac{9}{4}\lim_{\tau\rightarrow-\infty}\Sigma^{AB}\Sigma_{AB}=
\frac{3}{2}\ts+\frac{1}{2}s^{2}.
\end{equation}
Combining (\ref{eq:spluslambdatworel}) and (\ref{eq:lambdatwosqrel}) yields 
\begin{equation}\label{eq:lambdapmform}
\lambda_{\pm}=\frac{s}{2}\pm\frac{\sqrt{3}\ts^{1/2}}{2}.
\end{equation}
Moreover, taking the limit of the Hamiltonian constraint yields
\begin{equation}\label{eq:limitofhamiltoniancon}
\Omega_{0}+\ts+s^{2}=1.
\end{equation}
Note that $\lambda_{+}\geq \lambda_{-}$. Assuming $\lambda_{+}\geq 1$ yields the conclusion that 
\begin{equation}\label{eq:ineqcharBVImone}
\left(s-\frac{1}{2}\right)^{2}+\left(\ts^{1/2}-\frac{\sqrt{3}}{2}\right)^{2}=s^{2}+\ts+1-2\lambda_{+}\leq-\Omega_{0},
\end{equation}
where we appealed to (\ref{eq:lambdapmform}) and (\ref{eq:limitofhamiltoniancon}). To conclude, the only way for an eigenvalue of the matrix with 
components $3S_{B}^{\phantom{B}C}/2$ to be $\geq 1$ is if $s=1/2$, $\ts=3/4$ and $\Omega_{0}=0$. To summarise: if $N_{+}$, $\tA$ and $q-2$ converge to 
zero exponentially; $s>-1$; and $(s,\ts,\Omega_{0})\neq(1/2,3/4,0)$, then $\|\ha^{-1}\|$ converges to zero 
exponentially as $\tau\rightarrow-\infty$.

\textit{Bianchi VI${}_{-1}$.} Due to the above, what remains to be considered is the case that $\g\in [0,2)$, $s=1/2$ and $\ts=3/4$. However, due to 
\cite[Section~7, pp.~40-41]{RadermacherNonStiff}, convergence to the corresponding point on the Kasner parabola implies that the solution is 
a locally rotationally symmetric Bianchi type VI${}_{-1}$ solution.

\subsection{Blow up criteria}\label{ssection:blowupcriteria}
The purpose of the present subsection is to justify the statements made in Section~\ref{section:blowupintro}. In order to do so, 
we apply the results of \cite[Chapter~8]{finallinsys} to (\ref{eq:nonflatKasnerKleinGordon}). We therefore need to verify that the 
conditions of \cite[Proposition~8.1, pp.~91--92]{finallinsys} and \cite[Proposition~8.9, p.~93]{finallinsys} are satisfied. 

\textit{Applicability of the results of \cite{finallinsys}.} To begin with, we need to verify that \cite[Definition~7.8, p.~82]{finallinsys} 
is satisfied. This verification partially overlaps with \cite[Example~4.20, pp.~42--43]{finallinsys}. However, for the benefit of the reader, we 
provide a complete justification here. The metric associated with (\ref{eq:nonflatKasnerKleinGordon}) is given by 
\begin{equation}\label{eq:gconnonflatKasner}
g_{\rocon}=-d\tau\otimes d\tau+\textstyle{\sum}_{r=1}^{R}e^{-2\bbe_{r}\tau}\bge_{r}
\end{equation}
on $M_{\rocon}$, where $M_{\rocon}$ is defined in connection with (\ref{eq:nonflatKasnerKleinGordon}). Here the $\bbe_{r}$ are distinct and defined so that 
the set of $\bbe_{r}$'s equals the set of $\b_{i}$'s. We also order
the $\bbe_{r}$'s so that $\bbe_{1}<\cdots<\bbe_{R}<0$. Let $d_{r}$ equal the number of $\b_{i}$'s such that $\b_{i}=\bbe_{r}$. Then $\bge_{r}$ is 
the standard metric on $\tn{d_{r}}$. We denote by $\bge$ and $\bk$ the metric and second fundamental form induced on 
$\bS_{\tau}:=\tn{d}\times\{\tau\}$ by $g_{\rocon}$. Comparing (\ref{eq:nonflatKasnerKleinGordon}) with \cite[(1.2), p.~4]{finallinsys}, it is clear 
that $g^{00}=-1$; $d=0$; $a_{r}=e^{-\bbe_{r}\tau}$; $\a=0$; $\zeta(\tau)=m^{2}e^{-2\tau}$; and $f=0$. Moreover, 
\begin{equation}\label{eq:bkKasnerform}
\bk=-\textstyle{\sum}_{r=1}^{R}\bbe_{r}e^{-2\bbe_{r}\tau}\bge_{r}.
\end{equation}
Letting $U:=\d_{\tau}$, it is clear that $U$ is the future directed unit normal to the hypersurfaces $\bS_{\tau}$ (with respect
to the metric $g_{\rocon}$). Moreover,
\begin{equation}\label{eq:mlUbkKasnerform}
\ml_{U}\bk=\textstyle{\sum}_{r=1}^{R}2\bbe_{r}^{2}e^{-2\bbe_{r}\tau}\bge_{r}.
\end{equation}
For future reference, it is of interest to note that 
\[
\bk\geq -\bbe_{R}\bge.
\]
The first step in verifying that \cite[Definition~7.8, p.~82]{finallinsys} is satisfied is to verify that $(M_{\rocon},g_{\rocon})$ is a 
canonical separable
cosmological manifold in the sense of \cite[Definition~1.18, p.~11]{finallinsys}. However, this follows immediately from 
(\ref{eq:gconnonflatKasner}). 
Next, we need to verify that (\ref{eq:nonflatKasnerKleinGordon}) is $C^{2}$-balanced. Note, to this end, that $\d_{\tau}$ is future 
uniformly timelike in the sense of \cite[Definition~3.1, p.~32]{finallinsys}. Moreover, it is clear that $\a$ and $\zeta$ are $C^{1}$-future
bounded in the sense of \cite[Definition~3.8, p.~34]{finallinsys}. Since $d=0$, the conditions on $\mcX$ and the shift vector field are void. 
Finally, in order to verify that the second fundamental form is $C^{1}$-bounded, it is sufficient to recall that $\bk$ and $\ml_{U}\bk$ 
satisfy (\ref{eq:bkKasnerform}) and (\ref{eq:mlUbkKasnerform}); cf. \cite[Definition~3.4, p.~32]{finallinsys}.  Due to 
\cite[Definition~3.8, p.~34]{finallinsys}, it is thus clear that (\ref{eq:nonflatKasnerKleinGordon}) is $C^{2}$-balanced. 

The condition that the shift vector field be negligible (cf. \cite[Definition~7.3, p.~81]{finallinsys}) is void since $d=0$. Next, note that 
$\bk$ is diagonally convergent with $\b_{\roRi,r}=\bbe_{r}$, where $\kappa_{\rod}$ can be 
choosen to be as large as desired; cf. \cite[Definition~7.3, p.~81]{finallinsys}. The condition that $\bge$ be $C^{2}$-asymptotically 
diagonal (cf. \cite[Definition~7.3, p.~81]{finallinsys}) is void since $d=0$ in our case. 

Considering \cite[Definition~7.5, p.~82]{finallinsys}, it is clear that the main coefficients are convergent with $\a_{\infty}=0$; $\zeta_{\infty}=0$;
$C_{\romn}=|m|^{2}$; and $\kappa_{\romn}:=2$ (note that since $d=0$, the conditions on $X^{j}$ are void). Finally, considering 
\cite[Definition~7.6, p.~82]{finallinsys}, it is clear that the equation is asymptotically non-degenerate with $Q=R$. To conclude, 
(\ref{eq:nonflatKasnerKleinGordon}) is geometrically non-degenerate, diagonally dominated, balanced and convergent; i.e., definition
\cite[Definition~7.8, p.~82]{finallinsys} is satisfied. It is also clear that $f=0$ and $\bbe_{Q}<0$. In particular, it follows that 
\cite[Propositions~8.1 and 8.9, pp.~91--93]{finallinsys} are applicable to (\ref{eq:nonflatKasnerKleinGordon}). 

\textit{Continuity properties of the asymptotic maps.} As a preparation to stating the conclusions, note that 
\begin{equation}\label{eq:AinfinityKasner}
A_{\infty}:=\left(\begin{array}{cc} 0 & 1 \\ 0 & 0\end{array}\right);
\end{equation}
cf. the statements of \cite[Propositions~8.1 and 8.9, pp.~91--93]{finallinsys}. As a consequence, the $\kappa_{\rosil,+}$ appearing in the 
statements of the 
propositions vanishes; note that $\kappa_{\rosil,+}$ is defined to be the largest real part of an eigenvalue of $A_{\infty}$; cf. the statements of 
the propositions and \cite[Definition~4.3, p.~38]{finallinsys}. Moreover, in our case, $\b_{\rem}=-\bbe_{R}$. Due to the form of $A_{\infty}$, 
it is clear 
that if $0<\b\leq\b_{\rem}$, then the first generalised eigenspace in the $\b,A_{\infty}$-decomposition of $\cn{2}$, say $E_{a}$, equals $\cn{2}$, 
irrespective of the choice of $\b$; cf. \cite[Definition~4.7, pp.~39--40]{finallinsys}. Next, note that $\kappa_{q,\pm}$ is given by 
\cite[(7.24), p.~83]{finallinsys},
so that $\kappa_{q,\pm}=\pm\bbe_{q}/2$ in our case. In particular, $s_{\roh,\b,+}=0$ if $0<\b<-\bbe_{q}/2$ for all $q\in\{1,\dots,R\}$; cf. 
\cite[(8.3), p.~91]{finallinsys}. Fixing such a $\b$, as well as an $\e>0$, there is a constant $C_{\e,\b}$, depending only on $\e$, $\b$, 
the spectra of the $\bge_{r}$ and the coefficients of the equation; and a numerical constant $N$ such that for every smooth solution $u$ to 
(\ref{eq:nonflatKasnerKleinGordon}), there are smooth functions $v_{\infty}$ and $u_{\infty}$ such that 
\begin{equation}\label{eq:convergencetoasymptotics}
\begin{split}
 & \left\|\left(\begin{array}{c} u(\cdot,\tau) \\ u_{\tau}(\cdot,\tau)\end{array}\right)
-\left(\begin{array}{c} v_{\infty}\tau+u_{\infty} \\ v_{\infty}\end{array}\right)\right\|_{(s)}\\
 \leq &  C_{\e,\b}\ldr{\tau}^{N}e^{-\b\tau}\left[\|u_{\tau}(\cdot,0)\|_{(s+\e)}+\|u(\cdot,0)\|_{(s+1+\e)}\right]
\end{split}
\end{equation}
for all $\tau\geq 0$ and all $s\in\ro$. 

Next, fixing $0<\e<1$ and choosing $\b=-\e\bbe_{R}/2$, it is clear that $s_{\roh,\b}=-1/2+\e/2$; cf. \cite[(8.6), p.~92]{finallinsys}. Applying
\cite[Proposition~8.1, pp.~91--92]{finallinsys}  with this $\b$ and with $\e$ replaced by $\e/2$ then yields a constant $C_{\e}$ such that 
for every smooth solution $u$ to (\ref{eq:nonflatKasnerKleinGordon}), the corresponding $v_{\infty}$ and $u_{\infty}$ satisfy
\[
\|u_{\infty}\|_{(s)}+\|v_{\infty}\|_{(s)}\leq C_{\e}\left[\|u_{\tau}(\cdot,0)\|_{(s-1/2+\e)}+\|u(\cdot,0)\|_{(s+1/2+\e)}\right]
\]
for all $s\in\ro$. Letting $\Psi_{\infty}$ be the map introduced in Section~\ref{section:blowupintro}, it is thus clear that 
$\Psi_{\infty}$ extends to a bounded linear map as in (\ref{eq:Psiinfext}). Next, note that $s_{\roh,-}=1/2$; cf. 
\cite[(8.8), p.~93]{finallinsys} and the above. Appealing to \cite[Proposition~8.9, p.~93]{finallinsys}, in particular 
\cite[(8.7), p.~93]{finallinsys}
then yields the conclusion that $\Phi_{\infty}$ extends to a bounded linear map as in (\ref{eq:Phiinfext}).

\textit{Criteria guaranteeing $L^{2}$-blow up.} Using the continuity of $\Psi_{\infty,0,\e}$, it is clear that 
\[
\ma_{\e}:=\Psi_{\infty,0,\e}^{-1}[(L^{2}(\tn{n})-\{0\})\times L^{2}(\tn{n})]\cap C^{\infty}(\tn{n})\times C^{\infty}(\tn{n})
\]
is a subset of $C^{\infty}(\tn{n})\times C^{\infty}(\tn{n})$ which is open with respect to the $H^{(1/2+\e)}(\tn{n})\times H^{(-1/2+\e)}(\tn{n})$-topology. 
Due to the homeomorphism between initial data and asymptotic data (in the $C^{\infty}$-topology) and the linearity of the map $\Psi_{\infty}$, it is 
straightforward to verify that $\ma_{\e}$ is dense with respect to the $C^{\infty}$-topology. To conclude, the subset of smooth initial data such that 
$\|v_{\infty}\|_{2}>0$ is open with respect to the $H^{(1/2+\e)}(\tn{n})\times H^{(-1/2+\e)}(\tn{n})$-topology and dense with respect to the 
$C^{\infty}$-topology. Returning to (\ref{eq:convergencetoasymptotics}), it is also clear that if $u$ is a solution corresponding to initial data 
in $\ma_{\e}$, then $\|u(\cdot,\tau)\|_{L^{2}}\rightarrow\infty$ as $\tau\rightarrow\infty$. 

\textit{Criteria guaranteeing $C^{1}$-blow up.}
Next, note that choosing $s_{0}:=n/2+1+\e/2$ yields a continuous map
\[
\Psi_{\infty,s_{0},\e/2}:H^{((n+3)/2+\e)}(\tn{n})\times H^{((n+1)/2+\e)}(\tn{n})\rightarrow H^{(s_{0})}(\tn{n})\times H^{(s_{0})}(\tn{n}).
\]
On the other hand, due to Sobolev embedding, $H^{(s_{0})}(\tn{n})$ embeds continuously into $C^{1}(\tn{n})$. Denote the corresponding map $\Upsilon$. 
Next, define $\mc_{(s_{0})}$ and $\mc_{1}$ as follows. First, $\varphi\in\mc_{1}$ if and only if $\varphi\in C^{1}(\tn{n})$ and $d\varphi(\bx)\neq 0$ for 
every $\bx\in\tn{n}$ such that $\varphi(\bx)=0$. Second, let $\mc_{(s_{0})}:=\mc_{1}\cap H^{(s_{0})}(\tn{n})$. Note that $\Upsilon^{-1}(\mc_{1})=\mc_{(s_{0})}$. 
Moreover, $\mc_{1}$ is an open subset of $C^{1}(\tn{n})$; 
we leave the verification of this statement to the reader. Thus $\mc_{(s_{0})}$ is an open subset of $H^{(s_{0})}(\tn{n})$. Define
\[
\mb_{\e}:=\Psi_{\infty,s_{0},\e/2}^{-1}[\mc_{(s_{0})}\times H^{(s_{0})}(\tn{n})]\cap C^{\infty}(\tn{n})\times C^{\infty}(\tn{n}).
\]
Then $\mb_{\e}$ is open with respect to the $H^{((n+3)/2+\e)}(\tn{n})\times H^{((n+1)/2+\e)}(\tn{n})$-topology. In order to verify that $\mb_{\e}$ is dense, 
note that there is a smooth pair of functions $\psi=(\psi_{1},\psi_{0})$ on $\tn{n}$ such that $\Psi_{\infty,s_{0},\e/2}(\psi)=(1,0)$. Assume that 
$\phi\in C^{\infty}(\tn{n})\times C^{\infty}(\tn{n})$. Letting $\varphi=(\varphi_{1},\varphi_{2})$ be the image of $\phi$ under $\Psi_{\infty,s_{0},\e/2}$ 
and $\de\in\ro$, it is clear that 
\[
\Psi_{\infty,s_{0},\e/2}(\phi+\de\psi)=\varphi+\de\cdot (1,0).
\] 
Since the measure of the set of irregular values of a smooth function is zero, it is clear that the set of $\de$ such that $\varphi_{1}+\de\in\mc_{(s_{0})}$
has full measure; this is due to Sard's theorem; cf., e.g., \cite[Theorem~6.10, p.~129]{lee}. In particular, it is thus clear that $\phi$ is in the 
closure of $\mb_{\e}$. To conclude, $\mb_{\e}$ is open with respect to the $H^{((n+3)/2+\e)}(\tn{n})\times H^{((n+1)/2+\e)}(\tn{n})$-topology and dense with 
respect to the $C^{\infty}$-topology. The demonstration of the remaining statements in Section~\ref{section:blowupintro} is left to the reader.

\section*{Acknowledgments}

The author would like to acknowledge the support of the G\"{o}ran Gustafsson Foundation for Research in Natural Sciences and 
Medicine. This research was funded by the Swedish research council, dnr. 2017-03863.

\end{document}